\documentclass[12pt]{iopart}
\eqnobysec
\expandafter\let\csname equation*\endcsname\relax
\expandafter\let\csname endequation*\endcsname\relax
\usepackage{amsmath, amsthm, amssymb}
\usepackage{enumerate}
\usepackage{setspace}
\usepackage{natbib}

\newtheorem{theorem}{Theorem}[section]

\newtheorem{proposition}[theorem]{Proposition}

\newenvironment{definition}[1][Definition]{\begin{trivlist}
\item[\hskip \labelsep {\bfseries #1}]}{\end{trivlist}}

\newenvironment{remark}[1][Remark]{\begin{trivlist}
\item[\hskip \labelsep {\bfseries #1}]}{\end{trivlist}}

\newcommand{\derv}[1]{\frac{\partial}{\partial #1}}

\newcommand{\deriv}[2]{\frac{\partial #1}{\partial #2}}

\newcommand{\beqn}{\begin{equation}}
\newcommand{\eeqn}{\end{equation}}
\newcommand{\beqnar}{\begin{eqnarray}}
\newcommand{\eeqnar}{\end{eqnarray}}




\newcommand{\contr}{\,\lrcorner\,}

\begin{document}

\title[Advected Invariants in MHD and Fluids: Noether's Theorems and Casimirs] 
{Local and Nonlocal Advected Invariants and Helicities in Magnetohydrodynamics 
and Gas Dynamics II: Noether's Theorems and Casimirs}
\author{G.M. Webb${}^1$, B. Dasgupta${}^1$, J. F. McKenzie${}^{1,3}$, 
Q. Hu${}^{1,2}$  and
G. P. Zank${}^{1,2}$}
\address{${}^1$ CSPAR, The University of Alabama Huntsville in
 Huntsville AL 35805, USA}

\address{${}^2$ Dept. of Physics, The University of Alabama in Huntsville, Huntsville 
AL 35899, USA}

\address{${}^3$Department of Mathematics and Statistics,\\
Durban University of Technology,\\ Steve Biko Campus, Durban South Africa\\
and School of Mathematical Sciences,\\
University of Kwa-Zulu, Natal, Durban, South Africa}
 
\ead{gmw0002@uah.edu}


\begin{abstract}

Conservation laws in ideal gas dynamics and magnetohydrodynamics (MHD)
associated with fluid relabelling symmetries are derived using 
Noether's first and second theorems. 
Lie dragged invariants are discussed in terms of 
the MHD Casimirs. A nonlocal 
 conservation law for fluid helicity applicable for 
a non-barotropic fluid involving  Clebsch variables is derived 
using Noether's theorem,  
in conjunction with a fluid relabelling symmetry and a gauge 
transformation. A nonlocal cross helicity conservation law involving 
Clebsch potentials, and the MHD energy conservation law are derived 
by the same method. An Euler Poincar\'e variational approach is 
also used to derive conservation laws associated with fluid 
relabelling symmetries using Noether's second theorem. 

\end{abstract}

\pacs{95.30.Qd,47.35.Tv,52.30.Cv,45.20Jj,96.60.j,96.60.Vg}
\submitto{{\it J. Phys. A., Math. and Theor.},28 June, 2013, Revised\ \today} 
\noindent{\it Keywords\/}: magnetohydrodynamics, advection, invariants, 
Hamiltonian

\maketitle


\section{Introduction}
In  a recent paper, Webb et al. (2013b) (herein referred to as paper I) 
used Lie dragging techniques (e.g Tur and Janovsky (1993)) 
and Hamiltonian methods using Clebsch variables to investigate 
advected invariants and helicities in ideal fluid mechanics 
and MHD. The main aim of the present paper is to derive some of the 
conservation laws of paper I, by using Noether's theorems 
and gauge transformations, and to relate the invariants obtained by 
the Lie dragging approach to fluid relabelling symmetries and 
the Casimirs of ideal MHD and gas dynamics associated with 
non-canonical Poisson brackets. A conference paper by Webb et al. (2013a)
also studies Lie dragging techniques and advected invariants in 
MHD and fluid dynamics. 

In paper I, we derived the helicity conservation law in fluid dynamics 
and the cross helicity 
conservation law in MHD. In the simplest case of a barotropic equation of 
state for the gas in which the gas pressure $p=p(\rho)$ depends only on 
the gas density one obtains  local conservation laws for 
helicity in fluids and cross helicity in MHD (i.e. 
the conserved densities and fluxes depend only on the 
density $\rho$, the magnetic induction ${\bf B}$, the fluid velocity 
${\bf u}$ and the entropy $S$). For the case of 
cross helicity a local conservation law also holds for a non-barotropic 
equation of state for the gas with $p=p(\rho,S)$ provided the magnetic field 
induction ${\bf B}$ lies in the constant entropy surfaces $S=const.$ 
(i.e. ${\bf B}{\bf\cdot}\nabla S=0$). For the general case of a non-barotropic
equation of state, generalized nonlocal conservation laws 
for helicity and cross helicity were
 obtained by using Clebsch potentials.  
 One of the main aims 
of the present paper is to show how the nonlocal helicity and cross 
helicity conservation laws arise from fluid relabelling symmetries, 
gauge transformations and Noether's theorem. 

The basic MHD model of paper I is described in Section 2. 

Section 3 gives 
a short synopsis of Clebsch variables and Lagrangian and Hamiltonian 
formulations of ideal fluid mechanics and MHD.  Section 3 also gives 
an overview of the MHD Casimirs, i.e. functionals $C$ that have zero 
Poisson bracket with $\{C,K\}=0$ for functionals $K$ dependent on 
the physical variables. There is an overlap in the Casimir 
functionals and the class of functionals that are Lie dragged 
by the flow. 

In Section 4, 
 conservation laws for both barotropic ($p=p(\rho)$) and 
 non-barotropic equations of state
$p=p(\rho,S)$ obtained in paper I are described. 

Section 5 discusses
 Lagrangian MHD and fluid dynamics as developed by Newcomb (1962). 

The Lagrangian approach is used in Section 6, 
to write down the invariance condition for 
the action under fluid relabelling symmetries and gauge
transformations (e.g. Padhye and Morrison 1996a,b).  
We derive the Eulerian version of the invariance condition 
including the effects of gauge transformations, and use 
Noether's theorem to derive the nonlocal helicity and cross 
helicity conservation laws and also the Eulerian energy 
conservation equation, using fluid relabelling symmetries.  

Section 7 uses the Euler-Poincar\'e approach to study the MHD 
equations (e.g. Holm et al. (1998), Cotter and Holm (2012)). 
It shows the important role of the Lagrangian map, which corresponds 
to the Lie group of transformations between the  Lagrangian fluid labels 
and the Eulerian position of the fluid element. The Euler-Poincar\'e 
equation for the MHD system, using Eulerian variations is equivalent to 
the Eulerian MHD momentum equation. The Euler-Poincar\'e approach is used to 
develop Noether's second theorem and the generalized Bianchi identity 
for representative fluid relabelling symmetries. The connection 
of this approach to the more classical approach to Noether's theorem 
of Section 6 is described.  Section 8 concludes with a summary and discussion.  

\section{The Model}
The basic MHD equations used in the model are the same as in paper I. 
The physical quantities $(\rho, {\bf u}^T, p, S, {\bf B}^T)^T$ 
denote the density $\rho$, fluid velcocity ${\bf u}$, gas pressure $p$, entropy $S$, and magnetic field induction ${\bf B}$ respectively. The equations 
 consist of the mass continuity equation, the MHD momentum equation 
written in conservation form using the Maxwell and fluid stress energy 
tensors and the momentum flux $\rho {\bf u}$, the entropy advection 
equation, Faraday's induction equation in the MHD limit, 
 the first law of thermodynamics and Gauss's 
law $\nabla{\bf\cdot}{\bf B}=0$. Faraday's equation, from paper I, can 
be written in the form:
\beqn
\left(\derv{t}+{\cal L}_{\bf u}\right) {\bf B}{\bf\cdot} d{\bf S}
\equiv \left(\deriv{\bf B}{t}-\nabla\times({\bf u}\times {\bf B})
+{\bf u}\nabla{\bf\cdot} {\bf B}\right){\bf\cdot}d{\bf S}=0. 
\label{eq:2.1}
\eeqn
Thus, Faraday's equation corresponds to a conservation law in which the 
magnetic flux ${\bf B}{\bf\cdot}d{\bf S}$ is Lie dragged with the flow, 
where ${\cal L}_{\bf u}={\bf u}{\bf\cdot}\nabla$ is the Lie derivative 
(tangent vector) vector field ${\bf u}$ representing the fluid velocity. 
The first law of thermodynamics for an ideal gas:
\beqn
dQ=TdS=dU+pdV, \label{eq:2.3}
\eeqn
is used where $U$ is the internal energy of the gas per unit mass and 
$V=1/\rho$ is the specific volume of the gas. The internal energy of the 
gas per unit mass is $\varepsilon(\rho,S)=\rho U$. In terms of 
$\varepsilon (\rho,S)$ the first law of thermodynamics gives the equations:
\begin{align}
&\rho T=\varepsilon_S,\quad h=\varepsilon_\rho, \quad 
p=\rho\varepsilon_\rho-\varepsilon, \label{eq:2.4}\\
&-\frac{1}{\rho}\nabla p=T\nabla S-\nabla h, \label{eq:2.5}
\end{align}
where $h$ is the gas enthalpy. 

We also require that Gauss's law $\nabla{\bf\cdot}{\bf B}=0$ is satisfied. 
However, in the Hamiltonian formulation of MHD, 
setting $\nabla{\bf\cdot}{\bf B}=0$ can give rise to problems in ensuring 
that the Jacobi identity is satisfied for all functionals of the 
physical variables (e.g. Morrison and Greene (1980,1982); 
Holm and Kupershmidt (1983a,b); Chandre et al. (2012)). 

\section{Hamiltonian Approach}

In this section we discuss the Hamiltonian approach to MHD and gas dynamics. 
In Section 3.1 we give a brief description of a constrained variational 
principle for MHD using Lagrange multipliers to enforce the constraints 
of mass conservation; the entropy advection equation; Faraday's 
equation and the so-called Lin constraint describing in part, the vorticity
of the flow (i.e. Kelvin's theorem). This leads to Hamilton's canonical 
equations in terms of Clebsch potentials (Zakharov and Kuznetsov (1997)).
In Section 3.2 we transform the canonical 
Poisson bracket obtained from the Clebsch variable approach to a 
non-canonical Poisson bracket written in terms of Eulerian 
physical variables (see e.g. Morrison and Greene (1980,1982), Morrison (1982),  
and Holm and Kupershmidt (1983a,b)). 
In Section 3.3 we obtain the MHD Casimir equations
using the non-canonical variables 
$\boldsymbol{\psi}=({\bf M}, {\bf A}, \rho,\sigma)$ 
where ${\bf M}=\rho {\bf u}$ is the MHD momentum flux, $\sigma=\rho S$  and
${\bf A}$ is the magnetic vector potential in which the gauge is chosen 
so that the 1-form $\alpha={\bf A}{\bf\cdot} d{\bf x}$ is an invariant 
advected with the flow. 

\subsection{Clebsch variables and Hamilton's Equations}

Consider the MHD action (modified by constraints):
\beqn
J=\int\ d^3x\ dt  L,  \label{eq:Clebsch1}
\eeqn
where
\beqnar
L=&&\left\{\frac{1}{2}\rho u^2-\epsilon(\rho, S)-\frac{B^2}{2\mu_0}\right\}
+\phi\left(\deriv{\rho}{t}+\nabla{\bf\cdot}(\rho {\bf u})\right)\nonumber\\
&&+\beta\left(\deriv{S}{t}+{\bf u}{\bf\cdot}\nabla S\right)
+\lambda\left(\deriv{\mu}{t}+{\bf u\cdot}\nabla\mu\right) \nonumber\\
&&+\boldsymbol{\Gamma}{\bf\cdot}\left(\deriv{\bf B}{t}-\nabla\times({\bf u}\times{\bf B})
+{\bf u}(\nabla{\bf\cdot B})\right). \label{eq:Clebsch2}
\eeqnar
The Lagrangian in curly brackets equals the kinetic minus
the potential energy (internal thermodynamic energy plus magnetic energy).
The Lagrange multipliers $\phi$, $\beta$, $\lambda$, 
and $\boldsymbol{\Gamma}$ ensure that the 
mass, entropy, Lin constraint, Faraday equations are satisfied. We do not 
enforce $\nabla{\bf\cdot}{\bf B}= 0$, since we are interested in the 
effect of $\nabla{\bf\cdot}{\bf B}\neq 0$ (which is useful for numerical 
MHD where $\nabla{\bf\cdot}{\bf B}\neq 0$) (see Section 2, and paper I 
for further discussion of this issue). 

 Stationary point conditions for the action are $\delta J=0$.
 $\delta J/\delta {\bf u}=0$ gives the Clebsch representation
for ${\bf u}$:
\beqn
{\bf u}=\nabla\phi-\frac{\beta}{\rho}\nabla S-\frac{\lambda}{\rho}\nabla\mu+
{\bf u}_M\label{eq:Clebsch3}
\eeqn
where
\beqn
 {\bf u}_M=-\frac{(\nabla\times\boldsymbol{\Gamma})\times{\bf B}}{\rho}
-\boldsymbol{\Gamma}\frac{\nabla{\bf\cdot B}}{\rho}, \label{eq:Clebsch4}
\eeqn
is magnetic contribution to ${\bf u}$.
 Setting $\delta J/\delta\phi$, $\delta J/\delta\beta$,
$\delta J/\delta \lambda$, $\delta J/\delta\boldsymbol{\Gamma}$ consecutively 
equal to zero gives the mass, entropy advection, Lin constraint, 
and Faraday (magnetic flux conservation) constraint equations. 
Setting $\delta J/\delta\rho$, $\delta J/\delta S$, $\delta J/\delta\mu$,
$\delta J/\delta {\bf B}$ equal to zero gives evolution equations
for the Clebsch potentials $\phi$, $\beta$ $\lambda$ and $\boldsymbol{\Gamma}$
(see Webb et al. 2013, paper I for details).

 The Hamiltonian functional for the system is given by:
\beqn
{\cal H}=\int H d^3x\quad\hbox{where}\quad H=\frac{1}{2}\rho u^2+\epsilon(\rho,S)+\frac{B^2}{2\mu_0}. \label{eq:H1}
\eeqn
One can show that the evolution equations for 
$(\rho,\phi,{\bf B},\boldsymbol{\Gamma}, S,\beta,\mu,\lambda)$ 
satisfy Hamilton's equations 
for functionals $F$:
\beqn
\dot{F}=\left\{F,H\right\}\quad \hbox{where}\quad \dot{F}=\deriv{F}{t}, 
\label{eq:H3}
\eeqn
and the canonical Poisson bracket  is defined by the equation:
\begin{align}
\{F,G\}=&\int d^3x\ \biggl(\frac{\delta F}{\delta\rho}
\frac{\delta G}{\delta \phi}-\frac{\delta F}{\delta \phi}
\frac{\delta G}{\delta\rho}
+\frac{\delta F}{\delta{\bf B}}
{\bf\cdot}\frac{\delta G}{\delta \boldsymbol{\Gamma}}
-\frac{\delta F}{\delta \boldsymbol{\Gamma}}
{\bf\cdot}\frac{\delta G}{\delta{\bf B}}\nonumber\\
&+\frac{\delta F}{\delta S}
\frac{\delta G}{\delta\beta}-\frac{\delta F}{\delta\beta}
\frac{\delta G}{\delta S}
+\frac{\delta F}{\delta\mu}
\frac{\delta G}{\delta\lambda}-\frac{\delta F}{\delta\lambda}
\frac{\delta G}{\delta\mu}\biggr).
 \label{eq:H4}
\end{align}
Note that  $\{\rho,\phi\}$, $\{S,\beta\}$,
$\{\mu,\lambda\}$, $\{{\bf B},\boldsymbol{\Gamma}\}$
are canonically conjugate variables (see paper I).
The canonical Poisson bracket (\ref{eq:H4})
satisfies the linearity, skew symmetry and Jacobi identity necessary
for a Hamiltonian system (i.e. the Poisson bracket defines a Lie algebra).
\subsection{Non-Canonical Poisson Brackets}
 Morrison and Greene (1980,1982) introduced non-canonical 
Poisson brackets for MHD. {\bf Morrison and Greene (1980) gave the 
non-canonical Poisson bracket for the case $\nabla{\bf\cdot}{\bf B}=0$. 
Morrison and Greene (1982) gave the form of the Poisson bracket in the 
more general case where 
$\nabla{\bf\cdot}{\bf B}\neq 0$. A detailed discussion of the MHD Poisson 
bracket and the Jacobi identity is given in Morrison (1982). Holm and Kupershmidt (1983) point out that their Poisson bracket has the form expected for a semi-direct product Lie algebra, for which the Jacobi identity is automatically 
satisfied.   
 Chandre et al. (2013) use Dirac's theory of constraints to derive 
properties of the Poisson bracket for the $\nabla{\bf\cdot} {\bf B}=0$ case.} 

Introduce the new variables:
\beqn
{\bf M}=\rho {\bf u}, \quad \sigma=\rho S, \label{eq:H5}
\eeqn 
noting that
\beqn
{\bf M}=\rho {\bf u}=\rho\nabla\phi-\beta\nabla S-\lambda\nabla\mu
+{\bf B}{\bf\cdot}(\nabla\boldsymbol{\Gamma})^T
-{\bf B}{\bf\cdot}\nabla\boldsymbol{\Gamma}
-\boldsymbol{\Gamma}(\nabla{\bf\cdot}{\bf B}). \label{eq:H7}
\eeqn
and transforming the canonical Poisson bracket (\ref{eq:H4}) from 
the old variables $(\rho,\phi,S,\beta,{\bf B},\boldsymbol{\Gamma})$ 
to the 
new variables $(\rho,\sigma,{\bf B},{\bf M})$  
we obtain the Morrison and Greene (1982) non-canonical  
Poisson bracket:
\begin{align}
\left\{F,G\right\}=&-\int\ d^3x \biggl\{ \rho
\left[\frac{\delta F}{\delta{\bf M}}
{\bf\cdot}\nabla\left(\frac{\delta G}{\delta\rho}\right)
-\frac{\delta G}{\delta{\bf M}}
{\bf\cdot}\nabla\left(\frac{\delta F}{\delta\rho}\right)\right]\nonumber\\
&+\sigma\left[\frac{\delta F}{\delta{\bf M}}
{\bf\cdot}\nabla\left(\frac{\delta G}{\delta\sigma}\right)
-\frac{\delta G}{\delta{\bf M}}
{\bf\cdot}\nabla\left(\frac{\delta F}{\delta\sigma}\right)\right]\nonumber\\
&+{\bf M}{\bf\cdot}\left[\left(\frac{\delta F}{\delta{\bf M}}
{\bf\cdot}\nabla\right)\frac{\delta G}{\delta{\bf M}}
-\left(\frac{\delta G}{\delta{\bf M}}
{\bf\cdot}\nabla\right)\frac{\delta F}{\delta{\bf M}}\right]\nonumber\\
&+{\bf B}{\bf\cdot}\left[\frac{\delta F}{\delta{\bf M}}
{\bf\cdot}\nabla\left(\frac{\delta G}{\delta{\bf B}}\right)
-\frac{\delta G}{\delta{\bf M}}
{\bf\cdot}\nabla\left(\frac{\delta F}{\delta{\bf B}}\right)\right]\nonumber\\
&+{\bf B}{\bf\cdot}\left[
\left(\nabla\frac{\delta F}{\delta {\bf M}}\right){\bf\cdot}
\frac{\delta G}{\delta{\bf B}}
-\left(\nabla\frac{\delta G}{\delta{\bf M}}\right){\bf\cdot}
\frac{\delta F}{\delta{\bf B}}\right]
\biggr\}. \label{eq:H9}
\end{align}
The bracket (\ref{eq:H9}) has the Lie-Poisson form and 
satisfies the Jacobi identity for all functionals 
$F$ and $G$ of the physical variables, and in general applies both for 
$\nabla{\bf\cdot}{\bf B}\neq 0$  and $\nabla{\bf\cdot}{\bf B}=0$.

\subsubsection{Advected ${\bf A}$ Formulation}
Consider the MHD variational principle using the magnetic vector 
potential ${\bf A}$ instead of using ${\bf B}$ (e.g. Holm and Kupershmidt 
(1983a,b))   
The condition that the magnetic 
flux ${\bf B}{\bf\cdot}d{\bf S}$ is Lie dragged with the flow (i.e. Faraday's 
equation) as a constraint equation, is satisfied if 
 the magnetic 
vector potential 1-form $\alpha={\bf A}{\bf\cdot}d{\bf x}$ is Lie dragged by 
the flow, where 
${\bf B}=\nabla\times{\bf A}$. 
The condition
that ${\bf A}{\bf\cdot}d{\bf x}$ is Lie dragged with the flow implies: 
\beqn
\deriv{\bf A}{t}-{\bf u}\times(\nabla\times{\bf A})+\nabla({\bf u\cdot A})=0 
\label{eq:can1}
\eeqn
(see paper I). 

 We use the variational 
principle $\delta {\cal A}=0$ where the action ${\cal A}$ is given by:
\begin{align}
{\cal A}=&\int_Vd^3x\int dt\biggl\{\left[\frac{1}{2}\rho |{\bf u}|^2 
-\varepsilon(\rho,S)-\frac{|\nabla\times{\bf A}|^2}{2\mu}\right]\nonumber\\
&+\phi\left(\deriv{\rho}{t}+\nabla{\bf\cdot}(\rho{\bf u})\right) 
+\beta \left(\deriv{S}{t}+{\bf u}{\bf\cdot}\nabla S\right)
+\lambda \left(\deriv{\mu}{t}+{\bf u}{\bf\cdot}\nabla \mu\right)\nonumber\\
&+\boldsymbol{\gamma}{\bf\cdot}
\left[\deriv{\bf A}{t}-{\bf u}\times(\nabla\times{\bf A})
+\nabla({\bf u\cdot A})\right]\biggr\}. \label{eq:can2}
\end{align}

By setting the variational derivative $\delta{\cal A}/\delta {\bf u}=0$ gives 
the Clebsch variable expansion:
\beqn
{\bf u}=\nabla\phi-\frac{\beta}{\rho}\nabla S
-\frac{\lambda}{\rho}\nabla\mu-\frac{\boldsymbol{\gamma}\times
(\nabla\times{\bf A})}{\rho}+\frac{\nabla{\bf\cdot}\boldsymbol{\gamma}}{\rho}
{\bf A}, \label{eq:can3}
\eeqn
for the fluid velocity ${\bf u}$. 

In terms of the non-canonical variables $({\bf M},{\bf A},\rho,\sigma)$ 
where $\sigma=\rho S$ we obtain the non-canonical Poisson bracket:
\begin{align}
\left\{F,G\right\}=&-\int\ d^3x\biggl\{ \left[F_{\bf M}{\bf\cdot}
\nabla (G_{\bf M})-G_{\bf M}{\bf\cdot}\nabla(F_{\bf M})\right]
{\bf\cdot}{\bf M}
\nonumber\\
&+\rho\left[F_{\bf M}{\bf\cdot}\nabla(G_{\rho})
-G_{\bf M}{\bf\cdot}\nabla(F_{\rho})\right]\nonumber\\
&+\sigma\left[F_{\bf M}{\bf\cdot}\nabla(G_{\sigma})
-G_{\bf M}{\bf\cdot}\nabla(F_{\sigma})\right]\nonumber\\
&+{\bf A}{\bf\cdot}\left[F_{\bf M}\nabla{\bf\cdot}(G_{\bf A})
-G_{\bf M}\nabla{\bf\cdot}(F_{\bf A})\right]\nonumber\\
&+\nabla\times{\bf A}{\bf\cdot}
\left[G_{\bf A}\times F_{\bf M}-F_{\bf A}\times G_{\bf M}\right]\biggr\}, 
\label{eq:can9}
\end{align}
where $F_{\bf M}\equiv \delta F/\delta {\bf M}$ and similarly for the 
other variational derivatives in (\ref{eq:can9}). The non-canonical 
bracket (\ref{eq:can9}) was obtained by Holm and Kupershmidt (1983a,b). 
It is a skew symmetric bracket and satisfies the Jacobi identity. 
Holm and Kupershmidt (1983a,b) show that bracket (\ref{eq:can9}) 
 corresponds to 
a semi-direct product Lie algebra.

\subsection{The MHD Casimirs}

The Casimirs are defined as functionals that have zero Poisson bracket with 
any functional $K$ defined on the phase space. 
The condition for a Casimir is:
\beqn
\left\{C,K\right\}=0, \label{eq:cas1}
\eeqn
for arbitrary functionals $K$. The Casimirs reveal 
the underlying symmetries of the phase space, implying dependence among the 
variables used to describe the system. The reduced Hamiltonian dynamics, 
taking into account the Casimir constants of motion (note $C_t=0$) 
 takes place
on the symplectic leaves foliating the phase space (e.g. 
Marsden and Ratiu (1994), Morrison (1998), Holm et al. (1998), 
Hameiri (2003,2004)).  

To obtain the Casimir determining equations, 
we introduce the the vector:
\beqn
\zeta=\left(K_{\bf M},K_{\bf A},K_\rho,K_\sigma)=(\boldsymbol{\xi}, 
\boldsymbol{\chi},\lambda,\nu\right), \label{eq:cas2}
\eeqn
where $K_{\bf M}\equiv \delta K/\delta {\bf M}$, and similarly for the other
variational derivatives in (\ref{eq:cas2}). The MHD Poisson bracket 
$\{C,K\}$ can be written in the form:
\begin{align}
\left\{C,K\right\}=&\int
\frac{\delta C}{\delta\psi^a}{\cal A}^{ab}\frac{\delta K}{\delta \psi^b}\ 
d^3x=\int \frac{\delta C}{\delta\psi^a}{\cal A}^{ab}\zeta_b\ d^3x\nonumber\\
=&-\int\zeta_a{\cal A}^{ab}\frac{\delta C}{\delta\psi^b}\ d^3x, \label{eq:cas3}
\end{align}
where   
  $\boldsymbol{\psi}=({\bf M},{\bf A},\rho,\sigma)$. 
The matrix differential operator in (\ref{eq:cas3}) is skew-symmetric, 
i. e. $\left\{C,K\right\}=-\left\{K,C\right\}$. From (\ref{eq:cas3}) 
it follows that for arbitrary $\zeta_b=\delta K/\delta \psi^b$, 
the Casimirs must 
satisfy the  equations:
\beqn
{\cal A}^{ab}\frac{\delta C}{\delta\psi^b}=0. \label{eq:cas4}
\eeqn

\subsubsection{Casimir equations for advected ${\bf A}$}


Using the notation (\ref{eq:cas2}), the gas dynamic terms in 
the bracket (\ref{eq:can9}) are given by:
\begin{align}
&[F_{\bf M}{\bf\cdot}\nabla(G_{\bf M})-G_{\bf M}{\bf\cdot}(\nabla F_{\bf M})]
{\bf\cdot}{\bf M}=[(C_{\bf M}{\bf\cdot}\nabla)\boldsymbol{\xi}
-\boldsymbol{\xi}{\bf\cdot}\nabla(C_{\bf M})]{\bf\cdot}{\bf M}, \nonumber\\
&\rho\left(F_{\bf M}{\bf\cdot}\nabla G_\rho
-G_{\bf M}{\bf\cdot}\nabla F_\rho\right)
=\rho\left(C_{\bf M}{\bf\cdot}\nabla\lambda
-\boldsymbol{\xi}{\bf\cdot}\nabla C_\rho\right), \nonumber\\
&\sigma\left(F_{\bf M}{\bf\cdot}\nabla G_\sigma
-G_{\bf M}{\bf\cdot}\nabla F_\sigma\right)
=\sigma\left(C_{\bf M}{\bf\cdot}\nabla\nu
-\boldsymbol{\xi}{\bf\cdot}\nabla C_\sigma\right),
 \label{eq:acas2}
\end{align}
where $G\equiv K$ and $F\equiv C$. 
Similarly, the magnetic vector potential terms in the 
Poisson bracket (\ref{eq:can9}) are:
\begin{align}
&({\bf A}{\bf\cdot}F_{\bf M})\nabla{\bf\cdot}G_{\bf A}
-({\bf A}{\bf\cdot}G_{\bf M})\nabla{\bf\cdot}F_{\bf A}=
({\bf A}{\bf\cdot}C_{\bf M})\nabla{\bf\cdot}\boldsymbol{\chi}
-({\bf A}{\bf\cdot}\boldsymbol{\xi}) \nabla{\bf\cdot}C_{\bf A}, 
\nonumber\\ 
&{\bf B}{\bf\cdot}\left[G_{\bf A}\times F_{\bf M}
-F_{\bf A}\times G_{\bf M}\right] 
=\boldsymbol{\chi}{\bf\cdot}(C_{\bf M}\times{\bf B})
-\boldsymbol{\xi}{\bf\cdot}({\bf B}\times C_{\bf A}). \label{eq:acas3}
\end{align}
In (\ref{eq:acas2})-(\ref{eq:acas3}) 
${\bf B}=\nabla\times{\bf A}$ and we make the 
identifications $F=C$ and $G=K$. 

Substituting (\ref{eq:acas2})-(\ref{eq:acas3}) in the 
Poisson bracket (\ref{eq:can9}) and integrating the derivative terms 
by parts, and dropping the 
surface terms gives:
\begin{align}
\{C,K\}&=\int\biggl\{-\boldsymbol{\xi}{\bf\cdot}
\left[(\nabla{\bf\cdot}C_{\bf M}){\bf M}
+(C_{\bf M}{\bf\cdot}\nabla){\bf M}
+{\bf M}{\bf\cdot}\left(\nabla C_{\bf M}\right)^T\right]
\nonumber\\
&\quad -\left[\lambda\nabla{\bf\cdot}(\rho C_{\bf M})
+\rho\boldsymbol{\xi}{\bf\cdot}\nabla C_\rho\right]
-\left[\nu\nabla{\bf\cdot}(\sigma C_{\bf M})
+\sigma\boldsymbol{\xi}{\bf\cdot}\nabla C_{\sigma}\right]\nonumber\\
&\quad -\left[\boldsymbol{\chi}{\bf\cdot}\nabla({\bf A}{\bf\cdot} C_{\bf M})
+(\boldsymbol{\xi}{\bf\cdot}{\bf A})\nabla{\bf\cdot} C_{\bf A}\right]
+\boldsymbol{\chi} C_{\bf M}\times{\bf B}
-\boldsymbol{\xi}{\bf\cdot} ({\bf B}\times C_{\bf A})\biggr\} 
\ d^3x\nonumber\\
&=0. 
\label{eq:acas4}
\end{align}

Setting the coefficients of $\lambda$ and $\nu$ equal to 
zero in (\ref{eq:acas4}) gives the equations:
\beqn
\nabla{\bf\cdot}(\rho C_{\bf M})=0,\quad \nabla{\bf\cdot}(\sigma C_{\bf M})=0. 
\label{eq:acas5}
\eeqn
which are analogous to the steady state mass continuity equation 
and entropy conservation  equation with advection velocity 
\beqn
\hat{V}^{\bf x}=C_{\bf M}. \label{eq:acas6}
\eeqn
Setting the coefficient of $\boldsymbol{\chi}$ equal to 
zero in (\ref{eq:acas4}) gives the equation:
\beqn
-C_{\bf M}\times(\nabla\times{\bf A})+\nabla({\bf A}{\bf\cdot}C_{\bf M})=0, 
\label{eq:acas7}
\eeqn
associated with Lie dragging the magnetic vector potential 1-form 
$\alpha={\bf A}{\bf\cdot}d{\bf x}$ with velocity 
$\hat{V}^{\bf x}=C_{\bf M}$. 
Noting that ${\bf M}=\rho {\bf u}$ and setting the coefficient 
of $\boldsymbol{\xi}$ equal to zero in (\ref{eq:acas4}) we obtain the equation:
\beqn
M^k\nabla C_{M^k} +\rho C_{\bf M}{\bf\cdot}\nabla ({\bf M}/\rho)
+\rho \nabla C_\rho 
+\sigma\nabla C_\sigma+{\bf A} (\nabla{\bf\cdot} C_{\bf A}) 
+{\bf B}\times C_{\bf A}=0. \label{eq:acas8}
\eeqn
By noting that for ${\bf B}=\nabla\times {\bf A}$, that 
\beqn
C_{\bf A}=\nabla\times C_{\bf B},\quad \nabla{\bf\cdot} C_{\bf A}=0, 
\label{eq:acas9}
\eeqn
(\ref{eq:acas8}) reduces to:
\beqn
M^k\nabla C_{M^k} +\rho C_{\bf M}{\bf\cdot}\nabla ({\bf M}/\rho)
+\rho \nabla C_\rho
+\sigma\nabla C_\sigma
+{\bf B}\times (\nabla\times C_{\bf B})=0, \label{eq:acas10}
\eeqn
Note that this latter result depends on Gauss's law 
$\nabla{\bf\cdot}{\bf B}=0$ for which ${\bf B}=\nabla\times{\bf A}$. 

Padhye and Morrison (1996a,b) give the Casimir solutions:
\begin{align}
C[\rho,S,{\bf A}]=&\int_V \rho G\biggl(S,\frac{\bf A\cdot B}{\rho},
\frac{{\bf B\cdot}\nabla S}{\rho}, \frac{{\bf B\cdot}\nabla}{\rho}
\left(\frac{{\bf B\cdot}\nabla S}{\rho}\right)\nonumber\\
 &+\frac{{\bf B\cdot}\nabla}{\rho}
\left(\frac{\bf A\cdot B}{\rho}\right), \ldots\biggr)\ d^3x, 
\label{eq:acas11}
\end{align}
It is clear that this family of Casimirs has $C_{\bf M}=0$ and hence the 
gauge dependent condition (\ref{eq:acas7}) does not affect the solution 
of the Casimir determining equations (\ref{eq:acas5}) and (\ref{eq:acas7}). 

The Casimir  (\ref{eq:acas11}) can be related to Lie dragged scalars, 1-forms, 
2-forms, 3-forms and vector fields (e.g. Webb et al. (2013), paper I). 
Let
\begin{align}
{\bf b}=&\frac{\bf B}{\rho}{\bf\cdot}\nabla, \quad 
\alpha=\tilde{\bf A}{\bf\cdot}d{\bf x},\quad 
\nu=\nabla S{\bf\cdot}d{\bf x}, \nonumber\\
\beta=&d\alpha={\bf B}{\bf\cdot} d{\bf S},\quad I=S,\quad \omega=\rho d^3x, 
\label{eq:acas12}
\end{align}
Here ${\bf b}$ is a Lie dragged vector field; $\alpha$ and $\nu$  are  
 1-forms that are Lie dragged with the fluid; 
$\beta={\bf B}{\bf\cdot}d{\bf S}$ is the Lie dragged 
magnetic flux 2-form; $\omega=\rho d^3x$ is a Lie dragged 3-form 
and $I\equiv S$ is an invariant scalar or 0-form  that is advected with 
the fluid (Moiseev et al. (1982), 
Tur and Yanovsky (1993), Webb et al. (2013)). 
Thus 
\begin{align}
&{\bf b}\contr\alpha
=\left(\frac{\bf B}{\rho}{\bf\cdot}\nabla\right)\contr 
(\tilde{\bf A}{\bf\cdot}d{\bf x}= 
 \frac{\tilde{\bf A}{\bf\cdot}{\bf B}}{\rho}, 
\nonumber\\
&{\bf b}\contr\left(\nabla S{\bf\cdot}d{\bf x}\right)=
\left(\frac{\bf B}{\rho}{\bf\cdot}\nabla\right)
\contr\left(\nabla S{\bf\cdot}d{\bf x}\right)
= \frac{{\bf B}{\bf\cdot}\nabla S}{\rho}, \label{acas13}
\end{align} 
and $S$ are invariant, Lie dragged scalars or 0-forms, where the 
symbol $\contr$ denotes the contraction operator 
in the algebra of exterior differential 
forms. Note that the Casimir (\ref{eq:acas11}) is made up of invariant 
Lie dragged forms, and hence the Casimir (\ref{eq:acas11}) is a Lie dragged 
invariant. 

The Casimir equations (\ref{eq:acas5})-(\ref{eq:acas10}) 
obtained by using the Holm and Kupershmidt (1983a,b)  bracket (\ref{eq:can9}) 
are essentially the same as for the Morrison and Greene bracket (see e.g. 
Padhye and Morrison (1996a,b)).  Our main aim here is to show 
that there is a connection between the advected, Lie dragged 
invariants of the MHD system (e.g. Moiseev et al. (1982), 
Tur and Yanovsky (1993), Webb et al. (2013), paper I), 
and the solutions of the Casimir equations. Padhye and Morrison (1996a,b) 
investigate in more detail how the fluid relabelling symmetries 
are related to the Casimirs.  

\section{Helicity Conservation Laws}
In this section we outline the helicity conservation laws obtained in 
paper I. 

\subsection{Fluid Helicities}

In ideal fluid mechanics the helicity transport equation has the form:
\beqn
\deriv{h_f}{t}+\nabla{\bf\cdot}\left[{\bf u} h_f 
+\left(h-\frac{1}{2} |{\bf u}|^2\right)\boldsymbol{\omega}\right]
=\boldsymbol{\omega} T\nabla S+{\bf u}{\bf\cdot}
(\nabla T\times\nabla S), \label{eq:fhel1}
\eeqn
where $\boldsymbol{\omega}=\nabla\times{\bf u}$ is the fluid helicity and 
$h_f={\bf u\cdot}\boldsymbol{\omega}$ is the fluid helicity density.  
For abarotropic gas with $p=p(\rho)$ (\ref{eq:fhel1}) 
implies the helicity conservation law:
\beqn
\deriv{h_f}{t}+\nabla{\bf\cdot}\left[{\bf u} h_f +
\left(h-\frac{1}{2} |{\bf u}|^2\right)
\boldsymbol{\omega}\right]=0, \label{eq:fhel2}
\eeqn

The generalization of the helicity conservation law (\ref{eq:fhel2}) 
for the case of a non-barotropic  equation of state for the gas (i.e
$p=p(\rho, S)$) is given below (cf Proposition 6.1 paper I).

\begin{proposition}\label{propnl1}
The generalized helicity conservation law in ideal fluid mechanics 
can be written in the form:
\beqn
\derv{t}\left[\boldsymbol{\Omega}{\bf\cdot}({\bf u}+r\nabla S)\right] 
+\nabla{\bf\cdot}\left\{{\bf u}\left[\boldsymbol{\Omega}{\bf\cdot}({\bf u}+r\nabla S)\right]+\boldsymbol{\Omega}\left(h-\frac{1}{2}|{\bf u}|^2\right)\right\}
=0. \label{eq:nl1}
\eeqn
The nonlocal conservation law (\ref{eq:nl1}) depends on the Clebsch variable 
formulation of ideal fluid mechanics in which the fluid velocity ${\bf u}$ 
is given by the equation:
\beqn
{\bf u}=\nabla\phi-r\nabla S-\tilde{\lambda} \nabla\mu, \label{eq:nl2}
\eeqn
where $\phi$, $r$, $S$, $\tilde{\lambda}$, and $\mu$ satisfy the equations:
\begin{align}
&\frac{d\phi}{dt}=\frac{1}{2} |{\bf u}|^2-h, \quad \frac{dr}{dt}=-T, 
\nonumber\\
&\frac{dS}{dt}=\frac{d\tilde{\lambda}}{dt}=\frac{d\mu}{dt}=0, 
\label{eq:nl3}
\end{align}
and $d/dt=\partial/\partial t+{\bf u}{\bf\cdot}\nabla$ is the Lagrangian time 
derivative following the flow. In (\ref{eq:nl1}) the generalized vorticity 
$\boldsymbol{\Omega}$ is defined by the equations:
\begin{align}
&{\bf w}={\bf u}-\nabla\phi+r\nabla S\equiv -\tilde{\lambda}\nabla\mu, 
\label{eq:nl4}\\
&\boldsymbol{\Omega}=\nabla\times{\bf w}=\boldsymbol{\omega}
+\nabla r\times\nabla S, \label{eq:nl5}
\end{align}
where $\boldsymbol{\omega}=\nabla\times{\bf u}$ is the fluid vorticity. 
The one-form $\alpha={\bf w}{\bf\cdot}d{\bf x}$ and the two-form 
$\beta=d\alpha=\boldsymbol{\Omega}{\bf\cdot}d S$ 
are advected invariants (see paper I).  
For the barotropic gas case the helicity conservation law (\ref{eq:nl1}) 
reduces  (\ref{eq:fhel2}).
\end{proposition}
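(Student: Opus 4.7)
The plan is to exploit the differential-forms structure of the problem: introduce the 1-form $\alpha = ({\bf u} + r\nabla S){\bf\cdot} d{\bf x}$, show that it satisfies a Lie-advection equation of the form $(\partial_t + {\cal L}_{\bf u})\alpha = d\psi$ for an explicit function $\psi$, and then apply a Cartan-style identity to deduce the desired local conservation law for the associated helicity 3-form $\alpha\wedge d\alpha$.

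First I would use the Clebsch expansion (\ref{eq:nl2}) together with (\ref{eq:nl4}) to split $\alpha = d\phi + {\bf w}{\bf\cdot}d{\bf x}$, where ${\bf w} = -\tilde{\lambda}\nabla\mu$. Because (\ref{eq:nl3}) forces $d\tilde{\lambda}/dt = d\mu/dt = 0$, the 1-form $-\tilde{\lambda}\,d\mu = {\bf w}{\bf\cdot}d{\bf x}$ is Lie dragged by ${\bf u}$, so $(\partial_t + {\cal L}_{\bf u})({\bf w}{\bf\cdot}d{\bf x}) = 0$. For the exact piece, $(\partial_t + {\cal L}_{\bf u})d\phi = d(d\phi/dt) = d(\tfrac{1}{2}|{\bf u}|^2 - h)$ by the first relation in (\ref{eq:nl3}). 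Adding the two contributions gives $(\partial_t + {\cal L}_{\bf u})\alpha = d\psi$ with $\psi = \tfrac{1}{2}|{\bf u}|^2 - h$.

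Next, since exterior differentiation commutes with $\partial_t + {\cal L}_{\bf u}$ and $d^2 = 0$, the Leibniz rule yields
\begin{equation}
(\partial_t + {\cal L}_{\bf u})(\alpha\wedge d\alpha) = d\psi\wedge d\alpha + \alpha\wedge d(d\psi) = d(\psi\,d\alpha).
\end{equation}
Because $d\alpha = \boldsymbol{\Omega}{\bf\cdot} d{\bf S}$ and $\alpha\wedge d\alpha = [\boldsymbol{\Omega}{\bf\cdot}({\bf u}+r\nabla S)]\,d^3x$, translating this 3-form identity into Eulerian components via ${\cal L}_{\bf u}(f\,d^3x) = \nabla{\bf\cdot}(f{\bf u})\,d^3x$ and $d(\psi\boldsymbol{\Omega}{\bf\cdot}d{\bf S}) = \nabla{\bf\cdot}(\psi\boldsymbol{\Omega})\,d^3x$ produces precisely (\ref{eq:nl1}).

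The main obstacle is the rigorous verification that ${\bf w}{\bf\cdot}d{\bf x}$ is Lie dragged. One can either stay inside the exterior algebra, applying $(\partial_t + {\cal L}_{\bf u})(\tilde{\lambda}\,d\mu) = (d\tilde{\lambda}/dt)\,d\mu + \tilde{\lambda}\,d(d\mu/dt) = 0$, or translate to the equivalent vector-calculus identity $\partial_t{\bf w} - {\bf u}\times(\nabla\times{\bf w}) + \nabla({\bf u}{\bf\cdot}{\bf w}) = 0$ and check it by direct expansion using material conservation of $\tilde{\lambda}$ and $\mu$. The barotropic reduction is then immediate: for a homentropic barotropic fluid one has $\nabla S = 0$, so both $r\nabla S$ and $\nabla r\times\nabla S$ vanish, giving $\boldsymbol{\Omega} = \boldsymbol{\omega}$ and reducing (\ref{eq:nl1}) to (\ref{eq:fhel2}).
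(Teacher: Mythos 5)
Your argument is correct, and it is a genuinely different route from the one this paper takes. Within this paper the detailed proof of Proposition \ref{propnl1} is deferred to paper I, and the paper's own derivation of (\ref{eq:nl1}) is the Noether-theorem one of Proposition \ref{prop8.5}: one exhibits the fluid relabelling symmetry $\hat{V}^{\bf x}=\boldsymbol{\Omega}/\rho$ together with the gauge potentials $\Lambda^0=r\boldsymbol{\Omega}{\bf\cdot}\nabla S$, $\Lambda^j=u^j r\boldsymbol{\Omega}{\bf\cdot}\nabla S$, verifies the Eulerian divergence-symmetry condition (\ref{eq:3.25n}), and reads off the density and flux from Proposition \ref{prop8.3}. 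Your proof instead works directly in the exterior calculus: with $\alpha=({\bf u}+r\nabla S){\bf\cdot}d{\bf x}=d\phi-\tilde{\lambda}\,d\mu$, the Clebsch evolution equations (\ref{eq:nl3}) give $(\partial_t+{\cal L}_{\bf u})\alpha=d\psi$ with $\psi=\tfrac12|{\bf u}|^2-h$, hence $(\partial_t+{\cal L}_{\bf u})(\alpha\wedge d\alpha)=d(\psi\,d\alpha)$, which is (\ref{eq:nl1}) in components. This is essentially the Lie-dragging method of paper I (and of Tur and Yanovsky), reconstructed independently; it is shorter and more self-contained, needing only the Clebsch relations and Cartan calculus, whereas the paper's Section 6 route buys the extra information that the nonlocal helicity law originates in a relabelling symmetry plus a gauge transformation of the action, which is the point of this paper. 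All the individual steps check out: $-\tilde{\lambda}\,d\mu$ is Lie dragged because $\tilde{\lambda}$ and $\mu$ are advected scalars and $d$ commutes with $\partial_t+{\cal L}_{\bf u}$; $d\alpha=\boldsymbol{\Omega}{\bf\cdot}d{\bf S}$; and $\alpha\wedge d\alpha=[\boldsymbol{\Omega}{\bf\cdot}({\bf u}+r\nabla S)]\,d^3x$.

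One small correction to your closing remark: a barotropic equation of state $p=p(\rho)$ does not by itself force $\nabla S=0$; you have silently added the homentropic hypothesis. The cleaner reduction is that for $\varepsilon=\varepsilon(\rho)$ one has $T=\varepsilon_S/\rho=0$, so $dr/dt=0$ and the integration constant $r_0({\bf x}_0)$ in (\ref{eq:nl11}) may be chosen to vanish; then $r\equiv 0$, $\boldsymbol{\Omega}=\boldsymbol{\omega}$, and (\ref{eq:nl1}) collapses to (\ref{eq:fhel2}) without any assumption on $\nabla S$.
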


\begin{remark}{\bf 1}
The conservation laws (\ref{eq:nl1}) 
is a nonlocal conservation law that involves the nonlocal 
potentials:
\begin{align}
&r({\bf x},t)=-\int_0^t T_0({\bf x}_0,t')\ dt'+r_0({\bf x}_0), 
\label{eq:nl11}\\
&\phi({\bf x},t)=\int_0^t \left(\frac{1}{2}|{\bf u}|^2
-h\right)({\bf x}_0,t')\ dt' 
+\phi_0({\bf x}_0), \label{eq:nl12}
\end{align}
where ${\bf x}={\bf f}({\bf x}_0,t)$ and ${\bf x}_0={\bf f}^{-1}({\bf x},t)$ 
are the Lagrangian map and the inverse Lagrangian map. The temperature 
$T({\bf x},t)=T_0({\bf x}_0,t)$ and $r_0({\bf x}_0)$ and $\phi_0({\bf x}_0)$
are 'integration constants'.
\end{remark}


\begin{proposition}{\bf Ertel's Theorem}
Ertel's theorem in ideal fluid mechanics 
states that the potential vorticity 
$q=\boldsymbol{\omega}{\bf\cdot}\nabla S/\rho$  is a scalar invariant advected
with the flow, i.e., 
\beqn
\frac{d}{dt}\left(\frac{\boldsymbol{\omega}{\bf\cdot}\nabla S}{\rho}\right)=0, 
\label{eq:ertel1}
\eeqn
where $\boldsymbol{\omega}=\nabla\times{\bf u}$ is the fluid vorticity.
In paper I it was shown that there is a higher order invariant, 
the  Hollman invariant $I_h$ involving $I_e$ (see paper I for details).
\end{proposition}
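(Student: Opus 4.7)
The plan is to derive Ertel's theorem directly from the Euler momentum equation, the mass continuity equation, and the entropy advection equation, rather than invoking the Clebsch representation from Proposition \ref{propnl1}. I would begin by rewriting the momentum equation in Lamb form, using the identity $(\mathbf{u}\cdot\nabla)\mathbf{u}=\nabla(\tfrac12|\mathbf{u}|^2)+\boldsymbol{\omega}\times\mathbf{u}$, combined with the thermodynamic identity (\ref{eq:2.5}) that converts $-\nabla p/\rho$ to $T\nabla S-\nabla h$. This gives
\beqn
\deriv{\mathbf{u}}{t}+\boldsymbol{\omega}\times\mathbf{u}
= T\nabla S-\nabla\!\left(h+\tfrac12|\mathbf{u}|^2\right).
\eeqn
Taking the curl and using $\nabla\times\nabla(\cdot)=0$ and $\nabla\cdot\boldsymbol{\omega}=0$ yields the non-barotropic vorticity equation
\beqn
\deriv{\boldsymbol{\omega}}{t}+(\mathbf{u}\cdot\nabla)\boldsymbol{\omega}
=(\boldsymbol{\omega}\cdot\nabla)\mathbf{u}-\boldsymbol{\omega}(\nabla\cdot\mathbf{u})+\nabla T\times\nabla S,
\eeqn
in which the baroclinic term $\nabla T\times\nabla S$ is the obstruction to vorticity being frozen into the flow.

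Next, I would eliminate the compressibility term by using mass continuity in the form $\nabla\cdot\mathbf{u}=-\rho^{-1}d\rho/dt$, producing the standard result
\beqn
\frac{d}{dt}\!\left(\frac{\boldsymbol{\omega}}{\rho}\right)
=\left(\frac{\boldsymbol{\omega}}{\rho}\cdot\nabla\right)\mathbf{u}
+\frac{\nabla T\times\nabla S}{\rho}.
\eeqn
In parallel, I would differentiate the entropy advection equation $dS/dt=0$ to obtain an evolution equation for $\nabla S$. Writing in index form, $\partial_t(\partial_jS)+u^k\partial_k\partial_jS=-(\partial_ju^k)(\partial_kS)$, i.e.\ $d(\partial_jS)/dt=-(\partial_ju^k)(\partial_kS)$. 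This is nothing more than the statement that the 1-form $dS$ is Lie dragged with $\mathbf{u}$, which is consistent with $S$ being an advected 0-form.

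Then I would simply compute $d/dt$ of the dot product. Letting $\Omega^j=\omega^j/\rho$ and $s_j=\partial_jS$,
\beqn
\frac{d}{dt}(\Omega^j s_j)
=\Omega^k(\partial_k u^j)\,s_j+\frac{(\nabla T\times\nabla S)^j s_j}{\rho}-\Omega^j(\partial_j u^k)\,s_k.
\eeqn
After a dummy-index relabelling, the first and third terms cancel exactly (this is the usual cancellation between vortex stretching and gradient tilting of $\nabla S$), and the middle term vanishes because $\nabla T\times\nabla S$ is orthogonal to $\nabla S$. This proves (\ref{eq:ertel1}).

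The main obstacle is really only bookkeeping: one must be careful that the baroclinic source, which does not vanish by itself in the vorticity equation, kills itself upon projection along $\nabla S$, and that the stretching term $(\boldsymbol{\omega}/\rho)\cdot\nabla\mathbf{u}$ conspires with the Lie-dragging law for $\nabla S$ so as to cancel. Once those two observations are in place the result is immediate, and the argument does not require the introduction of the Clebsch variables $\phi,r,\mu,\tilde\lambda$ used in Proposition \ref{propnl1}, though it is consistent with the advected-invariant interpretation given there via the 2-form $\beta=\boldsymbol{\Omega}\cdot d\mathbf{S}$.
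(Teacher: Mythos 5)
Your derivation is correct: the Lamb form of the momentum equation combined with the thermodynamic identity (\ref{eq:2.5}), the curl identity, mass continuity, and the transport law $d(\partial_j S)/dt=-(\partial_j u^k)\,\partial_k S$ give exactly the two cancellations you describe, and the baroclinic source $\nabla T\times\nabla S$ is indeed annihilated on projection along $\nabla S$. However, this is not the route the paper takes. In Section 4 the proposition is only stated (the direct proof being deferred to paper I), and the derivation actually carried out in this paper (Section 7, ``Potential vorticity and Ertel's theorem'') obtains the result from Noether's second theorem: one chooses the fluid-relabelling symmetry generator $\boldsymbol{\eta}=\nabla\times(\Phi\nabla S)/\rho$ with $\Phi({\bf x}_0)$ an arbitrary advected scalar, evaluates the variation of the action using the Lie-dragged forms $\boldsymbol{\omega}{\bf\cdot}d{\bf S}$ and $\Phi\,dS$, and arrives at the generalized Bianchi identity (\ref{eq:pv11}); Ertel's theorem then follows on shell, i.e.\ when the Euler--Lagrange equations ${\bf E}(\ell)=0$ hold. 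Your argument is more elementary and self-contained --- it needs only the Eulerian equations of motion and no variational or exterior-calculus machinery --- whereas the paper's argument buys an explanation of \emph{why} the invariant exists: it is the conservation law attached to an infinite-dimensional family of relabelling symmetries parametrized by the arbitrary advected function $\Phi$, which is precisely the theme of the paper. Both are valid; yours is the classical direct proof, the paper's is the symmetry-theoretic one, and your closing remark about the advected 2-form $\boldsymbol{\Omega}{\bf\cdot}d{\bf S}$ is the bridge between them.
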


\subsection{MHD Helicities}

We first discuss the magnetic helicity  conservation law, 
followed by a discussion of cross helicity. A more complete 
discussion is given in paper I. 

\subsubsection{Magnetic helicity}
For ideal MHD, the magnetic helicity density 
$h_m={\bf A}{\bf\cdot}{\bf B}$
 satisfies the conservation law:
\begin{equation}
\deriv{h_m}{t}+\nabla{\bf\cdot}
\left[{\bf u}h_m+{\bf B}(\phi_E-{\bf A}{\bf\cdot}{\bf u})\right]=0, 
\label{eq:h2}
\end{equation}
where
\begin{equation}
{\bf E}=-\nabla\phi_E-\deriv{\bf A}{t}=-{\bf u}\times{\bf B},\quad 
{\bf B}=\nabla\times{\bf A}. \label{eq:h3}
\end{equation}
If  
$\tilde{\bf A}={\bf A}+\nabla\Lambda$ where $\Lambda$ is 
the gauge potential for ${\bf A}$ such that 
\beqn
\Lambda=\int^t (\phi_E-{\bf A}{\bf\cdot}{\bf u})\ dt', 
\label{eq:maghel1}
\eeqn
where the integration in (\ref{eq:maghel1}) is with respect to the 
Lagrangian time variable $t'$, then
the magnetic helicity 
conservation law (\ref{eq:h2}) reduces to the advection equation:
\beqn
\deriv{\tilde{h}}{t}+\nabla{\bf\cdot}({\tilde h} {\bf u})=0,
\label{eq:2.32}
\eeqn
where ${\tilde h}={\tilde{\bf A}}{\bf\cdot}{\bf B}$.
is the magnetic helicity density in this special gauge. 

\subsubsection{Cross helicity}
The cross helicity transport equation from paper I, can be written
in the form:
\beqn
\derv{t}\left({\bf u}{\bf\cdot}{\bf B}\right)+\nabla{\bf\cdot}\left[({\bf u}{\bf\cdot}{\bf B}){\bf u}+\left(h-\frac{1}{2} |{\bf u}|^2\right) {\bf B}\right]
=T{\bf B}{\bf\cdot}\nabla S, \label{eq:gch3a}
\eeqn
where $h_C={\bf u}{\bf\cdot}{\bf B}$ is the cross helicity density. If 
${\bf B}{\bf\cdot}\nabla S=0$ the helicity transport equation reduces 
to the cross helicity conservation law. 
 
\begin{proposition}\label{propnl2}
The generalized cross helicity conservation law in MHD can be written in the 
form:
\beqn
\derv{t}\left[{\bf B}{\bf\cdot}\left({\bf u}+r\nabla S\right)\right]
+\nabla{\bf\cdot}\bigl\{ {\bf u}
\left[{\bf B}{\bf\cdot}
\left({\bf u}+r\nabla S\right)
\right]
+\left(h-\frac{1}{2}|{\bf u}|^2\right) {\bf B}\bigr\}=0, 
\label{eq:gch1}
\eeqn
where
\beqn
{\bf u}=\nabla\phi-r\nabla S-\tilde{\lambda}\nabla\mu 
-\frac{(\nabla\times\boldsymbol{\Gamma})\times{\bf B}}{\rho} 
-\boldsymbol{\Gamma}\frac{\nabla{\bf\cdot}{\bf B}}{\rho}, \label{eq:gch2}
\eeqn
is the Clebsch variable representation for the fluid velocity ${\bf u}$, 
and $r({\bf x},t)$ is the Lagrangian temperature integral (\ref{eq:nl11}) 
moving with the flow. 

In the special cases of either (\romannumeral1)\ ${\bf B}{\bf\cdot}\nabla S=0$ 
or (\romannumeral2)\ the case of a barotropic gas with $p=p(\rho)$, the conservation law (\ref{eq:gch1}) reduces to the usual cross helicity conservation law:
\beqn
\derv{t} ({\bf u}{\bf\cdot}{\bf B}) 
+\nabla{\bf\cdot}\left[{\bf u}({\bf u\cdot}{\bf B}) 
+\left(h-\frac{1}{2}|{\bf u}|^2\right) {\bf B}\right]=0, 
\label{eq:gch3}
\eeqn
In general the cross helicity conservation equation (\ref{eq:gch1}) is a 
nonlocal conservation law, in which the variable $r({\bf x},t)$ is a nonlocal 
potential given by (\ref{eq:nl11}).
\end{proposition}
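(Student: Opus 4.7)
The plan is to derive the nonlocal cross-helicity conservation law~(\ref{eq:gch1}) from the local transport equation~(\ref{eq:gch3a}) by adding to it an auxiliary conservation law whose source exactly cancels the term $T\,{\bf B}\cdot\nabla S$ on the right-hand side of~(\ref{eq:gch3a}). The auxiliary quantity will be $r\,({\bf B}\cdot\nabla S)$, with $r$ the Lagrangian temperature integral defined in~(\ref{eq:nl11}).

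The main lemma needed is an MHD analogue of Ertel's theorem:
\[
\partial_t({\bf B}\cdot\nabla S) + \nabla\cdot\bigl[({\bf B}\cdot\nabla S)\,{\bf u}\bigr] = 0,
\]
equivalently, $d\bigl(({\bf B}\cdot\nabla S)/\rho\bigr)/dt = 0$. I would establish it by combining three pieces of standard MHD machinery: Faraday's equation~(\ref{eq:2.1}) with $\nabla\cdot{\bf B}=0$, together with mass continuity, giving $d({\bf B}/\rho)/dt = ({\bf B}/\rho)\cdot\nabla{\bf u}$; the entropy advection $dS/dt=0$, which implies $d(\nabla S)/dt = -(\nabla{\bf u})^T\cdot\nabla S$; and the contraction of these two transport laws, whose cross-terms cancel after relabelling dummy indices. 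This is precisely the MHD statement that $({\bf B}\cdot\nabla S)/\rho$ is a Lie-dragged scalar, consistent with the Casimir structure discussed in Section~3.3.

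Granted the lemma, the remainder is direct. Using $dr/dt=-T$ from~(\ref{eq:nl3}) together with the general identity $\partial_t(\cdot)+\nabla\cdot({\bf u}\,\cdot)=d(\cdot)/dt+(\cdot)\nabla\cdot{\bf u}$, one obtains
\[
\partial_t\bigl(r\,{\bf B}\cdot\nabla S\bigr) + \nabla\cdot\bigl[r\,({\bf B}\cdot\nabla S)\,{\bf u}\bigr] = \frac{dr}{dt}\,{\bf B}\cdot\nabla S = -T\,{\bf B}\cdot\nabla S.
\]
Adding this to~(\ref{eq:gch3a}) cancels the source, and the flux terms combine via $({\bf u}\cdot{\bf B})\,{\bf u} + r\,({\bf B}\cdot\nabla S)\,{\bf u} = [{\bf B}\cdot({\bf u}+r\nabla S)]\,{\bf u}$, producing precisely~(\ref{eq:gch1}). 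For the special cases, if ${\bf B}\cdot\nabla S=0$ the auxiliary law is trivial and~(\ref{eq:gch1}) reduces to~(\ref{eq:gch3}); for a barotropic gas $p=p(\rho)$ the entropy has no dynamical role and $T\,{\bf B}\cdot\nabla S$ is absent from~(\ref{eq:gch3a}) to begin with, again yielding~(\ref{eq:gch3}).

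The hardest step will be the Ertel-type identity: it requires tracking how the gradient of an advected scalar and the vector field ${\bf B}/\rho$ evolve under the flow, and verifying that their contraction is conserved. An alternative route would be to substitute the Clebsch representation~(\ref{eq:gch2}) directly into ${\bf B}\cdot({\bf u}+r\nabla S)$, reducing it to ${\bf B}\cdot\nabla\phi-\tilde{\lambda}\,{\bf B}\cdot\nabla\mu$ (using $\nabla\cdot{\bf B}=0$ and the triple-product identity ${\bf B}\cdot[(\nabla\times\boldsymbol{\Gamma})\times{\bf B}]=0$) and then invoking the evolution equations for $\phi$, $\tilde{\lambda}$, $\mu$ and ${\bf B}$; this route fits the Hamiltonian framework of Section~3 but is computationally longer.
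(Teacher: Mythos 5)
Your argument is correct, but it follows a genuinely different route from the one this paper takes. Here Proposition \ref{propnl2} is not proved directly in Section 4 (the detailed proof is deferred to paper I); within this paper it is instead re-derived in Proposition \ref{prop8.6} via Noether's first theorem, using the fluid relabelling symmetry $\hat{V}^{\bf x}={\bf B}/\rho$ together with the gauge potentials $\Lambda^0=r\,{\bf B}{\bf\cdot}\nabla S$, $\Lambda^i=u^i r\,{\bf B}{\bf\cdot}\nabla S$: the divergence-symmetry condition (\ref{eq:3.25n}) is verified and the conserved density and flux are then read off from (\ref{eq:3.17n})--(\ref{eq:3.18n}). You instead start from the transport equation (\ref{eq:gch3a}) and cancel its source by adding the compensating balance law for $r\,{\bf B}{\bf\cdot}\nabla S$, which rests on the MHD Ertel-type identity $d[({\bf B}{\bf\cdot}\nabla S)/\rho]/dt=0$ and on $dr/dt=-T$. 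The computational kernel is the same in both arguments --- the Lie dragging of ${\bf b}={\bf B}/\rho$ (requiring $\nabla{\bf\cdot}{\bf B}=0$), the advection of $({\bf B}{\bf\cdot}\nabla S)/\rho$, and the step $-\nabla_\alpha\Lambda^\alpha=-({\bf B}{\bf\cdot}\nabla S)\,dr/dt=T\,{\bf B}{\bf\cdot}\nabla S$ in (\ref{eq:chn5}), which is literally your auxiliary conservation law --- but the frameworks differ. Your version is more elementary and self-contained, needing only the known transport equation and standard advection identities, and it makes transparent why the result is nonlocal (the potential $r$ enters only through the compensator). The paper's version buys the identification of the conservation law with a specific fluid relabelling symmetry plus gauge transformation of the action, which is the point of Section 6. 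One small caveat on case (\romannumeral2): for a barotropic gas $T=0$ makes $r$ a Lagrangian invariant rather than identically zero, so strictly the term $r\,{\bf B}{\bf\cdot}\nabla S$ is separately conserved and one takes $r_0=0$ (or notes that the two laws decouple) to recover (\ref{eq:gch3}); your phrasing slightly elides this, but the substance is right.
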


 Detailed proofs of the above helicity and cross helicity conservation 
laws were provided in paper I. In paper I, the concept of topological charge 
was discusssed in relation to advected invariants of the ideal fluid and MHD 
equations (see also Kamchatnov (1982) and Semenov et al. (2002)). 
 The physical application of magnetic helicity 
in solar,  space  and fusion plasmas is discussed by 
Berger and Field (1984), Finn and Antonsen (1985,1988), 
Berger and Ruzmaikin (2000), Bieber et al. (1987), 
Low (2006), Longcope and Malunushenko (2008), Yahalom and Lynden Bell (2008), 
Yahalom (2013) 
and Webb et al. (2010a,b). Tur and Janovsky (1993) and 
Webb et al. (2013a,b)  discuss the Godbillon Vey invariant
 which applies 
for MHD flows with zero magnetic helicity, i.e. 
$\tilde{\bf A}{\bf\cdot}\nabla\times\tilde{\bf A}=0$, 
where $\alpha=\tilde{\bf A}{\bf\cdot} d{\bf x}$  is 
Lie dragged with the flow 
and ${\bf B}=\nabla\times\tilde{\bf A}$. Kats (2003) obtains the 
MHD version of the Ertel invariant.

\section{The Lagrangian map} 

\subsection{Lagrangian MHD} 
The Lagrangian map: ${\bf x}={\bf X}({\bf x}_0,t)$ is obtained by integrating 
the fluid velocity equation $d{\bf x}/dt={\bf u}({\bf x},t)$, subject to 
the initial condition ${\bf x}={\bf x}_0$ at time $t=0$. This approach 
to MHD was initially developed by Newcomb (1962). 
 In this approach, the 
In Lagrangian MHD, the mass continuity equation and 
entropy advection equation are replaced 
by the equivalent algebraic equations:
\beqn
\rho={\frac{\rho_0({\bf x}_0)}{J}}, \quad S=S({\bf x}_0), \label{eq:2.7n}
\eeqn
where
\beqn
J=\det(x_{ij})\quad\hbox{and}\quad x_{ij}={\deriv{x^i({\bf x}_0,t)}{x_0^j}}. 
\label{eq:2.8n}
\eeqn
Similarly, Faraday's equation (\ref{eq:2.4}) has the formal solution for the
magnetic field induction ${\bf B}$ of the form:
\beqn
B^i={\frac{x_{ij}B_0^j}{J}},\quad \nabla_0{\bf\cdot}{\bf B}_0=0. \label{eq:2.9n}
\eeqn
The solution (\ref{eq:2.9n}) for $B^i$ is equivalent to the frozen in 
field theorem in MHD (e.g. Parker (1979)), and the 
initial condition $\nabla_0{\bf\cdot}{\bf B}_0=0$
is imposed in order to ensure that Gauss's law $\nabla{\bf\cdot}{\bf B}=0$ 
is satisfied. 

The Lagrangian map ${\bf x}={\bf X}({\bf x}_0,t)$ and its inverse 
${\bf x}_0={\bf X}_0({\bf x},t)$ is discussed in detail in Newcomb (1962),
Webb et al. (2005b), Webb and Zank (2007) and others.  One can 
 show that the Lagrange labels ${\bf x}_0$ are advected with the flow. 


The action for the MHD system is:
\beqn
A=\int \int {\cal L}\ d^3x dt\equiv \int\int {\cal L}^0\  d^3x_0 dt, 
\label{eq:2.15n} 
\eeqn
where
\beqn
{\cal L}={\frac{1}{2}}\rho |{\bf u}|^2-\varepsilon(\rho,S)-{\frac{B^2}{2\mu}}
-\rho\Phi,\quad {\cal L}^0={\cal L} J, \label{eq:2.16n}
\eeqn
are the Eulerian (${\cal L}$) and Lagrangian (${\cal L}^0$) Lagrange densities
respectively. Using (\ref{eq:2.7n})-(\ref{eq:2.9n}), and (\ref{eq:2.16n}) 
we obtain:
\beqn
{\cal L}^0={\frac{1}{2}}\rho_0 |{\bf x}_t|^2
-J\varepsilon\left({\frac{\rho_0}{J}},S\right)
 -{\frac{x_{ij}x_{is} B_0^j B_0^s}{2\mu J}}-\rho_0\Phi, \label{eq:2.17n}
\eeqn
for ${\cal L}^0$. Note that in 
${\cal L}^0={\cal L}^0({\bf x}_0,t; {\bf x}, {\bf x}_t, x_{ij})$, 
${\bf x}_0$ and $t$ are the independent variables, and ${\bf x}$ 
and its derivatives with respect to ${\bf x}_0$ and $t$ are dependent
variables.

The Hamiltonian description of MHD using the Lagrangian map 
is described by Newcomb (1962) (see also Padhye and Morrison (1996), 
Webb et al. (2005b), Webb and Zank (2007)).

\section{Symmetries and Noether's theorem in MHD}
In this section we discuss Noether's first theorem in MHD (e.g. 
 Padhye (1998),  Webb et al. (2005b)). We 
consider the Lagrangian action (\ref{eq:2.15n}), namely
\beqn
A=\int\int {\cal L}^0\  d^3x_0 dt,
\label{eq:3.1n}
\eeqn
where the Lagrangian density ${\cal L}^0$ is given by (\ref{eq:2.17n}). 

\subsection{Noether's theorem}

\begin{proposition}{Noether's theorem}\label{prop8.1}.
If the action (\ref{eq:3.1n}) is invariant to $O(\epsilon)$ under the 
infinitesimal Lie transformations:
\beqn
x'^i=x^i+\epsilon V^{x^i}, \quad
x'^j_0=x^j_0+\epsilon V^{x_0^j}, \quad
t'=t+\epsilon V^t, \label{eq:3.2n}
\eeqn
and under the divergence transformation:
\beqn
{\cal L}^{0'}={\cal L}^0+\epsilon D_\alpha \Lambda_0^\alpha+O(\epsilon^2), 
\label{eq:3.3n}
\eeqn
(here $D_0\equiv \partial/\partial t$ and $D_i\equiv \partial/\partial x_0^i$
are the total derivative operators in the jet-space consisting 
of the derivatives of $x^k({\bf x}_0,t)$ and physical quantities 
that depend on ${\bf x}_0$ and $t$) then the MHD system admits the 
Lagrangian conservation law:
\beqn
{\deriv{I^0}{t}}+{\deriv{I^j}{x_0^j}}=0, \label{eq:3.4n}
\eeqn
where
\beqnar
&&I^0=\rho_0 u^k {\hat V}^{x^k}+V^t {\cal L}^0+\Lambda_0^0, 
\label{eq:3.5n}\\
&&I^j={\hat V}^{x^k} \left[\left( p+{\frac{B^2}{2\mu}}\right) \delta^{ks}
-{\frac{B^k B^s}{\mu}}\right] A_{sj} + V^{x_0^j} {\cal L}^0 +\Lambda_0^j,
\label{eq:3.6n}
\eeqnar
In (\ref{eq:3.5n})-(\ref{eq:3.6n})
\beqn
{\hat V}^{x^k({\bf x}_0,t)} = V^{x^k({\bf x}_0,t)}-\left(V^t{\derv{t}}
+V^{x_0^s} {\derv{x_0^s}} \right)x^k({\bf x}_0,t), \label{eq:3.7n}
\eeqn
is the canonical or evolutionary Lie symmetry transformation generator 
corresponding
to the Lie transformation (\ref{eq:3.2n})
(i.e. $x'^k=x^k+\epsilon {\hat V}^{x^k}$, $t'=t$, $x'^j_0=x_0^j$).
\end{proposition}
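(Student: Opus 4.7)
The plan is to apply the classical variational Noether argument directly to the Lagrangian density $\mathcal{L}^0$ of Eq.\ (\ref{eq:2.17n}), treating $(\mathbf{x}_0,t)$ as independent variables and the Lagrangian map $\mathbf{x}(\mathbf{x}_0,t)$ as the dependent field. First I rewrite the Lie transformation (\ref{eq:3.2n}) in its canonical/evolutionary form using the standard prolongation identity $\mathrm{pr}\,\mathbf{v} = \mathrm{pr}\,\hat{\mathbf{v}} + V^t D_t + V^{x_0^j}D_j$, which applied to $x^k$ recovers exactly the definition (\ref{eq:3.7n}) of $\hat{V}^{x^k}$; the two transformations coincide on any solution, so it suffices to prove the conservation law for the evolutionary one. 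Combining the divergence-invariance hypothesis (\ref{eq:3.3n}) with the Jacobian expansion of the independent-variable change then converts the invariance condition into the evolutionary form $\mathrm{pr}\,\hat{\mathbf{v}}\,\mathcal{L}^0 = D_\alpha\bigl(\Lambda_0^\alpha - V^\alpha\mathcal{L}^0\bigr)$, with $V^0 \equiv V^t$ and $V^j \equiv V^{x_0^j}$.

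Next I expand $\mathrm{pr}\,\hat{\mathbf{v}}\,\mathcal{L}^0$ by the chain rule and integrate by parts in the derivative slots of $\mathcal{L}^0$ to produce the classical first-variation identity $\mathrm{pr}\,\hat{\mathbf{v}}\,\mathcal{L}^0 = \hat{V}^{x^k}E_k(\mathcal{L}^0) + D_\alpha\bigl(\hat{V}^{x^k}\,\partial\mathcal{L}^0/\partial x^k_\alpha\bigr)$, with $E_k$ the Euler-Lagrange operator associated with $x^k$. Equating this to the expression of the previous paragraph and invoking $E_k(\mathcal{L}^0) = 0$ (the Euler-Lagrange equations of the action, equivalent to the MHD momentum equation as noted in Section 5) produces the divergence law $D_\alpha I^\alpha = 0$ with $I^\alpha = V^\alpha\mathcal{L}^0 + \hat{V}^{x^k}\,\partial\mathcal{L}^0/\partial x^k_\alpha + \Lambda_0^\alpha$; the sign on $\Lambda_0^\alpha$ is fixed by the convention chosen in (\ref{eq:3.3n}).

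It remains to evaluate the conjugate momenta $\partial\mathcal{L}^0/\partial x^k_t$ and $\partial\mathcal{L}^0/\partial x_{kj}$ from the explicit form (\ref{eq:2.17n}). The time-component momentum is immediate: $\partial\mathcal{L}^0/\partial x^k_t = \rho_0 x^k_t = \rho_0 u^k$, which combined with the general formula reproduces (\ref{eq:3.5n}). The spatial momentum is the computational core: differentiating the internal-energy term $-J\varepsilon(\rho_0/J,S)$ using the cofactor identity $\partial J/\partial x_{kj} = A_{kj}$ and the thermodynamic relation $p = \rho\varepsilon_\rho - \varepsilon$ from (\ref{eq:2.4}) contributes the $p\,\delta^{ks}A_{sj}$ piece of the stress, while differentiating the magnetic term $-x_{ij}x_{is}B_0^jB_0^s/(2\mu J)$ and then converting $B_0^j = B^sA_{sj}$ via the frozen-in formula $B^k = x_{kj}B_0^j/J$ produces the magnetic-stress piece $[(B^2/2\mu)\delta^{ks} - B^kB^s/\mu]A_{sj}$. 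Assembling these gives the bracketed tensor in (\ref{eq:3.6n}).

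The structural Noether manipulation is completely standard; the genuine work is in the last paragraph, where the cofactor identity, the thermodynamic identity from (\ref{eq:2.4}), and the Lagrangian-to-Eulerian conversion for $\mathbf{B}_0$ must interlock to recover the combined Maxwell--fluid stress tensor. A secondary subtlety is tracking the sign of the Bessel--Hagen term $\Lambda_0^\alpha$ consistently from (\ref{eq:3.3n}) through the evolutionary rearrangement into (\ref{eq:3.5n})--(\ref{eq:3.6n}).
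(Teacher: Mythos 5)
Your proposal is correct and follows essentially the route the paper intends: the paper itself defers the proof to Webb et al.\ (2005b) and Webb and Zank (2007), but the Remark immediately following the proposition sets up exactly the machinery you use (the invariance condition (\ref{eq:3.9n}) and the decomposition $\tilde{X}=\hat{X}+V^\alpha D_\alpha$ of (\ref{eq:3.11n})--(\ref{eq:3.12n})), and your evaluation of the conjugate momenta via the cofactor identity, $p=\rho\varepsilon_\rho-\varepsilon$, and $B_0^j=B^sA_{sj}$ correctly reproduces (\ref{eq:3.5n})--(\ref{eq:3.6n}). The only blemish is the sign of $\Lambda_0^\alpha$ in your intermediate evolutionary invariance condition, which is inconsistent with the $+\Lambda_0^\alpha$ in your final $I^\alpha$; with the paper's convention (\ref{eq:3.9n}) the correct intermediate form is $\hat{X}\mathcal{L}^0=-D_\alpha\left(V^\alpha\mathcal{L}^0+\Lambda_0^\alpha\right)$, which then yields your stated conserved current without further adjustment.
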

 Proof of the above form of Noether's theorem for MHD is given in Webb et al. (2005b) and in Webb and Zank (2007).
General proofs of  Noether's first theorem are
given in Bluman and Kumei (1989) and Olver (1993).

\begin{remark}\label{prop8.0}
The action (\ref{eq:3.1n}) is invariant to $O(\epsilon)$
under the divergence transformation of the form (\ref{eq:3.2n})-(\ref{eq:3.3n})
provided:
\beqn
{\tilde X} {\cal L}^0+{\cal L}^0 \left(D_t V^t+D_{x_0^j} V^{x_0^j}\right)
+D_t \Lambda_0^0+D_{x_0^j} \Lambda_0^j=0, \label{eq:3.9n}
\eeqn
where
\beqn
{\tilde X}=V^t {\derv{t}}+V^{x_0^s}\derv{x_0^s}
+V^{x^k}{\derv{x^k}}+V^{x^k_t} {\derv{x^k_t}}
+V^{x_{kj}} {\derv{x_{kj}}}+\cdots, \label{eq:3.10n}
\eeqn
is the extended Lie transformation operator.
Here ${\tilde X}$
gives the transformation rules for the derivatives of $x^k({\bf x}_0,t)$
under Lie transformation (\ref{eq:3.2n}). From Ibragimov (1985):
\beqnar
&&{\tilde X}={\hat X}+V^\alpha D_\alpha, \label{eq:3.11n}\\
&&{\hat X}= {\hat V}^{x^k} {\derv{x^k}}+ D_\alpha \left({\hat V}^{x^k}\right)
{\derv{x^k_\alpha}}
+D_\alpha D_\beta\left({\hat V}^{x^k}\right){\derv{x^k_{\alpha\beta}}}
+\cdots, \label{eq:3.12n}
\eeqnar
where $D_0=\partial/\partial t$ $D_i=\partial/\partial x_0^i$ denote
total partial derivatives with respect to $t$ and $x_0^i$
($1\leq i\leq 3$), $V^0\equiv V^t$ and $V^i\equiv V^{x_0^i}$
respectively. ${\hat X}$ is the extended Lie symmetry operator for
 the canonical Lie transformation $x'^k=x^k+\epsilon {\hat V}^{x^k}$,
$t'=t$ and $x'^j_0=x^j_0$.
\end{remark}

\begin{proposition}\label{prop8.2}
The Lagrangian conservation law (\ref{eq:3.4n}) can be written as an 
Eulerian conservation law of the form (Padhye (1998)):
\beqn
{\deriv{F^0}{t}}+{\deriv{F^j}{x^j}}=0, \label{eq:3.14n}
\eeqn
where
\beqn
{F^0}={\frac{I^0}{J}},\quad F^j={\frac{u^j I^0+x_{jk}I^k}{J}}, 
\quad (j=1,2,3), \label{eq:3.15n}
\eeqn
are the conserved density $F^0$ and flux components $F^j$.
\end{proposition}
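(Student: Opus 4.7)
The strategy is to multiply the proposed Eulerian identity $\deriv{F^0}{t}+\deriv{F^j}{x^j}=0$ by the Jacobian $J$ and show the product agrees term by term with the Lagrangian identity (\ref{eq:3.4n}) obtained in Proposition~\ref{prop8.1}. Three standard kinematic facts are needed: (a)~the material derivative relation $d/dt=(\partial/\partial t)_{{\bf x}_0}=(\partial/\partial t)_{\bf x}+u^j\partial/\partial x^j$; (b)~the Jacobian evolution equation $dJ/dt=J\nabla{\bf\cdot}{\bf u}$; and (c)~the Piola identity $\partial(Jy_{kj})/\partial x_0^k=0$, where $y_{kj}=\partial x_0^k/\partial x^j$ denotes the inverse-Jacobian matrix entries, equivalently, divergence-freeness in ${\bf x}_0$ of each column of the cofactor matrix of $(x_{ij})$.

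\textbf{Time part.} First I would rewrite $I^0=JF^0$ and apply (a)--(b) to convert the Lagrangian time derivative of $I^0$ into Eulerian form:
\[
\frac{dI^0}{dt}=J\frac{dF^0}{dt}+F^0\frac{dJ}{dt}=J\left[\deriv{F^0}{t}+\deriv{(u^jF^0)}{x^j}\right].
\]
This accounts precisely for the $u^jF^0$ advective portion of the Eulerian flux $F^j$.

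\textbf{Flux part.} Set $G^j:=x_{jk}I^k/J=F^j-u^jF^0$. The remaining task is to establish
\[
\deriv{I^j}{x_0^j}=J\deriv{G^j}{x^j}.
\]
This is the classical divergence transform for Piola-pushed vector fields. For any Eulerian vector $W^j$, define its Lagrangian Piola preimage $W_L^k:=Jy_{kj}W^j$; combining (c) with the chain rule $\partial/\partial x_0^k=x_{lk}\partial/\partial x^l$ and the inversion identity $x_{lk}y_{kj}=\delta_{lj}$ yields $\partial W_L^k/\partial x_0^k=J\partial W^j/\partial x^j$. Taking $W_L^k=I^k$ forces $W^j=x_{jk}I^k/J=G^j$, which gives the required identity.

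\textbf{Assembly and main obstacle.} Adding the two identities produces
\[
\frac{dI^0}{dt}+\deriv{I^j}{x_0^j}=J\left[\deriv{F^0}{t}+\deriv{F^j}{x^j}\right],
\]
so since $J>0$ along a regular Lagrangian flow, the Eulerian conservation law (\ref{eq:3.14n}) is equivalent to the Lagrangian one (\ref{eq:3.4n}). The main obstacle is the flux identity: the index conventions for the Piola identity and for the cofactor $A_{sj}$ appearing in the stress-tensor term of (\ref{eq:3.6n}) must be tracked carefully, so that the Piola push of $I^k$ emerges as $G^j=x_{jk}I^k/J$ and not its transpose. Everything else is routine product-rule and chain-rule manipulation.
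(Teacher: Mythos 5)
Your proof is correct and is essentially the standard argument: the paper itself offers no explicit proof of Proposition \ref{prop8.2}, deferring instead to Padhye (1998), and the derivation there rests on exactly the three facts you invoke (the material-derivative identity, $dJ/dt=J\nabla{\bf\cdot}{\bf u}$, and the Piola/cofactor divergence identity $\partial A_{jk}/\partial x_0^k=0$). Your decomposition of $F^j$ into the advective piece $u^jF^0$ and the Piola push $x_{jk}I^k/J$, and the index bookkeeping showing $\partial I^k/\partial x_0^k=J\,\partial G^j/\partial x^j$, are all handled correctly.
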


\begin{proposition}\label{prop8.3}
The Lagrangian conservation law (\ref{eq:3.4n}) with conserved 
density $I^0$ of (\ref{eq:3.5n}), and flux $I^j$ of (\ref{eq:3.6n}), 
is equivalent to the 
Eulerian conservation law:
\beqn
{\deriv{F^0}{t}}+{\deriv{F^j}{x^j}}=0, \label{eq:3.16n}
\eeqn
where
\beqnar
&&F^0=\rho u^k {\hat V}^{x^k({\bf x}_0,t)} +V^t {\cal L}+\Lambda^0, 
\label{eq:3.17n}\\ 
&&F^j={\hat V}^{x^k({\bf x}_0,t)}\left(T^{jk}-{\cal L}\delta^{jk}\right)
+V^{x^j}{\cal L}+\Lambda^j, \label{eq:3.18n}\\
&&T^{jk}=\rho u^j u^k +\left(p+{\frac{B^2}{2\mu}}\right)\delta^{jk}
-{\frac{B^jB^k}{\mu}}, \label{eq:3.19n}\\
&&\Lambda^0={\frac{\Lambda_0^0}{J}},
\quad \Lambda^j={\frac{u^j \Lambda_0^0+x_{js}\Lambda_0^s}{J}}. \label{eq:3.20n}
\eeqnar
In (\ref{eq:3.16n})-(\ref{eq:3.20n}) $T^{jk}$ is the Eulerian momentum 
flux tensor (the spatial components of the stress energy tensor) 
and ${\hat V}^{x^k({\bf x}_0,t)}$ is the canonical symmetry generator 
(\ref{eq:3.6n}). 
\end{proposition}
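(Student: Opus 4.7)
The plan is to start from Proposition \ref{prop8.2}, which already gives the abstract transcription $F^0 = I^0/J$ and $F^j = (u^j I^0 + x_{jk}I^k)/J$, and then simply substitute the explicit Noether expressions (\ref{eq:3.5n})--(\ref{eq:3.6n}) for $I^0$ and $I^k$, converting everything to Eulerian quantities. The conservation law (\ref{eq:3.16n}) will then follow from (\ref{eq:3.14n}) without any further work, so the entire content of the proposition is an algebraic identification of the conserved density and flux.

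For the density, I would substitute (\ref{eq:3.5n}) into $F^0 = I^0/J$ and use the three Eulerian-Lagrangian dictionary entries $\rho_0/J = \rho$, $\mathcal{L}^0/J = \mathcal{L}$ (from (\ref{eq:2.16n})), and $\Lambda_0^0/J = \Lambda^0$ (from (\ref{eq:3.20n})). This yields (\ref{eq:3.17n}) immediately. For the flux, I would split $F^j$ into the two contributions $u^j I^0/J$ and $x_{jk}I^k/J$. The first contribution reproduces a copy of $F^0$ multiplied by $u^j$, and in particular contributes $\rho u^j u^k \hat V^{x^k}$, $V^t u^j \mathcal{L}$, and $u^j \Lambda^0$. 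The second contribution requires contracting $x_{jk}$ with the Noether flux (\ref{eq:3.6n}); the key technical tool here is the cofactor identity $x_{jk}A_{sk} = J\delta_{js}$, which collapses the stress-tensor piece to $\hat V^{x^k}J\bigl[(p+B^2/(2\mu))\delta^{jk} - B^j B^k/\mu\bigr]$, and after dividing by $J$ combines cleanly with the $\rho u^j u^k \hat V^{x^k}$ piece above to reconstruct the Eulerian momentum flux $\hat V^{x^k}T^{jk}$ of (\ref{eq:3.19n}).

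The remaining terms carrying $V^t$ and $V^{x_0^s}$ then need to be re-expressed in terms of the Eulerian component $V^{x^j}$. This is handled by inverting the definition (\ref{eq:3.7n}) of the canonical generator, which gives $V^{x^j} = \hat V^{x^j} + V^t u^j + V^{x_0^s}x_{js}$, so that $u^j V^t\mathcal{L} + x_{js}V^{x_0^s}\mathcal{L} = (V^{x^j} - \hat V^{x^j})\mathcal{L}$. Absorbing the $-\hat V^{x^j}\mathcal{L}$ term into $\hat V^{x^k}(T^{jk} - \mathcal{L}\delta^{jk})$ produces exactly the structure of (\ref{eq:3.18n}). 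Finally, the gauge term $u^j\Lambda_0^0/J + x_{js}\Lambda_0^s/J$ collapses to $\Lambda^j$ by the definition (\ref{eq:3.20n}), completing the identification.

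The only genuine obstacle is keeping the index bookkeeping consistent, particularly the direction of the cofactor identity (whether one uses $x_{jk}A_{sk}=J\delta_{js}$ or its transpose) and the signs in (\ref{eq:3.7n}); once those are fixed, the calculation is a linear reshuffling and invokes no new physical input beyond Propositions \ref{prop8.1} and \ref{prop8.2}.
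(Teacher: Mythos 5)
Your proposal is correct and follows exactly the route the paper intends: the paper states Proposition \ref{prop8.3} without an explicit proof, relying on Proposition \ref{prop8.2} and the cited references (Padhye and Morrison, Webb et al.\ 2005b), and your substitution of (\ref{eq:3.5n})--(\ref{eq:3.6n}) into $F^0=I^0/J$, $F^j=(u^jI^0+x_{jk}I^k)/J$ together with the dictionary $\rho_0/J=\rho$, ${\cal L}^0/J={\cal L}$, the cofactor identity $x_{jk}A_{sk}=J\delta_{js}$, and the inversion of (\ref{eq:3.7n}) is precisely the computation that establishes it. All the index bookkeeping in your sketch checks out, so no gap remains.
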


\begin{remark}
For a pure fluid relabelling symmetry $V^{\bf x}=V^t=0$, and 
Proposition \ref{prop8.3} gives:
\begin{align}
F^0=&\hat{V}^{\bf x}{\bf\cdot}(\rho {\bf u})+\Lambda^0, \label{eq:3.20na}\\
{\bf F}=&\hat{V}^{\bf x}{\bf\cdot}\left[\rho {\bf u}\otimes{\bf u}
+\left(\varepsilon+p+\rho\Phi +\frac{B^2}{2\mu_0}\right)\sf{I}
-\frac{{\bf B}\otimes{\bf B}}{\mu_0}\right] +\boldsymbol{\Lambda}, 
\label{eq:3.20nb}
\end{align}
for the conserved density $F^0$ and flux ${\bf F}$ where 
$\boldsymbol{\Lambda}=(0,\Lambda^1,\Lambda^2,\Lambda^3)$.
\end{remark}

\begin{remark} 
Padhye and Morrison (1996a,b) and Padhye (1998) used 
Proposition \ref{prop8.2} to convert 
Lagrangian conservation laws to Eulerian conservation laws.  
 Webb et al. (2005b) derived Lagrangian 
and Eulerian conservation laws  
 using Propositions \ref{prop8.1} and \ref{prop8.3}, 
and studied   
 fully nonlinear MHD waves in a non-uniform and time dependent
 background flow. Linear waves 
in a non-uniform background were studied in Webb et al. (2005a), 
 extending similar work by Dewar (1970) for WKB waves. 
\end{remark}

\subsection{Fluid Relabelling Symmetries}

Consider infinitesimal Lie transformations of the form 
(\ref{eq:3.2n})-(\ref{eq:3.3n}), with
\beqn
V^t=0,\quad V^{\bf x}=0,\quad V^{{\bf x}_0}\neq 0,  
\label{eq:3.21n}
\eeqn
which leave the action (\ref{eq:3.1n}) invariant. The extended Lie 
transformation operator ${\tilde X}$ for the 
case (\ref{eq:3.21n}) has generators:
\beqnar
&&{\hat V}^{\bf x}=-V^{{\bf x}_0}{\bf\cdot}\nabla_0 {\bf x},\quad 
V^{{\bf x}_t}=-D_t \left(V^{{\bf x}_0}\right) {\bf\cdot}\nabla_0 {\bf x},
\nonumber\\
&&V^{\nabla_0{\bf x}}=-\nabla_0\left(V^{{\bf x}_0}\right)
{\bf\cdot}\nabla_0 {\bf x}. 
\label{eq:3.22n}
\eeqnar
The condition (\ref{eq:3.9n}) for a divergence symmetry of the action 
reduces to:
\beqnar
&&\nabla_0{\bf\cdot}\left( \rho_0 V^{{\bf x}_0}\right) 
\left( {\frac{1}{2}}|{\bf u}|^2-\Phi ({\bf x})
-{\frac{\varepsilon+p}{\rho}}\right)
 -J{\deriv{\varepsilon(\rho,S)}{S}}
V^{{\bf x}_0}{\bf\cdot}\nabla_0 S\nonumber\\
&&-D_t\left(\rho_0 V^{{\bf x}_0}\right){\bf\cdot}\nabla_0{\bf x}{\bf\cdot}
{\bf u}
-{\frac{1}{\mu J}}(\nabla_0 {\bf x}){\bf\cdot}(\nabla_0{\bf x})^T
{\bf :}\bigl[\bigl(V^{{\bf x}_0}{\bf\cdot}\nabla_0 {\bf B}_0\bigr) {\bf B}_0
\nonumber\\
&&+{\bf B}_0{\bf B}_0 \nabla_0{\bf\cdot} V^{{\bf x}_0} 
-\left( {\bf B}_0{\bf\cdot}\nabla_0 V^{{\bf x}_0}\right) {\bf B}_0\bigr]=
-\partial\Lambda_0^\alpha/\partial x_0^\alpha.
\label{eq:3.23n}
\eeqnar
where $x_0^\alpha=(t,x_0,y_0,z_0)$ is the spatial four-vector in Lagrange label 
space. 
 Simple solutions of (\ref{eq:3.23n}) with $\Lambda_0^\alpha=0$ 
($\alpha=0,1,2,3$) are obtained by setting:  
\beqnar
&&\nabla_0{\bf\cdot}\left( \rho_0 V^{{\bf x}_0}\right)=0,\quad 
V^{{\bf x}_0}{\bf\cdot}\nabla_0 S=0,
\quad D_t\left(\rho_0 V^{{\bf x}_0}\right)=0, \nonumber\\
&&\nabla_0\times\left(V^{{\bf x}_0}\times {\bf B}_0\right)=0,\quad 
\nabla_0{\bf\cdot}{\bf B}_0=0, \quad \Lambda_0^\alpha=0, \label{eq:3.24n}
\eeqnar
where $\alpha=0,1,2,3$. 
Equations (\ref{eq:3.24n}) are Lie determining equations for the fluid 
relabelling symmetries obtained by Padhye (1998) and Webb et al. (2005b).
However, (\ref{eq:3.24n}) do not give the most general 
solutions for the fluid relabelling symmetries. To obtain other 
possible solutions of (\ref{eq:3.23n}) it is useful to convert the 
fluid relabelling divergence symmetry  condition to its  
Eulerian form given below. 

\begin{proposition}\label{prop8.4}
The condition (\ref{eq:3.23n}) for a divergence symmetry of the action 
converted to Eulerian form is:
\begin{align}
&\nabla{\bf\cdot}\left(\rho\hat{V}^{\bf x}\right) 
\left(h+\Phi({\bf x})-\frac{1}{2}|{\bf u}|^2\right) 
+\rho T\hat{V}^{\bf x}{\bf\cdot}\nabla S
+\rho {\bf u}{\bf\cdot}\left(\frac{d\hat{V}^{\bf x}}{dt}
-\hat{V}^{\bf x}{\bf\cdot}\nabla{\bf u}\right)\nonumber\\
&+\frac{\bf B}{\mu_0}{\bf\cdot}\left[-\nabla\times
\left(\hat{V}^{\bf x}\times{\bf B}\right) 
+\hat{V}^{\bf x}\nabla{\bf\cdot}{\bf B}\right]
=-\nabla_\alpha\Lambda^\alpha, 
\label{eq:3.25n}
\end{align}
where 
\beqn
\nabla_\alpha\Lambda^\alpha=\deriv{\Lambda^0}{t}+\deriv{\Lambda^i}{x^i}, 
\label{eq:3.26n}
\eeqn
is the four divergence of the four dimensional vector 
$\boldsymbol{\Lambda}= \left(\Lambda^0,\Lambda^1,\Lambda^2,\Lambda^3\right)$. 
The four vector $\boldsymbol{\Lambda}$ is related the the Lagrange 
label space vector $\Lambda_0^\alpha$ by the transformations:
\beqn
\Lambda^\alpha=\Lambda_0^\beta B_{\beta\alpha}\equiv \Lambda^\beta_0 
\frac{x_{\alpha \beta}}{J}, \label{eq:3.27n}
\eeqn
where $x_{\alpha\beta}=\partial x^\alpha/\partial x_0^\beta$, $J=\det(x_{ij})$ 
and $B_{\alpha\beta}=\hbox{cofac}(\partial x_0^\alpha/\partial x^\beta)$ (the 
transformations (\ref{eq:3.27n}) are the same as those in (\ref{eq:3.20n}); 
note that $\alpha$, $\beta$ have values $0,1,2,3$). 
\end{proposition}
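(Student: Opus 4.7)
The plan is to start from the Lagrangian-label version of the divergence symmetry condition (\ref{eq:3.23n}) and push every term forward into Eulerian coordinates, so that the resulting identity divided through by $J$ matches (\ref{eq:3.25n}) term-by-term. The main dictionary I will use is: $\rho_0 = \rho J$, $S=S_0({\bf x}_0)$, the frozen-in field ${\bf B}^i = x_{ij}B_0^j/J$, the canonical generator identity $\hat{V}^{x^k} = -V^{x_0^j} x_{kj}$ (equivalently $V^{x_0^j} = -(\partial x_0^j/\partial x^k)\hat{V}^{x^k}$), and the fact that for any $f=f({\bf x},t)$, $V^{{\bf x}_0}\cdot\nabla_0 f = -\hat{V}^{\bf x}\cdot\nabla f$ by the chain rule. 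In particular this immediately handles the entropy term: $-J \varepsilon_S V^{{\bf x}_0}\cdot\nabla_0 S = J\rho T\, \hat{V}^{\bf x}\cdot\nabla S$, which after dividing by $J$ produces the $\rho T\hat{V}^{\bf x}\cdot\nabla S$ piece in (\ref{eq:3.25n}).

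For the first (thermodynamic/Bernoulli) group in (\ref{eq:3.23n}), I will apply the Piola identity: writing $\rho_0 V^{x_0^j} = -\rho J(\partial x_0^j/\partial x^k)\hat{V}^{x^k}$, the identity $\partial_{x_0^j}[J(\partial x_0^j/\partial x^k)]=0$ shows that $(1/J)\nabla_0{\bf\cdot}(\rho_0 V^{{\bf x}_0}) = -\nabla{\bf\cdot}(\rho\hat{V}^{\bf x})$. Combined with the equation of state relations $\varepsilon_\rho = h$, $p = \rho h - \varepsilon$, i.e.\ $(\varepsilon+p)/\rho = h$, one obtains exactly $\nabla{\bf\cdot}(\rho\hat{V}^{\bf x})\bigl(h+\Phi-\tfrac12|{\bf u}|^2\bigr)$ (with the sign flip against the $\tfrac12|{\bf u}|^2-\Phi-(\varepsilon+p)/\rho$ factor).

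For the momentum-derivative term, I contract $D_t(\rho_0 V^{{\bf x}_0})$ with $\nabla_0{\bf x}\cdot{\bf u}$ by writing $[D_t(\rho_0 V^{x_0^j})]x_{kj}u^k = D_t[(\rho_0 V^{x_0^j})x_{kj}]u^k - (\rho_0 V^{x_0^j})(D_t x_{kj})u^k$, and then use $(\rho_0 V^{x_0^j})x_{kj} = -\rho_0\hat{V}^{x^k}$ together with $D_t x_{kj} = x_{lj}\partial u^k/\partial x^l$ and $D_t\rho_0 = 0$. The result, divided by $J$, is precisely $\rho{\bf u}\cdot(d\hat{V}^{\bf x}/dt - \hat{V}^{\bf x}\cdot\nabla{\bf u})$, matching the corresponding term of (\ref{eq:3.25n}).

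The hardest step will be the magnetic term, where the tensor double-contraction $(\nabla_0{\bf x})(\nabla_0{\bf x})^T\!:\![\cdots]$ in (\ref{eq:3.23n}) must be converted into the Eulerian Lie-derivative form $({\bf B}/\mu_0)\cdot[-\nabla\times(\hat{V}^{\bf x}\times{\bf B}) + \hat{V}^{\bf x}\nabla{\bf\cdot}{\bf B}]$. The plan is to substitute ${\bf B}_0^j = J(\partial x_0^j/\partial x^l)B^l$ from the inverse of the frozen-in relation, use the chain rule to rewrite $V^{{\bf x}_0}\cdot\nabla_0{\bf B}_0$ and $\nabla_0{\bf\cdot}V^{{\bf x}_0}$ in terms of $\hat{V}^{\bf x}$ and Eulerian gradients, and then recognize the combination of terms in brackets as the pullback of $\mathcal{L}_{-\hat V^{\bf x}}({\bf B}\cdot d{\bf S})$. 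Equivalently, since the Lagrangian-label bracket is literally the Lie derivative along $V^{{\bf x}_0}$ of the Lagrangian form of Faraday's equation, it pushes forward under ${\bf x}={\bf X}({\bf x}_0,t)$ to the Eulerian Lie derivative of ${\bf B}\cdot d{\bf S}$ along $\hat{V}^{\bf x}$, whose vector form is $-\nabla\times(\hat{V}^{\bf x}\times{\bf B})+\hat{V}^{\bf x}\nabla{\bf\cdot}{\bf B}$. Finally, for the right-hand side, (\ref{eq:3.27n}) together with the Piola identity $\partial_{x_0^\beta}(x_{\alpha\beta}/J)=0$ gives $(1/J)\partial_{x_0^\beta}\Lambda_0^\beta = \partial_{x^\alpha}\Lambda^\alpha$. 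Collecting all pieces and multiplying through by $1/J$ yields (\ref{eq:3.25n}). The real labour, as indicated, lies in the magnetic term; all other pieces are chain-rule bookkeeping once the Piola identity is in hand.
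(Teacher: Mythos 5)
Your proposal is correct and takes essentially the same route as the paper, whose proof simply cites the transformations (\ref{eq:3.22n}) together with (\ref{eq:3.1n})--(\ref{eq:3.7n}); you have written out in full the chain-rule, Piola-identity and Lie-derivative push-forward bookkeeping that the paper leaves implicit, and each term checks out. One small slip: the Piola identity you invoke for the right-hand side should read $\partial_{x^\alpha}\left(x_{\alpha\beta}/J\right)=0$ (divergence taken on the Eulerian index), not $\partial_{x_0^\beta}\left(x_{\alpha\beta}/J\right)=0$; with that correction your conclusion $(1/J)\,\partial_{x_0^\beta}\Lambda_0^\beta=\partial_{x^\alpha}\Lambda^\alpha$ is exactly Proposition \ref{prop8.2} applied to the four-vector $\Lambda_0^\alpha$.
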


\begin{proof}
The proof follows by using (\ref{eq:3.1n})-(\ref{eq:3.7n}) and 
 the transformations (\ref{eq:3.22n}) 
relating $\hat{V}^{\bf x}$, $\hat{V}^{{\bf x}_t}$ and 
$\hat{V}^{{\bf x}_{ij}}$  to $V^{{\bf x}_0}$.
\end{proof}

The divergence symmetry conditions (\ref{eq:3.25n}) and Noether's theorem 
(Proposition \ref{prop8.3}) 
applied to the fluid relabelling symmetries, and including 
the gauge potentials $\Lambda^\alpha$ ($\alpha=0,1,2,3$) are used below to 
derive the nonlocal fluid helicity conservation law (\ref{eq:nl1}) 
and the nonlocal cross helicity conservation law (\ref{eq:gch1}). 

\begin{proposition}\label{prop8.5}
The fluid helicity conservation law (\ref{eq:nl1}), i.e., 
\beqn
\derv{t}\left[\boldsymbol{\Omega}{\bf\cdot}({\bf u}+r\nabla S)\right]
+\nabla{\bf\cdot}\left\{{\bf u}\left[\boldsymbol{\Omega}{\bf\cdot}({\bf u}+r\nabla S)\right]+\boldsymbol{\Omega}\left(h-\frac{1}{2}|{\bf u}|^2\right)\right\}
=0, \label{eq:fh0}
\eeqn
 may be obtained by 
applying Noether's theorem (Proposition \ref{prop8.3}) in which the 
fluid relabelling  symmetry generator ${\hat V}^{\bf x}$ is given 
by the equations:
\begin{align}
{\hat V}^{\bf x}=&\frac{\boldsymbol{\Omega}}{\rho}, 
\quad \boldsymbol{\Omega}= \nabla\times {\bf w}, 
\quad V^t=V^{\bf x}=0, 
\quad V^{{\bf x}_0}=-\frac{(\nabla_0\tilde{\lambda}\times\nabla_0\mu)}
{\rho_0}, \nonumber\\
{\bf w}=&{\bf u}-\nabla\phi+r\nabla S\equiv -\tilde{\lambda}\nabla\mu, 
\label{eq:fh1}
\end{align}
and the gauge potentials $\Lambda^\alpha$ ($\alpha=0,1,2,3$) are given by:
\beqn
\Lambda^0=r(\boldsymbol{\Omega}{\bf\cdot}\nabla S), 
\quad \Lambda^j=r(\boldsymbol{\Omega}{\bf\cdot}\nabla S) u^j. \label{eq:fh2}
\eeqn
\end{proposition}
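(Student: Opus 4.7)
The plan is to verify Proposition~\ref{prop8.5} by showing, first, that the specified generator satisfies the Eulerian divergence symmetry condition (\ref{eq:3.25n}) with the stated gauge potentials, and second, that feeding $\hat{V}^{\bf x}=\boldsymbol{\Omega}/\rho$ and $\Lambda^\alpha$ of (\ref{eq:fh2}) into the Eulerian Noether formula (\ref{eq:3.20na})--(\ref{eq:3.20nb}) reproduces the density and flux of (\ref{eq:nl1}).

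The key structural input is that $\hat{V}^{\bf x}=\boldsymbol{\Omega}/\rho$ is a Lie-dragged vector field. I would establish this by noting that $\alpha={\bf w}{\bf\cdot}d{\bf x}=-\tilde\lambda\,\nabla\mu{\bf\cdot}d{\bf x}$ is a Lie-dragged 1-form (since $\tilde\lambda$ and $\mu$ are advected by (\ref{eq:nl3})), hence $\beta=d\alpha=\boldsymbol{\Omega}{\bf\cdot}d{\bf S}$ is a Lie-dragged 2-form, and $\omega=\rho\,d^3x$ is a Lie-dragged 3-form; dividing gives the Lie-dragged vector $\boldsymbol{\Omega}/\rho$. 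Equivalently, one checks that the Lagrangian generator $V^{{\bf x}_0}=-(\nabla_0\tilde\lambda\times\nabla_0\mu)/\rho_0$ is mapped by $\hat{V}^{\bf x}=-V^{{\bf x}_0}{\bf\cdot}\nabla_0{\bf x}$ to $\boldsymbol{\Omega}/\rho$ using the Cauchy-type solution. Lie-draggedness is equivalent to $d\hat{V}^{\bf x}/dt=\hat{V}^{\bf x}{\bf\cdot}\nabla{\bf u}$, i.e.\ the convective bracket in (\ref{eq:3.25n}) vanishes.

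Next I would substitute into (\ref{eq:3.25n}) with ${\bf B}=0$ (pure fluid case). Three simplifications occur: (i) $\nabla{\bf\cdot}(\rho\hat{V}^{\bf x})=\nabla{\bf\cdot}\boldsymbol{\Omega}=0$ since $\boldsymbol{\Omega}$ is a curl, killing the enthalpy/kinetic term; (ii) the Lie-dragging property kills the momentum-bracket term; (iii) the magnetic block is absent. What remains on the left is $\rho T\hat{V}^{\bf x}{\bf\cdot}\nabla S=T\,\boldsymbol{\Omega}{\bf\cdot}\nabla S$. To match the right-hand side, I would compute
\begin{equation}
-\nabla_\alpha\Lambda^\alpha=-\derv{t}\!\bigl[r(\boldsymbol{\Omega}{\bf\cdot}\nabla S)\bigr]-\nabla{\bf\cdot}\!\bigl[r{\bf u}(\boldsymbol{\Omega}{\bf\cdot}\nabla S)\bigr],
\end{equation}
expand the time and space derivatives, and invoke $dr/dt=-T$ from (\ref{eq:nl3}) together with the fact that $\boldsymbol{\Omega}{\bf\cdot}\nabla S/\rho$ is a Lie-dragged scalar (contraction of the Lie-dragged vector $\boldsymbol{\Omega}/\rho$ with the Lie-dragged 1-form $\nabla S{\bf\cdot}d{\bf x}$). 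The latter yields the conservation form $\partial_t(\boldsymbol{\Omega}{\bf\cdot}\nabla S)+\nabla{\bf\cdot}({\bf u}\,\boldsymbol{\Omega}{\bf\cdot}\nabla S)=0$, leaving $-\nabla_\alpha\Lambda^\alpha=T\,\boldsymbol{\Omega}{\bf\cdot}\nabla S$, which matches the left-hand side and confirms the divergence symmetry.

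Finally, applying (\ref{eq:3.20na})--(\ref{eq:3.20nb}) with ${\bf B}=0$, $\Phi=0$, $\hat{V}^{\bf x}=\boldsymbol{\Omega}/\rho$ and $(\Lambda^0,\boldsymbol{\Lambda})$ of (\ref{eq:fh2}) gives $F^0=\boldsymbol{\Omega}{\bf\cdot}{\bf u}+r\boldsymbol{\Omega}{\bf\cdot}\nabla S=\boldsymbol{\Omega}{\bf\cdot}({\bf u}+r\nabla S)$, and $F^j=(\boldsymbol{\Omega}{\bf\cdot}{\bf u})u^j+\Omega^j(h-\tfrac12|{\bf u}|^2)+ru^j(\boldsymbol{\Omega}{\bf\cdot}\nabla S)$, where I used $(p+\varepsilon)/\rho=h$ from (\ref{eq:2.4}). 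Grouping terms reproduces (\ref{eq:nl1}) exactly. The main obstacle I anticipate is the bookkeeping in step two: confirming that the Lagrangian generator $V^{{\bf x}_0}=-(\nabla_0\tilde\lambda\times\nabla_0\mu)/\rho_0$ pushes forward to $\boldsymbol{\Omega}/\rho$ and that $\boldsymbol{\Omega}/\rho$ is genuinely Lie-dragged by the Eulerian flow, since the vanishing of the convective bracket in (\ref{eq:3.25n}) and the advection of $\boldsymbol{\Omega}{\bf\cdot}\nabla S/\rho$ both hinge on this single fact.
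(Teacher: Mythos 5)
Your proposal is correct and follows essentially the same route as the paper's proof: verify the divergence-symmetry condition (\ref{eq:3.25n}) by using $\nabla{\bf\cdot}(\rho\hat{V}^{\bf x})=\nabla{\bf\cdot}\boldsymbol{\Omega}=0$, the Lie-dragging of $\boldsymbol{\Omega}/\rho$ to kill the convective bracket, and the advection of $\boldsymbol{\Omega}{\bf\cdot}\nabla S/\rho$ together with $dr/dt=-T$ to match $-\nabla_\alpha\Lambda^\alpha=T\,\boldsymbol{\Omega}{\bf\cdot}\nabla S$, then invoke Proposition \ref{prop8.3}. Your explicit evaluation of $F^0$ and $F^j$ (correctly retaining the $-\tfrac12\rho|{\bf u}|^2$ contribution from $-{\cal L}\delta^{jk}$ in (\ref{eq:3.18n})) only fills in a step the paper leaves implicit.
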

\begin{proof}
Because $\rho {\hat V}^{\bf x}=\boldsymbol{\Omega}=\nabla\times{\bf w}$ 
it follows that $\nabla{\bf\cdot}(\rho\hat{V}^{\bf x})=0$ in (\ref{eq:3.25n}). 
Also the one form $\alpha={\bf w}{\bf\cdot} d{\bf x}$ is Lie dragged with 
the flow 
and $\hat{V}^{\bf x}= \nabla\times{\bf w}/\rho\equiv\boldsymbol{\Omega}/\rho$
is an invariant advected vector field, i.e., it satisfies the equation:
\beqn
\frac{d\hat{V}^{\bf x}}{dt}-\hat{V}^{\bf x}{\bf\cdot}\nabla{\bf u}
\equiv \deriv{\hat{V}^{\bf x}}{t} +\left[{\bf u},\hat{V}^{\bf x}\right]=0. 
\label{eq:fh3}
\eeqn
The left hand side of (\ref{eq:3.25n}) reduces to:
\beqn
\rho T\hat{V}^{\bf x}{\bf\cdot}\nabla S
= T\boldsymbol{\Omega}{\bf\cdot}\nabla S.  \label{eq:fh4}
\eeqn
The gauge potential divergence term on the right hand side of (\ref{eq:3.25n}) 
reduces to 
\begin{align}
-\nabla_\alpha\Lambda^\alpha
=&-\left(\deriv{\Lambda^0}{t}+\nabla{\bf\cdot}\boldsymbol{\Lambda}\right)
=-\left(\derv{t}\left(r\boldsymbol{\Omega}{\bf\cdot}\nabla S\right)
+\nabla{\bf\cdot}\left({\bf u} (r\boldsymbol{\Omega}{\bf\cdot}\nabla S)
\right)\right)\nonumber\\
=&-\left[r\rho\frac{d}{dt}
\left(\frac{\boldsymbol{\Omega}{\bf\cdot}\nabla S}{\rho}\right) 
+\frac{\boldsymbol{\Omega}{\bf\cdot}\nabla S}{\rho} \rho 
\frac{dr}{dt}\right]
=-\boldsymbol{\Omega}{\bf\cdot}\nabla S \frac{dr}{dt} 
=T\boldsymbol{\Omega}{\bf\cdot}\nabla S, \label{eq:fh5}
\end{align}
which is the same as the left handside (\ref{eq:fh4}). Thus the condition
(\ref{eq:3.25n}) for 
a divergence, relabelling symmetry of the action is satisfied. Using 
(\ref{eq:fh1}) and (\ref{eq:fh2}) 
in the Noether's theorem (proposition \ref{prop8.3})
gives the nonlocal fluid helicity conservation law  
(\ref{eq:fh0}). 
\end{proof}

\begin{proposition}\label{prop8.6}
The nonlocal cross helicity conservation law (\ref{eq:gch1}): 
\beqn
\derv{t}\left[{\bf B}{\bf\cdot}\left({\bf u}+r\nabla S\right)\right]
+\nabla{\bf\cdot}\bigl\{ {\bf u}
\left[{\bf B}{\bf\cdot}
\left({\bf u}+r\nabla S\right)
\right]
+\left(h-\frac{1}{2}|{\bf u}|^2\right) {\bf B}\bigr\}=0,
\label{eq:fh5a}
\eeqn
 is obtained by using Noether's theorem 
(proposition \ref{prop8.3}) 
 in which the fluid relabelling 
symmetry generator $\hat{V}^{\bf x}$ has the form:
\beqn
\hat{V}^{\bf x}=\frac{\bf B}{\rho}\equiv {\bf b}, \quad V^{\bf x}=V^t=0, 
\quad V^{{\bf x}_0}=-\frac{{\bf B}_0}{\rho_0}, \label{eq:chn1}
\eeqn
and the gauge potentials $\Lambda^\alpha$ ($\alpha=0,1,2,3$) are:
\beqn
\Lambda^0=r {\bf B}{\bf\cdot}\nabla S, \quad 
\Lambda^i= u^i r {\bf B}{\bf\cdot}\nabla S. \label{eq:chn2}
\eeqn
\end{proposition}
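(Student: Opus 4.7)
The plan is to mirror the proof of Proposition \ref{prop8.5} for the fluid helicity: first verify that the prescribed $\hat{V}^{\bf x}={\bf B}/\rho$ together with the gauge potentials $\Lambda^0$, $\Lambda^j$ in (\ref{eq:chn2}) satisfy the Eulerian divergence-symmetry condition (\ref{eq:3.25n}), and then read off the conserved density and flux from Noether's theorem (Proposition \ref{prop8.3}).

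First I would examine the left-hand side of (\ref{eq:3.25n}) term by term under $\hat{V}^{\bf x}={\bf B}/\rho$. Because $\rho\hat{V}^{\bf x}={\bf B}$, Gauss's law $\nabla{\bf\cdot}{\bf B}=0$ immediately kills the coefficient of the enthalpy-type term $(h+\Phi-\tfrac12|{\bf u}|^2)$. The magnetic piece $({\bf B}/\mu_0){\bf\cdot}[-\nabla\times(\hat{V}^{\bf x}\times{\bf B})+\hat{V}^{\bf x}\nabla{\bf\cdot}{\bf B}]$ vanishes because $({\bf B}/\rho)\times{\bf B}={\bf 0}$ and $\nabla{\bf\cdot}{\bf B}=0$. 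The advected-vector-field term $\rho{\bf u}{\bf\cdot}(d\hat{V}^{\bf x}/dt-\hat{V}^{\bf x}{\bf\cdot}\nabla{\bf u})$ vanishes because ${\bf b}={\bf B}/\rho$ is a Lie-dragged vector field: combining Faraday's equation $d{\bf B}/dt=({\bf B}{\bf\cdot}\nabla){\bf u}-{\bf B}(\nabla{\bf\cdot}{\bf u})$ with the continuity relation $d\rho/dt=-\rho\nabla{\bf\cdot}{\bf u}$ yields $d{\bf b}/dt=({\bf b}{\bf\cdot}\nabla){\bf u}$. What survives on the left-hand side is therefore $\rho T\hat{V}^{\bf x}{\bf\cdot}\nabla S=T\,{\bf B}{\bf\cdot}\nabla S$.

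Next I would show that $-\nabla_\alpha\Lambda^\alpha$ with the given gauge potentials reproduces exactly $T\,{\bf B}{\bf\cdot}\nabla S$. Rewriting $-\partial_t(r\,{\bf B}{\bf\cdot}\nabla S)-\nabla{\bf\cdot}({\bf u}\,r\,{\bf B}{\bf\cdot}\nabla S)$ as $-d(r\,{\bf B}{\bf\cdot}\nabla S)/dt-(r\,{\bf B}{\bf\cdot}\nabla S)\nabla{\bf\cdot}{\bf u}$, and then invoking (i) $d(({\bf B}{\bf\cdot}\nabla S)/\rho)/dt=0$, which follows because $({\bf B}{\bf\cdot}\nabla S)/\rho={\bf b}\contr dS$ is the contraction of two Lie-dragged objects, together with (ii) $dr/dt=-T$ from (\ref{eq:nl3}) and $d\rho/dt=-\rho\nabla{\bf\cdot}{\bf u}$, the expression collapses to $T\,{\bf B}{\bf\cdot}\nabla S$, matching the left-hand side. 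This establishes that (\ref{eq:3.25n}) holds with the stated data.

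Finally I would substitute $\hat{V}^{\bf x}={\bf B}/\rho$, $V^t=V^{\bf x}=0$, and the gauge potentials into (\ref{eq:3.17n})--(\ref{eq:3.18n}). The density $F^0=\rho u^k\hat{V}^{x^k}+\Lambda^0$ gives ${\bf B}{\bf\cdot}({\bf u}+r\nabla S)$ at once. For the flux, inserting $\hat{V}^{x^k}=B^k/\rho$ into $F^j=\hat{V}^{x^k}(T^{jk}-{\cal L}\delta^{jk})+\Lambda^j$ and using $h=(\varepsilon+p)/\rho$ causes the Maxwell-stress pieces $(B^2/2\mu)\delta^{jk}-B^jB^k/\mu$ to combine with the $-B^j{\cal L}/\rho$ term (setting the external potential $\Phi$ to zero, as in (\ref{eq:fh5a})) and produces $F^j=u^j[{\bf B}{\bf\cdot}({\bf u}+r\nabla S)]+B^j(h-\tfrac12|{\bf u}|^2)$, precisely the flux in (\ref{eq:fh5a}). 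The main obstacle is the algebraic bookkeeping in this last step, where the Maxwell stress and Lagrangian contributions must conspire to deliver the clean enthalpy-minus-kinetic-energy coefficient of $B^j$; everything else follows from the Lie-dragging identities already established earlier in the paper.
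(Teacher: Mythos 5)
Your proposal is correct and follows essentially the same route as the paper's own proof: verify that the divergence-symmetry condition (\ref{eq:3.25n}) collapses to $\rho T\hat{V}^{\bf x}{\bf\cdot}\nabla S = T{\bf B}{\bf\cdot}\nabla S$ via Gauss's law, the Lie-dragging of ${\bf b}={\bf B}/\rho$, and the vanishing of $({\bf B}/\rho)\times{\bf B}$, then match this with $-\nabla_\alpha\Lambda^\alpha$ using $d(({\bf B}{\bf\cdot}\nabla S)/\rho)/dt=0$ and $dr/dt=-T$, and finally apply Proposition \ref{prop8.3}. Your explicit substitution into (\ref{eq:3.17n})--(\ref{eq:3.18n}) to recover the density and flux is a detail the paper leaves implicit, and your bookkeeping there (cancellation of the $B^2/\mu$ terms and emergence of $h-\tfrac12|{\bf u}|^2$) is accurate.
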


\begin{proof}
The vector field $\hat{V}^{\bf x}={\bf b}={\bf B}/\rho$ is Lie dragged with 
the flow, and satisfies the equation:
\beqn
\frac{d{\bf b}}{dt}-{\bf b}{\bf\cdot}\nabla {\bf u}\equiv \deriv{\bf b}{t}
+\left[ {\bf u},{\bf b}\right]=0. \label{eq:chn3}
\eeqn
Also note that $\nabla{\bf\cdot}\left(\rho \hat{V}^{\bf x}\right)
=\nabla{\bf\cdot}{\bf B}=0$ 
(Gauss's law). Thus, the left hand side of (\ref{eq:3.25n}) reduces to:
\beqn
\rho T\hat{V}^{\bf x}{\bf\cdot}\nabla S=T {\bf B}{\bf\cdot}\nabla S. 
\label{eq:chn4}
\eeqn
The  divergence term on the right hand side of (\ref{eq:3.25n}) reduces to:
\begin{align}
-\nabla_\alpha\Lambda^\alpha
=&-\left(\derv{t} 
\left( r{\bf B}{\bf\cdot}\nabla S\right)
+\nabla{\bf\cdot}\left({\bf u} 
(r{\bf B}{\bf\cdot}\nabla S)\right)\right)
\equiv&-\left({\bf B}{\bf\cdot}\nabla S\right) \frac{dr}{dt}
=T\left({\bf B}{\bf\cdot} \nabla S\right). \label{eq:chn5}
\end{align}
Using (\ref{eq:chn3})-(\ref{eq:chn5}) in (\ref{eq:3.25n}) shows that 
the Lie invariance condition (\ref{eq:3.25n}) 
is satisfied. Use of (\ref{eq:chn1})-(\ref{eq:chn2}) 
in Noether's theorem (Proposition \ref{prop8.3}) gives the 
nonlocal cross helicity conservation law (\ref{eq:gch1}) or 
(\ref{eq:fh5a}).
\end{proof}

\begin{proposition}\label{prop8.7}
The divergence symmetry condition (\ref{eq:3.25n}) has solutions:
\begin{align}
&\hat{V}^{\bf x}={\bf u},\nonumber\\
&\Lambda^0=-\left(\frac{1}{2}\rho |{\bf u}|^2-\varepsilon(\rho,S)
-\rho\Phi({\bf x})-\frac{B^2}{2\mu_0}\right) -\rho f({\bf x}_0), \nonumber\\
&\Lambda^i=-\rho u^i f({\bf x}_0), \label{eq:3.30n}
\end{align}
where $f({\bf x}_0)$ is an arbitrary function of ${\bf x}_0$. The 
gauge potential $\Lambda^0=-L-\rho f({\bf x}_0)$ 
where $L$ is the Eulerian MHD Lagrangian density, including an external 
gravitational potential $\Phi({\bf x})$. The conservation 
laws associated with the solutions (\ref{eq:3.30n}) are the MHD energy 
conservation equation:
\begin{equation}
\derv{t}\left(\frac{1}{2}\rho |{\bf u}|^2+\varepsilon(\rho,S)+\rho\Phi({\bf x})
+\frac{B^2}{2\mu_0}\right) 
+\nabla{\bf\cdot}\left[\rho {\bf u}\left(\frac{1}{2}|{\bf u}|^2+h+\Phi\right)
+\frac{{\bf E}\times{\bf B}}{\mu_0}\right]=0, \label{eq:3.31n}
\end{equation}
and the conservation law:
\begin{equation}
\derv{t}[\rho f({\bf x}_0)]+\nabla{\bf\cdot}[\rho{\bf u}f({\bf x}_0)]=0\quad
\hbox{or}\quad \left(\derv{t}+{\bf u}{\bf\cdot}\nabla\right) f({\bf x}_0)=0. 
\label{eq:3.32n}
\end{equation}
\end{proposition}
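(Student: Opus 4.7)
The plan is to verify that $\hat{V}^{\bf x}={\bf u}$ together with the gauge potentials in (\ref{eq:3.30n}) satisfies the Eulerian divergence-symmetry condition (\ref{eq:3.25n}), and then to feed this triple into the Eulerian Noether formula of Proposition \ref{prop8.3} with $V^t=V^{\bf x}=0$. Because $f({\bf x}_0)$ appears linearly and is arbitrary, the single conservation law produced by Noether's theorem will separate into the two independent statements (\ref{eq:3.31n}) and (\ref{eq:3.32n}).

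On the left-hand side of (\ref{eq:3.25n}), I would substitute $\hat{V}^{\bf x}={\bf u}$ and reduce each term on shell via mass continuity ($\nabla{\bf\cdot}(\rho{\bf u})=-\partial_t\rho$), entropy advection (${\bf u}{\bf\cdot}\nabla S=-\partial_t S$), the material-derivative identity ($d{\bf u}/dt-{\bf u}{\bf\cdot}\nabla{\bf u}=\partial_t{\bf u}$), and Faraday's equation (\ref{eq:2.1}) (which turns the magnetic bracket into $-\partial_t{\bf B}$). Using the thermodynamic identities $\varepsilon_\rho=h$ and $\varepsilon_S=\rho T$ from (\ref{eq:2.4}) to recombine $h\partial_t\rho+\rho T\partial_t S$ into $\partial_t\varepsilon$, the four contributions collapse into the single time derivative $\partial_t{\cal L}$, where ${\cal L}=\frac{1}{2}\rho|{\bf u}|^2-\varepsilon(\rho,S)-\rho\Phi({\bf x})-B^2/(2\mu_0)$ is the Eulerian MHD Lagrangian density. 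For the right-hand side I would compute $-\nabla_\alpha\Lambda^\alpha=-\partial_t\Lambda^0-\partial_i\Lambda^i$: the $f$-dependent pieces reassemble as $-[\partial_t(\rho f)+\nabla{\bf\cdot}(\rho{\bf u}f)]=-\rho\,df/dt$, which vanishes since $f=f({\bf x}_0)$ is constant along Lagrangian flowlines, while the $-{\cal L}$ piece of $\Lambda^0$ contributes exactly $\partial_t{\cal L}$. The two sides therefore agree and (\ref{eq:3.25n}) is satisfied.

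Feeding $\hat{V}^{\bf x}={\bf u}$ and the $\Lambda^\alpha$ of (\ref{eq:3.30n}) into Proposition \ref{prop8.3} then yields $F^0=\rho|{\bf u}|^2+\Lambda^0=\frac{1}{2}\rho|{\bf u}|^2+\varepsilon+\rho\Phi+B^2/(2\mu_0)-\rho f$; contracting $u^k$ into the MHD stress tensor $T^{jk}$ of (\ref{eq:3.19n}) and invoking the Poynting identity ${\bf u}B^2-{\bf B}({\bf u}{\bf\cdot}{\bf B})={\bf E}\times{\bf B}$ (from ${\bf E}=-{\bf u}\times{\bf B}$) gives $F^j=\rho u^j(\frac{1}{2}|{\bf u}|^2+h+\Phi)+({\bf E}\times{\bf B})^j/\mu_0-\rho u^j f$. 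Because $f({\bf x}_0)$ is arbitrary, $\partial_t F^0+\partial_j F^j=0$ splits cleanly into its $f$-independent piece, which is the MHD energy equation (\ref{eq:3.31n}), and its $f$-linear piece $\partial_t(\rho f)+\nabla{\bf\cdot}(\rho{\bf u}f)=0$, i.e.\ the label-advection law (\ref{eq:3.32n}).

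The main obstacle is purely bookkeeping: organizing the on-shell substitutions so that the left-hand side of (\ref{eq:3.25n}) collapses cleanly into $\partial_t{\cal L}$, and recognizing the Poynting identity that turns the magnetic contraction in $u^k T^{jk}$ into $({\bf E}\times{\bf B})^j/\mu_0$. Nothing is conceptually deep, since $\hat{V}^{\bf x}={\bf u}$ is the canonical (evolutionary) form of a time translation, so Noether's theorem is obliged to produce energy conservation, accompanied by the trivial label-advection conservation law parametrized by the arbitrary Lagrangian scalar $f({\bf x}_0)$.
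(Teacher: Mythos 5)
Your proof is correct and follows essentially the same route as the paper's: the same on-shell reductions (mass continuity, the first law of thermodynamics, and Faraday's equation/Poynting's theorem) collapse the left-hand side of (\ref{eq:3.25n}) to $\partial {\cal L}/\partial t$, and the same substitution into Proposition \ref{prop8.3} produces the conserved density and flux, which then split into (\ref{eq:3.31n}) and (\ref{eq:3.32n}). The only (immaterial) difference is the sign of the arbitrary function $f({\bf x}_0)$ in your density and flux relative to (\ref{eq:3.38n})--(\ref{eq:3.39n}), which changes nothing since $f$ is arbitrary.
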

\begin{remark}{\bf 1}
The MHD energy conservation equation (\ref{eq:3.31n}) is usually associated 
with the time translation symmetry of the action, 
for which $V^t=1$, $V^{\bf x}=0$, $V^\psi=0$ ($\psi$ is any of the MHD 
physical variables $\rho$, ${\bf u}$, ${\bf B}$ and $S$), and 
$\Lambda^\alpha=0$ ($\alpha=0,1,2,3$). The result (\ref{eq:3.31n}) 
shows that the energy conservation law (\ref{eq:3.31n}) also arises 
as a gauge symmetry of the action associated with the fluid 
relabelling symmetry. 
\end{remark}

\begin{remark}{\bf 2}
The conservation law (\ref{eq:3.32n}) states that an arbitrary 
function $f({\bf x}_0)$ of the Lagrange labels ${\bf x}_0$  is advected 
with the flow. 
Non-trivial 
examples of this conservation law are obtained for: 
\begin{equation}
f_1({\bf x}_0)=\frac{{\bf B\cdot}\nabla S}{\rho}\equiv 
\frac{{\bf B}_0({\bf x}_0){\bf\cdot}\nabla_0 S({\bf x}_0)}
{\rho_0({\bf x}_0)}, 
\quad
f_2({\bf x}_0)=\frac{\bf A\cdot B}{\rho}
\equiv \frac{{\bf A}_0({\bf x}_0){\bf\cdot}
{\bf B}_0({\bf x}_0)}{\rho_0({\bf x}_0)}, \label{eq:3.33n}
\end{equation}
where ${\bf A}$ is chosen so that 
${\bf A\cdot}d{\bf x}={\bf A}_0({\bf x}_0){\bf\cdot}d{\bf x}_0$ 
is advected with the flow. 
\end{remark}

\begin{proof}
To obtain the solutions (\ref{eq:3.30n}) of the Lie determining equations 
(\ref{eq:3.25n}) for a divergence symmetry of the action, we note that with 
$\hat{V}^{\bf x}={\bf u}$, (\ref{eq:3.25n}) reduces to:
\begin{align}
&\nabla{\bf\cdot} (\rho{\bf u})
\left(h+\Phi({\bf x})-\frac{1}{2}|{\bf u}|^2\right) 
+\rho T{\bf u}{\bf\cdot}\nabla S+\rho {\bf u}{\bf\cdot}\deriv{\bf u}{t}
\nonumber\\
&+\frac{\bf B}{\mu_0}{\bf\cdot}
\left[-\nabla\times({\bf u}\times{\bf B}) 
+{\bf u}(\nabla{\bf\cdot}{\bf B})\right]=-\nabla_\alpha \Lambda^\alpha. 
\label{eq:3.34n}
\end{align}
Next we use the identities:
\begin{align}
&T_1=\rho{\bf u}{\bf\cdot}\deriv{\bf u}{t}
-\frac{1}{2}|{\bf u}|^2\nabla{\bf\cdot}(\rho{\bf u})
\equiv \derv{t}\left(\frac{1}{2}\rho |{\bf u}|^2\right)
-\frac{1}{2}|{\bf u}|^2
\left[\deriv{\rho}{t}+\nabla{\bf\cdot}(\rho {\bf u})\right], \nonumber\\
&T_2=\nabla{\bf\cdot}(\rho {\bf u}) h+\rho T{\bf u}{\bf\cdot}\nabla S
=\nabla{\bf\cdot}(\rho {\bf u}h) -{\bf u}{\bf\cdot}\nabla p
\equiv -\deriv{\varepsilon}{t}, 
\nonumber\\
&T_3=\nabla{\bf\cdot}(\rho {\bf u}) \Phi({\bf x})\equiv -\deriv{\rho}{t} \Phi
=-\derv{t}[\rho \Phi({\bf x})], \nonumber\\
&T_4=\frac{\bf B}{\mu_0}{\bf\cdot}\left[-\nabla\times({\bf u}\times{\bf B}) 
+{\bf u}(\nabla{\bf\cdot}{\bf B})\right]\equiv 
-\derv{t}\left(\frac{B^2}{2\mu_0}\right). \label{eq:3.35n}
\end{align}
In (\ref{eq:3.35n}) use of the mass continuity equation (\ref{eq:2.1}) 
gives $T_1=\partial((1/2)\rho |{\bf u}|^2)/\partial t$. 
The term $T_2$ in (\ref{eq:3.35n})  reduces to 
$-\partial\varepsilon/\partial t$, where we have used the internal 
energy evolution equation for the gas:
\beqn
\deriv{\varepsilon}{t}+\nabla{\bf\cdot}(\rho {\bf u}h)
={\bf u}{\bf\cdot}\nabla p,  \label{eq:3.36n}
\eeqn
where $\varepsilon=\varepsilon(\rho,S)$.  
 The expression $T_4$ in (\ref{eq:3.35n}), using Faraday's equation 
reduces to $-\partial(B^2/2\mu_0)/\partial t$. This result 
is Poynting's theorem. 

Using the results (\ref{eq:3.35n}),  (\ref{eq:3.34n}) reduces to:
\beqn
\derv{t}\left(\frac{1}{2}\rho |{\bf u}|^2-\varepsilon(\rho,S)-\rho\Phi({\bf x})
-\frac{B^2}{2\mu_0}\right)=-\left(\deriv{\Lambda^0}{t}
+\deriv{\Lambda^i}{x^i}\right). 
\label{eq:3.37n}
\eeqn
Equation (\ref{eq:3.37n}) has solutions of the form (\ref{eq:3.30n}). 
 
The total energy conservation law (\ref{eq:3.31n}) and the Lagrangian 
advection conservation law (\ref{eq:3.32n}) now follow by using 
the symmetry results (\ref{eq:3.30n}) in Noether's theorem 
(proposition 6.3). 
 From (\ref{eq:3.17n})-(\ref{eq:3.18n}) we find:
\begin{align}
&F^0=\left(\frac{1}{2}\rho |{\bf u}|^2+\varepsilon+\rho\Phi
+\frac{B^2}{2\mu_0}\right)
+\rho f({\bf x}_0), \label{eq:3.38n}\\
&{\bf F}=\biggl[\rho{\bf u}
\left(\frac{1}{2}\rho |{\bf u}|^2
+h+\Phi\right)+\frac{{\bf E}\times{\bf B}}{\mu_0}\biggr] 
+\rho {\bf u} f({\bf x}_0), \label{eq:3.39n}
\end{align}
where ${\bf F}=(F^1,F^2,F^3)$ is the spatial flux and 
${\bf E}=-({\bf u}\times{\bf B})$ is the electric field. 
The MHD energy conservation law (\ref{eq:3.31n}) is obtained by 
setting $f({\bf x}_0)=0$ in (\ref{eq:3.38n})-(\ref{eq:3.39n}) 
and using (\ref{eq:3.38n})-(\ref{eq:3.39n}) for 
$F^0$ and ${\bf F}$ in (\ref{eq:3.16n}). 
Similarly, the conservation law (\ref{eq:3.32n}) for $f({\bf x}_0)$ 
is obtained by using (\ref{eq:3.16n}).  
This completes the proof. 
\end{proof}

\section{Euler-Poincar\'e Equation Approach}
Our analysis in this section is based in part, on the analysis of 
Holm et al. (1998) and Cotter and Holm (2012). 
 In action principles in MHD and gas dynamics, it is useful
to use both Lagrangian and Eulerian variations. The Euler-Poincar\'e approach
uses Eulerian variations in which ${\bf x}$ is held constant. In 
Section 7.1 we derive the MHD Euler-Poincar\'e equation or 
Eulerian momentum equation for MHD (see also Holm et al. (1998)
for a similar approach). In Section 7.2, we give an analysis 
of Noether's second theorem for MHD and fluid relabelling symmetries 
which is similar to the analysis of Cotter and Holm (2012). 
The results from Noether's second theorem using the Euler-Poincar\'e 
approach  
overlap with the more classical 
physics approach in Section 6. However, there are some subtle 
issues in Noether's second theorem that arise in this 
section, which were not addressed in Section 6. 

 The solution of $d{\bf x}/dt={\bf u}({\bf x},t)$
with ${\bf x}={\bf x}_0$ at $t=0$ is written as
${\bf x}=g{\bf x}_0={\bf X}({\bf x}_0,t)$. 
The inverse map ${\bf x}_0=g^{-1} {\bf x}$ 
defines ${\bf x}_0={\bf x}_0({\bf x},t)$.
The Lagrange label ${\bf x}_0$ is advected with the flow:
\beqn
\left(\derv{t}+{\bf u}{\bf\cdot}\nabla\right){\bf x}_0=\left(\derv{t}+
{\cal L}_{\bf u}\right) {\bf x}_0=0, \label{eq:ep1}
\eeqn

Write ${\bf x}=g{\bf x}_0$, ${\bf x}_0=g^{-1}{\bf x}$.
Notice that $\dot{\bf x}_0=(g^{-1})^{\bf\dot{}}{\bf x}=-g^{-1}g g^{-1}{\bf x}_0$
(use $g^{-1}g=e$ where $e$ is the identity).
 Here $\dot{\bf x}_0=\partial {\bf x}_0/\partial t$
where ${\bf x}$ is held constant. Thus,
\beqn
\dot{\bf x}_0=-g^{-1}\dot{g}g^{-1} g{\bf x}_0
=-g^{-1}\dot{g}{\bf x}_0=-{\cal L}_{\bf u}{\bf x}_0. \label{eq:ep2}
\eeqn
We identify
\beqn
\boldsymbol{\xi}={\cal L}_{\bf u}={\bf u}{\bf\cdot}\nabla\equiv g^{-1}\dot{g},
 \label{eq:ep3}
\eeqn
with the fluid velocity ${\bf u}$.
Note $\boldsymbol{\xi}=g^{-1} \dot{g}$ is left
invariant vector field.
 Similarly, for a geometrical object Lie dragged with the flow:
\beqn
\left(\derv{t}+{\cal L}_{\bf u}\right) a=0. \label{eq:ep4}
\eeqn
Let $a_0=ga$ then $a=g^{-1}a_0$ and
\beqn
\delta a=\delta\left(g^{-1}\right)a_0
=-g^{-1}\delta g g^{-1}a_0=-g^{-1}\delta g\ a=-{\cal L}_\eta (a). \label{eq:ep5}
\eeqn
We write
\beqn
\eta=g^{-1}\delta g. \label{eq:ep6}
\eeqn
as the vector field associated with the variations. Note $\eta$ is a left
invariant vector field (i.e. $(hg)^{-1}\delta(hg)=g^{-1}\delta g$, 
assuming that $\delta h=0$). 

To compute $\delta\xi$ where $\xi=g^{-1}\dot{g}$ we note:
\beqn
\delta\xi=\delta g^{-1} \dot{g}+g^{-1}\delta\dot{g}
=-(g^{-1}\delta g g^{-1})\dot{g}+g^{-1}\delta\dot{g}, \label{eq:ep7}
\eeqn
 which gives:
\beqn
\delta\xi=-\eta\xi+g^{-1}\delta\dot{g}, \label{eq:ep8}
\eeqn

Similarly, for $\eta=g^{-1}\delta g$ we find
\beqn
\dot{\eta}=(g^{-1})^{\bf\dot{}}\delta g+g^{-1}\delta\dot{g}
=-g^{-1}\dot{g}g^{-1}\delta g+g^{-1}\delta\dot{g}, \label{eq:ep9}
\eeqn
which gives:
\beqn
\dot{\eta}=-\xi\eta+g^{-1}\delta\dot{g}. \label{eq:ep10}
\eeqn
Subtract (\ref{eq:ep10}) from (\ref{eq:ep8}) gives:
\beqn
\delta\xi=\dot{\eta}+\xi\eta-\eta\xi\equiv \dot{\eta}+[\xi,\eta]_L.
\label{eq:ep11}
\eeqn
where $[\xi,\eta]_L=ad_{\xi}(\eta)_L$ is the left Lie bracket. The right
Lie bracket $[\xi,\eta]_R=-[\xi,\eta]_L$.

\subsection{The Euler-Poincar\'e equation}

Consider the  variational principle
 (Holm et al. (1998), Cotter and Holm (2012))
in which the action: 
\beqn
J=\int \ell ({\bf u},a)\ d^3x\ dt, \label{eq:action1}
\eeqn
is stationary, i.e.
\beqn
\delta J=\int \left(\frac{\delta \ell}{\delta {\bf u}}
{\bf\cdot}\delta {\bf u}
+\frac{\delta\ell}{\delta a}\delta a\right)\ d^3x\ dt
\equiv \int\left\langle \frac{\delta \ell}{\delta {\bf u}}, \delta {\bf u}
\right\rangle
+\left\langle \frac{\delta\ell}{\delta a}, \delta a\right\rangle\ dt=0.
\label{eq:action2}
\eeqn

However from (\ref{eq:ep11}) with $\xi={\bf u}$, and (\ref{eq:ep5}),   
\beqn
\delta {\bf u}=\dot{\eta}+[{\bf u},\eta],\quad \delta a=-{\cal L}_\eta(a).
\label{eq:action3}
\eeqn
Thus
\beqn
\delta J=\int \left\langle
\frac{\delta \ell}{\delta {\bf u}}, \dot{\eta}
+[{\bf u},\eta]\right\rangle
+\left\langle \frac{\delta\ell}{\delta a},
-{\cal L}_\eta(a)\right\rangle\ dt. \label{eq:action4}
\eeqn
\ Integrate (\ref{eq:action4}) by parts, and use $ad_{\bf u}(\eta) 
=[{\bf u},\eta]$ to obtain:
\beqnar
\delta J=&&\int\ \biggl\{
\left(\derv{t}\left\langle
\frac{\delta \ell}{\delta {\bf u}},
\eta\right\rangle
-\left\langle\ \eta,\derv{t}\left(\frac{\delta \ell}{\delta {\bf u}}\right)
\right\rangle\right)
+\left\langle \frac{\delta \ell}{\delta{\bf u}},ad_{\bf u}(\eta)
\right\rangle\nonumber\\\
&&-\left\langle\frac{\delta\ell}{\delta a}, {\cal L}_{\eta}(a)\right\rangle
\biggr\}\  dt.
\label{eq:action5}
\eeqnar
for $\delta J$.

In the further analysis of (\ref{eq:action5}) it is useful to introduce 
the diamond operator. The diamond operator $\diamond$  
 allows one to take the adjoint of the 
$\langle \delta\ell/\delta {\bf u},ad_{\bf u}(\eta)\rangle$ 
term in (\ref{eq:action5}) and thereby isolate 
its $\eta$ component, by using the formula
\beqn
\left\langle\frac{\delta\ell}{\delta a}\diamond a,\eta\right\rangle=-
\left\langle\frac{\delta \ell}{\delta {a}},{\cal L}_{\eta}(a)\right\rangle.
\label{eq:action5a}
\eeqn
A more formal definition of the diamond operator is given below. 

\begin{definition}
The diamond operator $\diamond$ is  minus the dual of the Lie 
derivative, with respect to the pairing induced by the variational derivative 
$p=\delta\ell/\delta q$, namely:
\beqn
\langle p\diamond q,\xi\rangle=\langle p,-{\cal L}_{\xi}(q)\rangle. 
\label{eq:action5b}
\eeqn
\end{definition}
Using (\ref{eq:action5a}) and the definition of $ad^*_{\bf u}$:
\beqn
 \left\langle\ ad_{\bf u}^*\left(\frac{\delta \ell}{\delta{\bf u}}\right)
,\eta\right\rangle
=\left\langle \frac{\delta \ell}{\delta{\bf u}},ad_{\bf u}(\eta)
\right\rangle, \label{eq:action6}
\eeqn
in (\ref{eq:action5}) where $\diamond$ is the diamond operator (this involves
 integration by parts, and dropping surface terms). We obtain:
\beqn
\delta J=\int\left\langle\eta,-\derv{t}
\left(\frac{\delta \ell}{\delta {\bf u}}\right)+ad_{\bf u}^*
\left(\frac{\delta \ell}{\delta{\bf u}}\right)+
\frac{\delta\ell}{\delta a}\diamond a\right\rangle\ dt
+\left[\left\langle\frac{\delta \ell}{\delta {\bf u}},
\eta\right\rangle\right]_{t_0}^{t_1}. \label{eq:action7}
\eeqn
Assuming the surface term vanishes in (\ref{eq:action7}), and $\eta$
is arbitrary, then $\delta J=0$ implies the Euler-Poincar\'e equation:
\beqn
\derv{t}\left(\frac{\delta \ell}{\delta {\bf u}}\right)
+ad_{\bf u}^*\left(\frac{\delta \ell}{\delta {\bf u}}\right)_R
=\frac{\delta\ell}{\delta a}\diamond a, \label{eq:action8}
\eeqn
where
\beqn
ad_{\bf u}^*\left(\frac{\delta \ell}{\delta {\bf u}}\right)_R
=-ad_{\bf u}^*\left(\frac{\delta \ell}{\delta {\bf u}}\right)_L.
\label{eq:action9}
\eeqn
Here, (\ref{eq:action8}) is the Euler Poincar\'e equation 
for the variational principle $\delta J=0$ (Holm  
et al. (1998)).
In (\ref{eq:action8}), $d/dt\equiv \partial/\partial t$ 
keeping ${\bf x}$ constant.
Below, we show that: 
\beqn
ad_{\bf u}^*\left(\frac{\delta \ell}{\delta {\bf u}}\right)_R=
\nabla{\bf\cdot}\left({\bf u}\otimes \frac{\delta \ell}{\delta {\bf u}}\right)
+(\nabla{\bf u})^T{\bf\cdot} \left(\frac{\delta \ell}{\delta {\bf u}}\right).
\label{eq:action10}
\eeqn

\begin{proof}
 To prove (\ref{eq:action10}) let
${\bf m}=\delta\ell/\delta {\bf u}$.   We obtain:
\beqnar
&&\left\langle\eta,ad_{\bf u}^*({\bf m})_R\right\rangle
=\left\langle ad_{\bf u}(\eta), {\bf m}\right\rangle
=\left\langle -[{\bf u},\eta]_L,{\bf m}\right\rangle\nonumber\\
&&=-\int\left[\left({\bf u\cdot}\nabla\boldsymbol{\eta}
-\boldsymbol{\eta}{\bf\cdot}\nabla{\bf u}\right)
\nabla\right]{\mathrel{\lrcorner}}
{\bf m\cdot}d{\bf x}\ d^3x\nonumber\\
&&=\int-\nabla{\bf\cdot} ({\bf u} ({\bf m\cdot}{\boldsymbol{\eta}})
)+\boldsymbol{\eta}{\bf\cdot}\left(\nabla{\bf\cdot}({\bf u}\otimes{\bf m})
+(\nabla{\bf u})^T{\bf\cdot m}\right)\ d^3x, \nonumber\\
&&=\left\langle \boldsymbol{\eta},\nabla{\bf\cdot}({\bf u}\otimes{\bf m})
+(\nabla{\bf u})^T{\bf\cdot m}\right\rangle, \label{eq:action11}
\eeqnar
where we dropped the surface term. This proves (\ref{eq:action10}).
\end{proof}

It can be shown that:
\beqn
{\cal L}_{\bf u}({\bf m\cdot}d{\bf x}\otimes dV)
=\left(\nabla{\bf\cdot}({\bf u}\otimes{\bf m})
+(\nabla{\bf u})^T{\bf\cdot m}\right){\bf\cdot}d{\bf x}\otimes dV.
\label{eq:action12}
\eeqn

For MHD the Lagrange density $\ell$ is given by:
\beqn
\ell=\frac{1}{2}\rho u^2-\varepsilon(\rho,S)-\frac{B^2}{2\mu_0}.
\label{eq:action13}
\eeqn

We now determine the different terms in the Euler-Poincar\'e equation 
(\ref{eq:action8}). 

From (\ref{eq:action2}), the variation of the action
$\delta J=\delta J_{\bf u}+\delta J_a$  where:
\beqnar
&&\delta J_{\bf u}=\int \frac{\delta\ell}{\delta{\bf u}}
{\bf\cdot}\delta {\bf u} \ d^3x\ dt,
\nonumber\\
&&\delta J_a=\int \left(\frac{\delta\ell}{\delta\rho}\delta\rho
+\frac{\delta\ell}{\delta S}\delta S
+\frac{\delta\ell}{\delta{\bf B}}{\bf\cdot}\delta {\bf B}\right)\ d^3x\ dt.
\label{eq:action14}
\eeqnar
 From (\ref{eq:action13}) we obtain:
\beqnar
&&\frac{\delta\ell}{\delta\rho}=\frac{1}{2}u^2-\varepsilon_\rho=\frac{1}{2}u^2-h,
\quad
\frac{\delta\ell}{\delta{\bf u}}\equiv {\bf m}=\rho{\bf u}, \nonumber\\
&&\frac{\delta\ell}{\delta S}=-\varepsilon_S=-\rho T, \quad
\frac{\delta\ell}{\delta{\bf B}}=-\frac{\bf B}{\mu_0}, \label{eq:action15}
\eeqnar
where $T$ is the temperature  and $h$ is the enthalpy of the gas.

Using the formulae:
\beqnar
&&\delta\left(\rho d^3x\right)=-{\cal L}_{\bf u}\left(\rho d^3 x\right)
=-\nabla{\bf\cdot}(\rho {\bf u})\ d^3x, \nonumber\\
&&\delta S=-{\cal L}_{\bf u} (S)=
-{\bf u}{\bf\cdot}\nabla S, \nonumber\\
&&\delta ({\bf B}{\bf\cdot}d{\bf S})
=-{\cal L}_{\bf u}({\bf B}{\bf\cdot}d{\bf S})\nonumber\\
&&=[\nabla\times({\bf u}\times{\bf B})-{\bf u}(\nabla{\bf\cdot B})]
{\bf\cdot}d{\bf S},
  \label{eq:action16}
\eeqnar
we obtain:
\beqnar
&&\delta\rho=-\nabla{\bf\cdot}(\rho{\bf u}),\quad
\delta S=-{\bf u}{\bf\cdot}\nabla S, \nonumber\\
&&\delta{\bf B}
=[\nabla\times({\bf u}\times{\bf B})-{\bf u}(\nabla{\bf\cdot B})].
\label{eq:action17}
\eeqnar
 Note that $\delta\rho$, $\delta S$ and $\delta {\bf B}$ are
  Eulerian variations in which
$\Delta {\bf x}^i=-x_{ij} \delta x_0^j$ is replaced by $u^i$,
where $\Delta{\bf x}$ is the Lagrangian variation of ${\bf x}$,
and $x_{ij}=\partial x^i/\partial x_0^j$
(e.g. Webb et al. (2005a,b), Newcomb (1962)).
Using $\delta\ell/\delta{\bf u}=\rho{\bf u}={\bf m}$
 in (\ref{eq:action10}) gives:
\beqn
ad_{\bf u}^*\left(\frac{\delta\ell}{\delta{\bf u}}\right)_R
=\nabla{\bf\cdot}(\rho {\bf u}\otimes{\bf u})
+\rho\nabla\left(\frac{1}{2}|{\bf u}|^2\right), \label{eq:action18}
\eeqn
for the advected term on the left hand side  of the Euler-Poincar\'e 
equation  (\ref{eq:action8}).

Next we find the 
 $(\delta\ell/\delta a)\diamond a$ term on
right hand side  of (\ref{eq:action8}). We obtain:
\beqnar
\frac{\delta\ell}{\delta a}\delta a
=&&\frac{\delta\ell}{\delta \rho} \delta\rho
+\frac{\delta\ell}{\delta S} \delta S
+\frac{\delta\ell}{\delta {\bf B}}{\bf\cdot} \delta {\bf B}\nonumber\\
=&&\frac{\delta\ell}{\delta \rho}\left(-\nabla{\bf\cdot}(\rho{\bf u})\right)
+\frac{\delta\ell}{\delta S} \left(-{\bf u}{\bf\cdot}\nabla S\right)\nonumber\\
&&+\frac{\delta\ell}{\delta {\bf B}}
{\bf\cdot}\left[\nabla\times({\bf u}\times{\bf B})
-{\bf u}\nabla{\bf\cdot B}\right].
\label{eq:action19}
\eeqnar
Thus
\beqnar
&&\frac{\delta\ell}{\delta a}\delta a
=-\nabla{\bf\cdot}\left(\rho{\bf u}\frac{\delta\ell}{\delta \rho}\right)
+\nabla{\bf\cdot}
\left[({\bf u}\times{\bf B})\times \frac{\delta\ell}{\delta {\bf B}}\right]
\nonumber\\
&&+{\bf u}{\bf\cdot}
\biggl\{\rho\nabla\left(\frac{\delta\ell}{\delta \rho}\right)
-\frac{\delta\ell}{\delta S}\nabla S
+{\bf B}
\times\left(\nabla\times\left(\frac{\delta\ell}{\delta {\bf B}}\right)
\right)
-\frac{\delta\ell}{\delta {\bf B}}\nabla{\bf\cdot}{\bf B}\biggr\}.
\label{eq:action20}
\eeqnar
From (\ref{eq:action20}) we find:
\beqn
\frac{\delta\ell}{\delta a}\diamond  a
=\rho\nabla\left(\frac{\delta\ell}{\delta \rho}\right)
-\frac{\delta\ell}{\delta S}\nabla S
+{\bf B}
\times\left(\nabla\times\left(\frac{\delta\ell}{\delta {\bf B}}\right)
\right)
-\frac{\delta\ell}{\delta {\bf B}}\nabla{\bf\cdot}{\bf B}. \label{eq:action21}
\eeqn
 Integrate (\ref{eq:action20}) over $d^3x$
over the volume, $V$, drop surface terms, and set $\eta\to {\bf u}$
in (\ref{eq:action7}) gives the result (\ref{eq:action21}) 
for $\delta\ell/\delta a\diamond a$. 

Using the first law of thermodynamics in the form:
$T\nabla S-\nabla h=-\nabla p/\rho$ and
the expressions  (\ref{eq:action15}) for
$\delta\ell/\delta\rho$, $\delta\ell/\delta S$, $\delta\ell/\delta {\bf B}$
in (\ref{eq:action21}) gives:
\beqn
\frac{\delta\ell}{\delta a}\diamond  a
=\left(-\nabla p+{\bf J}\times{\bf B}
+\frac{\bf B}{\mu_0}\nabla{\bf\cdot}{\bf B}\right)
+\rho\nabla\left(\frac{1}{2}
|{\bf u}|^2\right). \label{eq:action22}
\eeqn

 Using  $ad^*_{\bf u}(\delta\ell/\delta {\bf u})_R$ from  
 (\ref{eq:action18}) and  $\delta\ell/\delta a\diamond a$ from
 (\ref{eq:action22}) in the Euler Poincar\'e  equation
(\ref{eq:action8}) gives the MHD momentum equation in the form:
\beqn
\derv{t}\left(\rho{\bf u}\right)+\nabla{\bf\cdot}
\left(\rho{\bf u}\otimes{\bf u}\right)
=-\nabla p+{\bf J}\times{\bf B}
+\frac{\bf B}{\mu_0}\nabla{\bf\cdot}{\bf B}.
 \label{eq:action23}
\eeqn
The momentum equation (\ref{eq:action23}) can also 
be written in the conservative form:
\beqn
\derv{t}\left(\rho{\bf u}\right)
+\nabla{\bf\cdot}\left(\rho{\bf u}\otimes{\bf u}+\left(
p+\frac{B^2}{2\mu_0}\right){\sf I}
-\frac{{\bf B}\otimes{\bf B}}{\mu_0}\right)
=0,
\label{eq:action24}
\eeqn
where the magnetic terms involve the Maxwell stress energy tensor.
The above derivation of the Euler-Poincar\'e equation is essentially
that of Holm et al. (1998). It is also discussed by Cotter and Holm (2012)
in their analysis of symmetries and conservation laws associated with
advection of physical quantities i.e., the Tur and Yanovsky (1993)
conservation laws.

\subsection{Noether's second theorem}

Consider the application of the above ideas to obtain a version 
of Noether's second theorem associated with the symmetries 
$\boldsymbol{\eta}$. In the derivation 
of Noether's theorem, it is useful to keep track of all the 
surface or divergence terms that arise when integrating by parts. These terms 
are assumed to vanish in the derivation of the Euler-Poincar\'e 
equation (\ref{eq:action23}) or (\ref{eq:action24}). The variation of the 
action $\delta J$ is again given by (\ref{eq:action2}), which reduces 
to the result (\ref{eq:action4}), i.e.
\beqn
\delta J=\int \left\langle
\frac{\delta \ell}{\delta {\bf u}}, \dot{\eta}
+[{\bf u},\eta]\right\rangle
+\left\langle \frac{\delta\ell}{\delta a},
-{\cal L}_\eta(a)\right\rangle\ dt\equiv \delta J_u+\delta J_a, 
\label{eq:noeth1}
\eeqn
where $\delta J_u$ and $\delta J_a$ are given by (\ref{eq:action14}). 
Using integration by parts, the first term $\delta J_u$ in (\ref{eq:noeth1}) 
reduces to:
\beqnar
\delta J_u&=&-\int\left\langle \eta,\derv{t}
\left(\frac{\delta\ell}{\delta{\bf u}}\right)
+ad_{\bf u}^*\left(\frac{\delta\ell}{\delta{\bf u}}\right)_R\right\rangle 
dt\nonumber\\ 
&\ &+\int\derv{t}
\left(\boldsymbol{\eta}{\bf\cdot}\frac{\delta\ell}{\delta{\bf u}}\right)
+\nabla{\bf\cdot}\left[\left(\boldsymbol{\eta}{\bf\cdot}\frac{\delta\ell}{\delta{\bf u}}\right) {\bf u}\right]\ d^3x dt. 
\label{eq:noeth2}
\eeqnar

The variations of the $a$ variables is given by (\ref{eq:ep5}), i.e. 
$\delta a=-{\cal L}_\eta(a)$. Thus, we compute the variations 
$\delta(\rho d^3x)$, $\delta S$ and $\delta ({\bf B}{\bf\cdot}d{\bf S})$ 
as in (\ref{eq:action16}) 
but with ${\bf u}$ replaced by $\boldsymbol{\eta}$. The net 
result from (\ref{eq:action17}) is:
\beqnar
&&\delta\rho=-\nabla{\bf\cdot}(\rho\boldsymbol{\eta}),\quad
\delta S=-\boldsymbol{\eta}{\bf\cdot}\nabla S, \nonumber\\
&&\delta{\bf B}
=[\nabla\times(\boldsymbol{\eta}\times{\bf B})
-\boldsymbol{\eta}(\nabla{\bf\cdot B})].
\label{eq:noeth3}
\eeqnar
Using the results (\ref{eq:action15}) and (\ref{eq:noeth3}) we obtain
equation (\ref{eq:action19}) 
but with ${\bf u}$ replaced by $\boldsymbol{\eta}$. 
The net upshot is the result (\ref{eq:action20}) but 
with ${\bf u}$ replaced by 
$\boldsymbol{\eta}$, i.e.,
\beqnar
&&\frac{\delta\ell}{\delta a}\delta a
=-\nabla{\bf\cdot}
\left(\rho\boldsymbol{\eta}\frac{\delta\ell}{\delta \rho}\right)
+\nabla{\bf\cdot}
\left[(\boldsymbol{\eta}\times{\bf B})\times 
\frac{\delta\ell}{\delta {\bf B}}\right]
\nonumber\\
&&+\boldsymbol{\eta}{\bf\cdot}
\biggl\{\rho\nabla\left(\frac{\delta\ell}{\delta \rho}\right)
-\frac{\delta\ell}{\delta S}\nabla S
+{\bf B}
\times\left(\nabla\times\left(\frac{\delta\ell}{\delta {\bf B}}\right)
\right)
-\frac{\delta\ell}{\delta {\bf B}}\nabla{\bf\cdot}{\bf B}\biggr\}.
\label{eq:noeth4}
\eeqnar
Using (\ref{eq:noeth4}) we obtain:
\beqnar
\delta J_a&=&\int\frac{\delta\ell}{\delta a}\delta a\ d^3 x\ dt\nonumber\\
&\ &\int\left\langle\boldsymbol{\eta},\frac{\delta\ell}{\delta a}
\diamond a\right\rangle\ dt
+\int\nabla{\bf\cdot}
\left(-\rho\boldsymbol{\eta}\frac{\delta\ell}{\delta \rho}
+(\boldsymbol{\eta}\times{\bf B})\times
\frac{\delta\ell}{\delta {\bf B}}\right) d^3x\ dt, 
\label{eq:noeth5}
\eeqnar 
where $\delta\ell/\delta a\diamond a$ is given by (\ref{eq:action21}), 
or the coefficient of $\boldsymbol{\eta}$ in (\ref{eq:noeth4}). 
Adding (\ref{eq:noeth2}) and (\ref{eq:noeth5}) for $\delta J_u$ 
and $\delta J_a$ we obtain:
\begin{align}
&\delta J=\delta J_u+\delta J_a
=-\int\left\langle\boldsymbol{\eta}, 
\derv{t}\left(\frac{\delta \ell}{\delta {\bf u}}\right)
+ad_{\bf u}^*\left(\frac{\delta \ell}{\delta {\bf u}}\right)_R
-\frac{\delta\ell}{\delta a}\diamond a\right\rangle dt\nonumber\\
&+\int\int\left[\derv{t}\left(\boldsymbol{\eta}{\bf\cdot}
\frac{\delta\ell}{\delta{\bf u}}\right) 
+\nabla{\bf\cdot}\left( 
\boldsymbol{\eta}{\bf\cdot}\frac{\delta\ell}{\delta{\bf u}} {\bf u}
-\rho\boldsymbol{\eta}\frac{\delta\ell}{\delta \rho}
+(\boldsymbol{\eta}\times{\bf B})\times
\frac{\delta\ell}{\delta {\bf B}}\right)\right]d^3xdt.\nonumber\\
& \label{eq:noeth6}
\end{align}
We require $\delta J=0$ in (\ref{eq:noeth6}) in order for $\boldsymbol{\eta}$
to be a variational symmetry of the action. Because there are an 
infinite number of fluid relabeling symmetries $\boldsymbol{\eta}$ one cannot 
automatically assume that the Euler Lagrange equations (\ref{eq:action8}) 
are satisfied. We can write (\ref{eq:noeth6}) in the form:
\beqn
\delta J=\int\left\langle\boldsymbol{\eta},E_{\{{\bf u},a\}} 
\left(\ell\right)\right\rangle\ dt 
+\int\int \left(\deriv{D}{t}
+\nabla{\bf\cdot}{\bf F}\right)\ d^3x\ dt, \label{eq:noeth6a}
\eeqn
where
\beqn
E_{\{{\bf u},a\}}\left(\ell\right)=-\left\{\derv{t}\left(\frac{\delta \ell}{\delta {\bf u}}\right)
+ad_{\bf u}^*\left(\frac{\delta \ell}{\delta {\bf u}}\right)_R
-\frac{\delta\ell}{\delta a}\diamond a\right\}, \label{eq:noeth6b}
\eeqn
is the Euler operator and 
\beqn
D=\boldsymbol{\eta}{\bf\cdot}
\frac{\delta\ell}{\delta{\bf u}}, \quad
{\bf F}=\boldsymbol{\eta}{\bf\cdot}\frac{\delta\ell}{\delta{\bf u}} {\bf u}
-\rho\boldsymbol{\eta}\frac{\delta\ell}{\delta \rho}
+(\boldsymbol{\eta}\times{\bf B})\times
\frac{\delta\ell}{\delta {\bf B}}, \label{eq:noeth6c}
\eeqn
are the density $D$ and flux ${\bf F}$ surface terms. Further analysis 
of (\ref{eq:noeth6a}) involving integration by parts is necessary before 
one can arrive at a conservation law for particular Lie symmetries 
(which involve arbitrary function(s)). In particular, Padhye and Morrison 
(1996a,b) and Padhye (1998) use this procedure to obtain Ertel's 
theorem, from  fluid relabelling symmetries. 

The variational equation (\ref{eq:noeth6a}) can be written in the form:
\beqn
\delta J=\int\left\langle\boldsymbol{\eta},{\bf E}
\left(\ell\right)\right\rangle\ dt
+C(t)+\int\int\nabla{\bf\cdot}{\bf F}\ d^3x\ dt,  \label{eq:noeth6d}
\eeqn
where
\beqnar
C(t)&=&\int\int \deriv{D}{t}
\ d^3x\ dt\equiv 
\left[\left\langle{\boldsymbol{\eta}},
\frac{\delta\ell}{\delta{\bf u}}\right\rangle\right]_{t_0}^t,\nonumber\\
\langle\boldsymbol{\eta},\frac{\delta \ell}{\delta {\bf u}}\rangle 
&=&\int_V\ d^3x\ 
\left(\boldsymbol{\eta}{\bf\cdot}\frac{\delta \ell}{\delta {\bf u}}\right),
 \label{eq:noeth6e}
\eeqnar
and $D$ and ${\bf F}$ are given by (\ref{eq:noeth6c}).  

 Using the formulae 
(\ref{eq:action15}) for $\delta\ell/\delta\rho$,
$\delta\ell/\delta{\bf u}$, $\delta\ell/\delta S$ and
$\delta\ell/\delta {\bf B}$ in (\ref{eq:noeth6c}) gives:
\beqnar
&&D={\hat V}^{\bf x}{\bf\cdot}\rho{\bf u}+\Lambda^0, \nonumber\\
&&{\bf F}={\hat V}^{\bf x}{\bf\cdot}
\left(\rho{\bf u}\otimes{\bf u}+
\left(\varepsilon+p+\frac{B^2}{\mu_0}-\frac{1}{2}\rho |{\bf u}|^2\right)
{\sf I}-\frac{{\bf B}\otimes{\bf B}}{\mu_0}\right)+\boldsymbol{\Lambda}, 
\label{eq:noeth6f}
\eeqnar 
where use the notation:
\beqn
\hat{V}^{\bf x}=\boldsymbol{\eta}. \label{eq:noeth9}
\eeqn
and we have added potentials $\Lambda^0$ and $\boldsymbol{\Lambda}$ 
in (\ref{eq:noeth6f}) to account for the possibility of 
gauge transformations, which agrees with the density and flux 
formulas obtained in Section 6 in (\ref{eq:3.20na})-(\ref{eq:3.20nb}), 
for the conserved density and flux in Noether's theorem for  
 fluid relabelling symmetries and gauge transformations.  
Here ${\hat V}^{\bf x}$ is the canonical symmetry generator for
fluid relabeling symmetries, in which ${\bf x}={\bf x}({\bf x}_0,t)$
is the Lagrangian map, in which the $x^i$ are the dependent variables
and Lagrange labels ${\bf x}_0$ are the independent variables (e.g.
Webb et al. (2005b), Webb and Zank (2007)). From Ibragimov (1985)
and Webb et al. (2005b)
\beqn
\hat{V}^{x^i}=V^{x^i}-V^{x_0^s}D_{x_0^s} x^i\equiv -V^{x_0^s} x_{is},
\label{eq:noeth10}
\eeqn
gives the formula for the canonical symmetry generator ${\hat V}^{\bf x}$ 
in terms of the Lagrange label symmetry generator $V^{x_0^s}$ 
where $x_{is}=\partial x^i/\partial x_0^s$.  


\subsubsection{Fluid relabeling determining equations}  
For fluid relabeling symmetries, Eulerian physical
variables do not change (e.g. Webb and Zank (2007)). Advected
quantities $a$ satisfy:
\beqn
\delta a=-{\cal L}_{\eta} (a)=0, \label{eq:noeth11}
\eeqn
where $\eta$ is the vector field generator of the relabeling symmetry.

The Eulerian fluid velocity ${\bf u}$ does not change under
fluid relabeling symmetry. Thus,
\beqn
\delta{\bf u}=\dot{\eta}+[{\bf u},\eta]=0. \label{eq:noeth12}
\eeqn
 (\ref{eq:noeth12}) is condition for the vector field
$\eta$ to be Lie dragged by the fluid, i.e. $d\eta/dt=0$ moving with
the flow. 

The conditions (\ref{eq:noeth11}) are equivalent in the case of MHD
of setting $\delta\rho$, $\delta S$ and $\delta {\bf B}$ equal to zero.
Using the notation ${\hat V}^{\bf x}\equiv\boldsymbol{\eta}$, 
(\ref{eq:noeth3}) reduce to:
\beqnar
&&\nabla{\bf\cdot}(\rho {\hat V}^{\bf x})=0,
\quad {\hat V}^{\bf x}{\bf\cdot}\nabla S=0,\nonumber\\
&&\nabla\times\left({\hat V}^{\bf x}\times{\bf B}\right)
=0,
\label{eq:noeth13}
\eeqnar
where we used Gauss's law  $\nabla{\bf\cdot}{\bf B}=0$. 
Setting $\delta {\bf u}=0$ 
in (\ref{eq:noeth12}) gives the equation:
\beqn
\frac{d{\hat V}^{\bf x}}{dt}-{\hat V}^{\bf x}{\bf\cdot}\nabla{\bf u}=0, 
\label{eq:noeth15}
\eeqn
where $d/dt=\partial/\partial t+{\bf u}{\bf\cdot}\nabla$ is the Lagrangian time 
derivative moving with the flow.  The condition (\ref{eq:noeth15}) 
shows that the vector field ${\hat V}^{\bf x}$ is Lie dragged with the 
flow. 

\subsubsection{Noether's 2nd theorem: mass conservation symmetry}

Consider the conservation law associated 
with the mass conservation equation for the case of an ideal, 
isobaric fluid, with equation of state $p=p(\rho)$ 
(see also Cotter and Holm (2012)).   
For Noether's second theorem the variation of $J$, $\delta J$, 
is given by (\ref{eq:noeth6d}), i.e. we require:
\beqn
\delta J=\int d^3x\ \int\ dt 
\left[\boldsymbol{\eta}{\bf\cdot}{\bf E}(\ell)
+\deriv {D}{t}+\nabla{\bf\cdot}{\bf F}\right]=0, \label{eq:noethd1}
\eeqn
where ${\bf E}(\ell)$ is the Euler operator given by (\ref{eq:noeth6b}). For 
the fluid relabeling symmetry for mass conservation, 
the variation $\delta a$ of $a=\rho d^3x$ is set 
equal to zero, i.e.,
\beqn
\delta a=-{\cal L}_\eta(\rho d^3x)=0. \label{eq:n4}
\eeqn
 Using Cartan's magic formula:
\beqn
{\cal L}_\eta (a)=d(\eta\mathrel{\lrcorner} a)+\eta\mathrel{\lrcorner} da,
\label{eq:n5}
\eeqn
$da=0$ (as $a$ is a three-form in 3D-space), and noting  
 $\eta\mathrel{\lrcorner} a=\rho\eta{\bf\cdot}d{\bf S}$, we obtain 
\beqn
{\cal L}_\eta(\rho d^3x)=d[\rho\eta{\bf\cdot}d{\bf S}]=0.
\label{eq:n6}
\eeqn

By the Poincar\'e Lemma, there exists a 1-form
$\boldsymbol{\psi}{\bf\cdot}d{\bf x}$
such that
\beqn
\eta{\mathrel{\lrcorner}}a=\rho\eta{\bf\cdot}d{\bf S}
=d(\boldsymbol{\psi}{\bf\cdot}d{\bf x})
\equiv \nabla\times{\boldsymbol\psi}{\bf\cdot}d{\bf S}.
\label{eq:n7}
\eeqn
Since $\eta{\mathrel{\lrcorner}}a$ is a conserved advected 2-form,
then
\beqn
\eta=\frac{\nabla\times{\boldsymbol\psi}}{\rho} \quad\hbox{is a conserved
(Lie dragged) vector field}.  \label{eq:n8}
\eeqn
A simpler derivation of (\ref{eq:n8}) is to  note that 
$\boldsymbol{\eta}\equiv {\hat V}^{\bf x}$ satisfies the first 
Lie determining equation in
(\ref{eq:noeth13}), i.e.  $\nabla{\bf\cdot}(\rho\boldsymbol{\eta})=0$.

The first term in (\ref{eq:noethd1}) containing the Euler operator
 ${\bf E}(\ell)$ is:
\beqnar
T_1&=&\int d^3x\int dt\ {\boldsymbol{\eta}}{\bf\cdot}{\bf E}(\ell)
=\int d^3x\ \int dt\ \frac{\nabla\times\boldsymbol{\psi}}{\rho}
{\bf\cdot} {\bf E}(\ell)\nonumber\\
&=&\int d^3x\int dt\ \left\{\nabla{\bf\cdot}
\left[\boldsymbol{\psi}\times{\bf E}(\ell)/\rho\right]
+\boldsymbol{\psi}{\bf\cdot}\nabla\times({\bf E}(\ell)/\rho)\right\}
\nonumber\\
&=&\int d^3x \int dt\  {\boldsymbol{\psi}}{\bf\cdot}
\nabla\times({\bf E}(\ell)/\rho), \label{eq:noethd2}
\eeqnar
where the surface term due to $\nabla{\bf\cdot}[{\boldsymbol{\psi}}\times{\bf E}(\ell)/\rho]$ is assumed to vanish on the boundary $\partial V$ of the volume 
$V$ of integration.  

The remaining integrals in $\delta J$ in (\ref{eq:noethd1}):
\beqn
T_2=\int d^3x\int dt\ \left(\deriv{D}{t}+\nabla{\bf\cdot}{\bf F}\right)=C(t)+
\int d^3x\int dt\ \nabla{\bf\cdot}{\bf F}, \label{eq:noethd3}
\eeqn
can be reduced to the form:
\beqn
T_2=\int d^3x\int dt\left\{ \boldsymbol{\psi}{\bf\cdot}
\left[\deriv{\boldsymbol{\omega}}{t}
-\nabla\times({\bf u}\times
\boldsymbol{\omega})\right]+\nabla{\bf\cdot}{\bf W}\right\},
 \label{eq:noethd4}
\eeqn
where 
\beqn
{\bf W}=\nabla\times\left[\left(h+\frac{1}{2}|{\bf u}|^2\right)
\boldsymbol{\psi}
-(\boldsymbol{\psi}{\bf\cdot}{\bf u}){\bf u}\right], 
\label{eq:noethd4a}
\eeqn
and ${\boldsymbol{\omega}}=\nabla\times{\bf u}$ is the 
vorticity of the fluid. Note that $\nabla{\bf\cdot}{\bf W}=0$, 
because ${\bf W}$ may be written in the form of a 'curl': 
${\bf W}=\nabla\times{\bf M}$. Put another way 
\beqn
\int_V \nabla{\bf\cdot}{\bf W}\ d^3x
=\int_{\partial V} \nabla\times{\bf M}{\bf\cdot}d{\bf S}
=\int_{\partial\partial V} {\bf M}{\bf\cdot}d{\bf x}, 
\label{eq:noeth4b}
\eeqn
which is zero since $\partial\partial V$ does not exist 
(i.e. the boundary of a boundary is zero  for 
a simply connected region: e.g. (Misner et al. (1973)). 
Combining (\ref{eq:noethd2}) 
and (\ref{eq:noethd4}) we obtain:
\beqn
\delta J=\int d^3x\int dt\left\{ {\boldsymbol{\psi}}{\bf\cdot}
\left[\deriv{\boldsymbol{\omega}}{t}
-\nabla\times({\bf u}\times{\boldsymbol{\omega}})
+\nabla\times\left(\frac{\bf E(\ell)}{\rho}\right)\right]
+\nabla{\bf\cdot}{\bf W}\right\}. \label{eq:noethd5}
\eeqn
Thus, invoking the du-Bois Reymond lemma of the Calculus of variations 
and noting that $\nabla{\bf\cdot}{\bf W}=0$, 
(\ref{eq:noethd5}) yields the generalized Bianchi identity:
\beqn
\deriv{\boldsymbol{\omega}}{t}
-\nabla\times({\bf u}\times\boldsymbol{\omega})
+\nabla\times\left(\frac{\bf E(\ell)}{\rho}\right)=0. \label{eq:noethd6}
\eeqn
Equation (\ref{eq:noethd6}) is the basic result of Noether's second 
theorem, which shows that there are differential relations between 
the Euler-Lagrange 
variational derivatives $E_i(\ell)$ ($1\leq i\leq 3$) in this case.  
Note that (\ref{eq:noethd6})  does not necessarily imply that the 
Euler Lagrange equations $E_i(\ell)=0$ ($1\leq i\leq 3$) are satisfied. 
In the case where 
$E_i(\ell)=0$ ($1\leq i\leq 3$), (\ref{eq:noethd6}) implies the vorticity 
conservation law:
\beqn
\left(\derv{t}+{\cal L}_{\bf u}\right)(\boldsymbol{\omega}{\bf\cdot}d{\bf S})
=\left(\deriv{\boldsymbol{\omega}}{t}
-\nabla\times({\bf u}\times\boldsymbol{\omega})
+{\bf u}\nabla{\bf\cdot}\boldsymbol{\omega}\right){\bf\cdot}d{\bf S}=0. 
\label{eq:noethd7}
\eeqn
Note that $\nabla{\bf\cdot}\boldsymbol{\omega}=0$ as 
$\boldsymbol{\omega}=\nabla\times{\bf u}$ is the vorticity. 
Equation  (\ref{eq:noethd7}) shows that the vorticity 
2-form $\boldsymbol{\omega}{\bf\cdot}d{\bf S}$ is advected with the flow. 

The generalized Bianchi identity could also be derived using the 
method of Lagrange multipliers for Noether's second theorem developed by 
 Hydon and Mansfield (2011). 
The proof of (\ref{eq:noethd3})-(\ref{eq:noethd4})
is given below.
\begin{proof}
We use the analysis of Cotter and Holm (2012) to calculate 
$C(t)$. Using (\ref{eq:noeth6e}) and (\ref{eq:noethd4}) $C(t)$ is given by: 
\beqnar
C(t)=&&\left\langle \frac{\delta\ell}{\delta {\bf u}},
\boldsymbol{\eta}\right\rangle
=\int_D
\biggl(
\frac{\delta\ell} {\delta{\bf u}} {\bf\cdot}{\boldsymbol{\eta}}
\biggr) d^3 x
\nonumber\\
=&&\int\left(\frac{1}{\rho}
\frac{\delta\ell} {\delta u_j}\right)\rho\eta_j\ d^3x
=\int\frac{1}{\rho} \frac{
\delta\ell} {\delta u_j}(\nabla\times\boldsymbol{\psi})_j dS_j dx_j\nonumber\\
=&&\int\frac{1}{\rho}
\frac{
\delta\ell} {\delta {\bf u}}
{\bf\cdot}d{\bf x}
\wedge
d({\bf\psi\cdot}d{\bf x}).
\label{eq:n9}
\eeqnar
From (\ref{eq:n9})
\beqn
\frac{dC}{dt}=\int\left\{\derv{t}\left(\frac{1}{\rho}
\frac{
\delta\ell} {\delta {\bf u}}
{\bf\cdot}d{\bf x}\right)\wedge
d({\bf\psi\cdot}d{\bf x})
+\frac{1}{\rho}
\frac{\delta\ell} {\delta {\bf u}}
{\bf\cdot}d{\bf x}\wedge
\derv{t}[d({\bf\psi\cdot}d{\bf x})]\right\}. \label{eq:n10}
\eeqn
Write $dC/dt=t_1+t_2$ where $t_1$ is first term and $t_2$
second term in (\ref{eq:n10}). Note that $a$,  $\boldsymbol{\eta}$
and $(\boldsymbol{\eta}{\mathrel{\lrcorner}}a)$, where $a=\rho\ d^3x$
are advected with the flow. Thus,
\beqn
\left(\derv{t}+{\cal L}_{\bf u}\right)(\boldsymbol{\eta}{\mathrel{\lrcorner}}
a)\equiv
\left(\derv{t}+{\cal L}_{\bf u}\right)
[d(\boldsymbol{\psi}{\bf\cdot}d{\bf x})]=0. \label{eq:n11}
\eeqn

At this point it is useful to introduce the notation:
\beqn
\alpha=\frac{1}{\rho}
\frac{\delta\ell} {\delta {\bf u}}{\bf\cdot}d{\bf x},
\quad \beta={\cal L}_{\bf u}(\boldsymbol{\psi}{\bf\cdot}
d{\bf x}), \quad \gamma=\boldsymbol{\psi}{\bf\cdot}d{\bf x}. \label{eq:ld1}
\eeqn

Using the results:
\begin{align}
&d(\alpha\wedge\beta)=d\alpha\wedge\beta-\alpha\wedge d\beta, 
\quad d(d\alpha\wedge\gamma)=0, \nonumber\\
&{\cal L}_{\bf u}(d\alpha\wedge\gamma)={\cal L}_{\bf u}(d\alpha)\wedge\gamma+d\alpha\wedge {\cal L}_{\bf u}(\gamma), \nonumber\\
&{\cal L}_{\bf u}(d\alpha\wedge\gamma)
={\bf u}\contr d(d\alpha\wedge\gamma)
+d[{\bf u}\contr(d\alpha\wedge\gamma)], \nonumber\\
&\alpha_t\wedge d\gamma=d\alpha_t\wedge\gamma-d(\alpha_t\wedge\gamma),
\label{eq:ld3a}
\end{align}
we obtain:
\beqn
\frac{dC}{dt}=\int\left\{\left(\derv{t}
+{\cal L}_{\bf u}\right)(d\alpha)\wedge\gamma
+d[\alpha\wedge\beta-{\bf u}\contr(d\alpha\wedge\gamma)
-\alpha_t\wedge\gamma]\right\}. \label{eq:ld4}
\eeqn
Using (\ref{eq:ld4}) for $dC/dt$ in (\ref{eq:noethd1}) for $\delta J$ gives:
\begin{align}
\delta J=&\int dt\ \biggl\{\left(\derv{t}
+{\cal L}_{\bf u}\right)(d\alpha)\wedge\gamma
+d\biggl[\boldsymbol{\psi}\times\frac{{\bf E}(\ell)}{\rho}{\bf\cdot} d{\bf S}
+{\bf F}{\bf\cdot}d{\bf S}\nonumber\\
&+\alpha\wedge\beta-{\bf u}\contr(d\alpha\wedge\gamma)
-\alpha_t\wedge\gamma\biggr]\biggr\}
+\int d^3x\int dt\ \boldsymbol{\psi}
{\bf\cdot}\nabla\times
\left({\bf E}(\ell)/\rho\right). \label{eq:ld5}
\end{align}

Next we note that the surface term:
\begin{align}
&d\left[{\bf F}{\bf\cdot}d{\bf S}+\alpha\wedge\beta
-{\bf u}\contr(d\alpha\wedge\gamma)
-\alpha_t\wedge\gamma\right]\nonumber\\
&=d({\bf W}{\bf\cdot}d{\bf S})=\nabla{\bf\cdot}{\bf W} d^3x, \label{eq:ld6}
\end{align}
where
\beqn
{\bf W}=\nabla\times\left[\left(h+\frac{1}{2}|{\bf u}|^2\right) 
\boldsymbol{\psi}-(\boldsymbol{\psi}{\bf\cdot}{\bf u}){\bf u}\right]. 
\label{eq:ld7}
\eeqn
Note that $\nabla{\bf\cdot}{\bf W}=0$. 
In (\ref{eq:ld7}) we assumed a barotropic equation of state,
with $p=p(\rho)$, and used the momentum equation:
\beqn
{\bf u}_t-{\bf u}\times\boldsymbol{\omega}
+\nabla\left(\frac{1}{2} |{\bf u}|^2\right)
=T\nabla S-\nabla h,  \label{eq:ld7a} 
\eeqn
to determine $\alpha_t$.  
Also note that
\begin{align}
&\int\left(\derv{t}+{\cal L}_{\bf u}\right)(d\alpha)\wedge\gamma
=\int\left(\derv{t}+{\cal L}_{\bf u}\right)
(\boldsymbol{\omega}{\bf\cdot}d{\bf S})
\wedge(\boldsymbol{\psi}{\bf\cdot}d{\bf x})\nonumber\\
&=\int \boldsymbol{\psi}{\bf\cdot}
\left[\boldsymbol{\omega}_t
-\nabla\times({\bf u}\times\boldsymbol{\omega})\right]\ d^3x. \label{eq:ld8}
\end{align}
Substituting (\ref{eq:ld6})-(\ref{eq:ld8}) into  
 (\ref{eq:ld5}), and assuming the surface term due to 
$\boldsymbol{\psi}\times{\bf E}(\ell)/\rho$ is zero, 
we obtain the result (\ref{eq:noethd5}) for $\delta J$.
 This completes the proof.
\end{proof}

\subsubsection{Cross helicity}

To obtain the cross helicity conservation law (\ref{eq:gch3}) using Noether's 
theorem, it is neccesary to obtain the appropriate solution 
of (\ref{eq:noeth11})-(\ref{eq:noeth15}) 
for the fluid relabeling 
symmetries. The condition that the mass 3-form $\alpha=\rho d^3x$ is 
a fluid relabeling symmetry using Cartan's magic formula, and noting 
$d\alpha=0$ requires that:
\beqn
{\cal L}_{\boldsymbol{\eta}}(\rho d^3x)=d(\boldsymbol{\eta}\mathrel{\lrcorner}
\rho d^3x)=d(\rho\boldsymbol{\eta}{\bf\cdot}d{\bf S})
=\nabla{\bf\cdot}(\rho\boldsymbol{\eta})d^3x=0. \label{eq:ch1}
\eeqn
The entropy variation $\delta S=-\boldsymbol{\eta}{\bf\cdot}\nabla S=0$, 
and the magnetic field variation $\delta {\bf B}=\nabla\times(\boldsymbol{\eta}\times {\bf B})=0$ and the fluid velocity variation $\delta {\bf u}
=\dot{\boldsymbol{\eta}}+[{\bf u},\boldsymbol{\eta}]=0$ are all 
satisfied by the choice:
\beqn
\boldsymbol{\eta}\equiv \hat{V}^{\bf x}
=\zeta({\bf x}_0){\bf b}\quad\hbox{where}
\quad {\bf b}=\frac{\bf B}{\rho}\quad 
\hbox{and}\quad {\bf B}{\bf\cdot}\nabla S=0. \label{eq:ch2}
\eeqn
Note that ${\bf b}={\bf B}/\rho$ is an invariant vector field that is 
Lie dragged with the flow (see (\ref{eq:chn3})). From
(\ref{eq:noeth6f}) the surface term $D$  
in the variational principle  (\ref{eq:noeth6a}) is given by:
\beqn
D=\rho {\bf u}{\bf\cdot}\boldsymbol{\eta}+\Lambda^0
=\rho {\bf u}{\bf\cdot}\zeta({\bf x}_0) {\bf b}+\Lambda^0
\equiv \zeta({\bf x}_0) {\bf u}{\bf\cdot}{\bf B}+\Lambda^0. \label{eq:ch3}
\eeqn
Similarly, the flux ${\bf F}$ surface term in (\ref{eq:noeth6f}) is given 
by:
\beqnar
{\bf F}&=&\zeta({\bf x}_0)\frac{\bf B}{\rho}{\bf\cdot}
\biggl[\rho {\bf u}\otimes {\bf u}
+\rho\left(h-\frac{1}{2}|{\bf u}|^2\right) {\sf I} 
+\frac{B^2}{\mu_0}{\sf I}-\frac{{\bf B}\otimes{\bf B}}{\mu_0}\biggr]
+\boldsymbol{\Lambda}
\nonumber\\
&=&\zeta({\bf x}_0)\left[({\bf u\cdot B}){\bf u}
+\left(h-\frac{1}{2}|{\bf u}|^2\right){\bf B}\right]+\boldsymbol{\Lambda}. 
\label{eq:ch4}
\eeqnar
In (\ref{eq:ch3}) and (\ref{eq:ch4}) we have added the gauge potential terms 
$\Lambda^0$ and $\boldsymbol{\Lambda}$. This allows one to make a 
link to the variational approach of Section 6 that includes the effects 
of gauge transformations in the variational principle and in Noether's theorem.
In Section 6, the generalized cross helicity conservation 
law (\ref{eq:fh5a}) was 
obtained by setting $\zeta({\bf x}_0)=1$, 
$\Lambda^0=r{\bf B}{\bf\cdot}\nabla S$ and 
$\boldsymbol{\Lambda}={\bf u}r {\bf B}{\bf\cdot}\nabla S$ where $dr/dt=-T$ 
(see equations (\ref{eq:chn2})). 
In the variational principle (\ref{eq:noeth6a}) $\delta J$ reduces to:
\beqnar
\delta J&=&\int d^3x\int dt\
 \biggl\{\zeta({\bf x}_0)\frac{\bf B}{\rho}{\bf\cdot}{\bf E}(\ell) 
+\derv{t}\left(\zeta({\bf x}_0){\bf u}{\bf\cdot}{\bf B}\right)\nonumber\\
&\ &+ \nabla{\bf\cdot}\left[
\zeta({\bf x}_0)
\left(({\bf u\cdot B}){\bf u}
+\left(h-\frac{1}{2}|{\bf u}|^2\right){\bf B}\right)\right]
+\nabla_\alpha\Lambda^\alpha\biggr\}
\nonumber\\
&=&\int d^3x\int dt\ \biggl\{
\zeta({\bf x}_0)
\biggl[\frac{{\bf B\cdot E}(\ell)}{\rho}
+\deriv{h_c}{t}\nonumber\\
&\ &+\nabla{\bf\cdot}\left[{\bf u} h_c +
\left(h-\frac{1}{2}|{\bf u}|^2\right){\bf B}\right]\biggr]
+R\biggr\}, 
\label{eq:ch5}
\eeqnar
where  $h_c={\bf u\cdot B}$ is the cross helicity, and 
\beqn
R=h_c\left(\deriv{\zeta}{t}+{\bf u}{\bf\cdot}\nabla\zeta\right) 
+\left(h-\frac{1}{2}|{\bf u}|^2\right)
{\bf B}{\cdot}\nabla\zeta+\nabla_\alpha\Lambda^\alpha. \label{eq:ch6}
\eeqn
In the case  
${\bf B}{\bf\cdot}\nabla S({\bf x}_0)={\bf B}{\bf\cdot}\nabla \zeta({\bf x}_0)
=0$, and $\Lambda^\alpha=0$ ($\alpha=0,1,2,3$), 
the remainder term in (\ref{eq:ch5}) and (\ref{eq:ch6}) 
$R=0$. The net upshot from (\ref{eq:ch5}) is the generalized Bianchi 
identity:
\beqn
\frac{{\bf B\cdot E}(\ell)}{\rho}
+\deriv{h_c}{t}
+\nabla{\bf\cdot}\left[{\bf u} h_c +
\left(h-\frac{1}{2}|{\bf u}|^2\right){\bf B}\right]
=0. \label{eq:ch7}
\eeqn
Thus, if the Euler Lagrange equations ${\bf E}(\ell)=0$ 
are satisfied, then  (\ref{eq:ch7}) reduces
to the cross helicity conservation equation (\ref{eq:gch3}), i.e.
\beqn
\deriv{h_c}{t}
+\nabla{\bf\cdot}\left[{\bf u} h_c +
\left(h-\frac{1}{2}|{\bf u}|^2\right){\bf B}\right]
=0. \label{eq:ch8}
\eeqn
The only constraint on (\ref{eq:ch8}) is that we 
require ${\bf B}{\bf\cdot}\nabla S=0$. 
If ${\bf B}{\bf\cdot}{\bf n}=0$ on the boundary $\partial V_m$ of the volume 
$V_m$ of interest, then the integral form of the (\ref{eq:ch8}) reduces to 
$dH_c/dt=0$ (see Section 3 for further discussion). 

\subsubsection{Helicity in Fluids}

In a barotropic, ideal fluid in which the pressure $p=p(\rho)$ is independent of the entropy $S$, the helicity density:
\beqn
h_f={\bf u}{\bf\cdot}\boldsymbol{\omega}\quad\hbox{where}\quad \boldsymbol{\omega}=\nabla\times{\bf u}, \label{eq:hel1}
\eeqn
satisfies the conservation law:
\beqn
\deriv{h_f}{t}+\nabla{\bf\cdot}
\left[{\bf u}h_f+\left(h-\frac{1}{2}|{\bf u}|^2\right)
\boldsymbol{\omega}\right]=0. \label{eq:hel2}
\eeqn
This conservation law is the analogue of the cross helicity 
conservation law (\ref{eq:ch8}) where 
${\bf B}\to\boldsymbol{\omega}$ and $h_c\to h_f$

The Lie symmetry associated with the helicity (kinetic helicity) conservation 
equation (\ref{eq:hel2}) is:
\beqn
\boldsymbol{\eta}\equiv\hat{V}^{\bf x}
=\frac{\zeta({\bf x}_0)\boldsymbol{\omega}}{\rho}\quad\hbox{where}\quad 
\boldsymbol{\omega}{\bf\cdot}\nabla\zeta({\bf x}_0)=0. \label{eq:hel5}
\eeqn
One can verify that the solution (\ref{eq:hel5}) satisifies  
the fluid relabelling Lie determining equations 
(\ref{eq:noeth12})-(\ref{eq:noeth15}) with ${\bf B}=0$.  
In particular (\ref{eq:noeth15}) reduces to the vorticity equation:
\beqn
\frac{d}{dt}\left(\frac{\boldsymbol{\omega}}{\rho}\right)=\frac{\boldsymbol{\omega}}{\rho}{\bf\cdot}\nabla{\bf u}\quad\hbox{or}\quad \deriv{\boldsymbol{\omega}}{t}-\nabla\times ({\bf u}\times\boldsymbol{\omega})=0, \label{eq:hel6}
\eeqn
which applies for a barotropic equation of state with $p=p(\rho)$. 
The derivation of the helicity conservation law (\ref{eq:hel2}) using Noether's
 theorem is analogous to the derivation of the cross helicity 
conservation law (\ref{eq:ch8}) except that ${\bf B}\to\boldsymbol{\omega}$ 
and $h_c\to h_f$. 
 
\subsubsection{Potential vorticity and Ertel's theorem}

\begin{proposition}
Ertel's theorem states that in ideal compressible fluid mechanics, 
that the potential vorticity $q=\boldsymbol{\omega}{\bf\cdot}\nabla S/\rho$ 
where $\boldsymbol{\omega}=\nabla\times{\bf u}$ is the fluid vorticity, is 
advected with the flow, i.e. $dq/dt=0$ where 
$d/dt=\partial/\partial t+{\bf u}{\bf\cdot}\nabla$ is the Lagrangian 
time derivative following the flow.
\end{proposition}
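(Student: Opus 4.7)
The plan is to proceed by direct calculation, following the classical route: derive an evolution equation for $\boldsymbol{\omega}/\rho$ by taking the curl of the momentum equation, derive an evolution equation for $\nabla S$ from entropy advection, and then combine the two via the product rule.

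First I would start from the Euler momentum equation, which using the first law identity (\ref{eq:2.5}), $-\nabla p/\rho = T\nabla S -\nabla h$, and the vector identity $\mathbf{u}\cdot\nabla\mathbf{u} = \nabla(|\mathbf{u}|^2/2)-\mathbf{u}\times\boldsymbol{\omega}$, takes the Bernoulli-like form
\begin{equation*}
\deriv{\mathbf{u}}{t} - \mathbf{u}\times\boldsymbol{\omega} + \nabla\!\left(\tfrac12|\mathbf{u}|^2 + h\right) = T\nabla S.
\end{equation*}
Taking the curl and using $\nabla\times(\mathbf{u}\times\boldsymbol{\omega}) = \boldsymbol{\omega}\cdot\nabla\mathbf{u} - \mathbf{u}\cdot\nabla\boldsymbol{\omega} - \boldsymbol{\omega}(\nabla\cdot\mathbf{u})$ (with $\nabla\cdot\boldsymbol{\omega}=0$) yields the baroclinic vorticity equation $d\boldsymbol{\omega}/dt = \boldsymbol{\omega}\cdot\nabla\mathbf{u}-\boldsymbol{\omega}(\nabla\cdot\mathbf{u}) + \nabla T\times\nabla S$. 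Combining with mass conservation in the form $d\rho/dt=-\rho\nabla\cdot\mathbf{u}$ converts this to
\begin{equation*}
\frac{d}{dt}\!\left(\frac{\boldsymbol{\omega}}{\rho}\right) = \frac{\boldsymbol{\omega}}{\rho}\cdot\nabla\mathbf{u} + \frac{\nabla T\times\nabla S}{\rho}.
\end{equation*}

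Next, I would differentiate the entropy advection equation $dS/dt=0$ with respect to $x^j$ to obtain $d(\partial_j S)/dt = -(\partial_j u^i)\,\partial_i S$, i.e. $d(\nabla S)/dt = -(\nabla\mathbf{u})^T\cdot\nabla S$. Applying the product rule,
\begin{equation*}
\frac{dq}{dt} = \frac{d}{dt}\!\left(\frac{\boldsymbol{\omega}}{\rho}\right)\!\cdot\nabla S + \frac{\boldsymbol{\omega}}{\rho}\cdot\frac{d(\nabla S)}{dt},
\end{equation*}
and one observes the key index cancellation: $(\boldsymbol{\omega}/\rho\cdot\nabla\mathbf{u})\cdot\nabla S = \omega^i\partial_i u^j\partial_j S/\rho$ exactly cancels $-(\boldsymbol{\omega}/\rho)\cdot(\nabla\mathbf{u})^T\cdot\nabla S$ after relabelling. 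The residual baroclinic term $(\nabla T\times\nabla S)\cdot\nabla S/\rho$ vanishes because $\nabla T\times\nabla S$ is orthogonal to $\nabla S$. Hence $dq/dt=0$.

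The main subtlety, and the one step that is worth highlighting, is that Ertel's theorem holds for the \emph{non-barotropic} equation of state $p=p(\rho,S)$ even though the vorticity $\boldsymbol{\omega}\cdot d\mathbf{S}$ itself is \emph{not} Lie dragged in this case (see Proposition \ref{propnl1}, where it is only the modified vorticity $\boldsymbol{\Omega}=\boldsymbol{\omega}+\nabla r\times\nabla S$ whose flux 2-form is advected). The whole proof hinges on showing that the baroclinic obstruction $\nabla T\times\nabla S$ in the vorticity equation is precisely killed by contraction with $\nabla S$. As a cross-check, one could alternatively recast the argument in differential-form language: $q\,\rho\,d^3x = dS\wedge(\boldsymbol{\omega}\cdot d\mathbf{S}) = dS\wedge(\boldsymbol{\Omega}\cdot d\mathbf{S})$, using that $dS\wedge(\nabla r\times\nabla S)\cdot d\mathbf{S}=0$; since both $dS$ (as $S$ is advected) and $\boldsymbol{\Omega}\cdot d\mathbf{S}$ (by Proposition \ref{propnl1}) are Lie dragged, so is their wedge product, and dividing by the advected 3-form $\rho\,d^3x$ gives $dq/dt=0$. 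This form-based argument is conceptually cleaner and fits the Lie-dragging philosophy of the paper, but the componentwise calculation above is the most direct path.
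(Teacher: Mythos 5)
Your proof is correct, but it takes a genuinely different route from the paper's. You verify Ertel's theorem directly from the equations of motion: curl the Bernoulli form of the momentum equation to get $d(\boldsymbol{\omega}/\rho)/dt=(\boldsymbol{\omega}/\rho)\cdot\nabla{\bf u}+(\nabla T\times\nabla S)/\rho$, differentiate $dS/dt=0$ to get $d(\nabla S)/dt=-(\nabla{\bf u})^T\cdot\nabla S$, and observe that the transpose terms cancel index-wise while the baroclinic term is annihilated by contraction with $\nabla S$ --- all of which checks out, as does your wedge-product variant $q\rho\,d^3x=dS\wedge(\boldsymbol{\omega}\cdot d{\bf S})$. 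The paper instead derives the result from Noether's second theorem: it exhibits the fluid relabelling symmetry $\boldsymbol{\eta}=\nabla\times(\Phi\nabla S)/\rho$ with $\Phi({\bf x}_0)$ an advected scalar, evaluates the variation $\delta J$ of the action using the Lie-dragged forms $\Phi$, $\nabla S\cdot d{\bf x}$ and $\boldsymbol{\omega}\cdot d{\bf S}$, and obtains the off-shell generalized Bianchi identity $\rho\,\tfrac{d}{dt}\bigl(\boldsymbol{\omega}\cdot\nabla S/\rho\bigr)+\nabla S\cdot\nabla\times\bigl({\bf E}(\ell)/\rho\bigr)=0$, from which $dq/dt=0$ follows once the Euler--Lagrange equations ${\bf E}(\ell)=0$ hold. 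Your calculation is shorter and more elementary, and it makes transparent exactly where the non-barotropic obstruction $\nabla T\times\nabla S$ dies; the paper's derivation buys the identification of the underlying relabelling symmetry and an identity among the Euler--Lagrange expressions that is valid even off shell, which is the structural point of that section. Your form-based cross-check is in fact closer in spirit to the paper's Lie-dragging philosophy than your componentwise computation, though neither reproduces the Noether-theoretic argument.
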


The Lie determining equations (\ref{eq:noeth13})-(\ref{eq:noeth15}) admit 
the symmetry:
\beqn
\boldsymbol{\eta}\equiv \hat{V}^{\bf x}
=\frac{\nabla\times(\Phi\nabla S)}{\rho}
=\frac{\nabla\times\boldsymbol{\psi}}{\rho}, \quad\hbox{where}
\quad \boldsymbol{\psi}=\Phi\nabla S,  \label{eq:pv1}
\eeqn
and $\Phi=\Phi({\bf x}_0)$ depends only on the Lagrange labels ${\bf x}_0$
, i.e $\Phi$ is a 0-form Lie dragged by the flow:
\beqn
\frac{d\Phi}{dt}=\deriv{\Phi}{t}+{\bf u}{\bf\cdot}\nabla\Phi=0. \label{eq:pv2}
\eeqn
Note that
\beqn
\boldsymbol{\eta}\contr\rho\ d^3x=\rho\boldsymbol{\eta}{\bf\cdot}d{\bf S}
=\nabla\times\boldsymbol{\psi}{\bf\cdot}d{\bf S}
=d(\boldsymbol{\psi}{\bf\cdot}d{\bf x})=d(\Phi dS). 
\label{eq:pv3}
\eeqn
The condition (\ref{eq:noeth15}) implies 
$\hat{V}^{\bf x}\equiv\boldsymbol{\eta}$ is a Lie dragged vector field 
which satisfies (\ref{eq:noeth12}). Similarly, the 1-form 
$\alpha=\boldsymbol{\psi}{\bf\cdot}d{\bf x}$ is Lie dragged with the flow, i.e.
$\boldsymbol{\psi}$ satisfies the the equation:
\beqn
\deriv{\boldsymbol{\psi}}{t}-{\bf u}\times(\nabla\times\boldsymbol{\psi})
+\nabla({\bf u\cdot}\boldsymbol{\psi})=0. \label{eq:pv4}
\eeqn
Using $\boldsymbol{\psi}=\Phi\nabla S$, (\ref{eq:pv4}) reduces to:
\beqn
\Phi\nabla\left(\frac{dS}{dt}\right)+\nabla S\frac{d\Phi}{dt}=0. 
\label{eq:pv5}
\eeqn
Equation (\ref{eq:noeth15}) is equivalent to the curl of (\ref{eq:pv5}). 
Since $dS/dt=0$, (\ref{eq:pv5}) implies  
 $d\Phi/dt=0$. Note that $\boldsymbol{\psi}{\bf\cdot}d{\bf x}
=\Phi\nabla S {\bf\cdot}d{\bf x}$ are Lie dragged 1-forms and hence 
$\Phi$ is 
necessarily an advected invariant 0-form or function.

\begin{proof}{\bf Ertel's theorem}

To derive Ertel's theorem from Noether's theorem, we require 
$\delta J=0$ in (\ref{eq:noethd1}). From (\ref{eq:ld5}):
\beqn
\delta J=\int dt\int_{V}\left[\left(\derv{t}+{\cal L}_{\bf u}\right) (d\alpha)
\wedge \gamma+d({\bf W}{\bf\cdot}d{\bf S})\right]
+\int dt\int_V d^3x\ \boldsymbol{\psi}{\bf\cdot}
\nabla\times({\bf E}(\ell)/\rho), \label{eq:pv6}
\eeqn
where ${\bf W}$ is given by (\ref{eq:ld7}). Note that ${\bf W}$ is a 
solenoidal vector field, i.e. $\nabla{\bf\cdot}{\bf W}=0$. In (\ref{eq:pv6}) 
$\boldsymbol{\psi}=\Phi\nabla S$ and $d\Phi/dt=0$. We introduce the notation:
\beqn
I=\int_{V}\left(\derv{t}+{\cal L}_{\bf u}\right) (d\alpha)
\wedge \gamma, \label{eq:pv7}
\eeqn
for the first integral in (\ref{eq:pv6}), where $\alpha$ , $\beta$ and 
$\gamma$ are the differential 1-forms given in (\ref{eq:ld1}). From 
(\ref{eq:pv7}) and (\ref{eq:ld1}) we obtain:
\begin{align}
I=&\int_V\frac{d}{dt}\left[\nabla\times\left(\frac{1}{\rho} \frac{\delta{\ell}}
{\delta{\bf u}}\right){\bf\cdot}d{\bf S}\right]
\wedge\Phi\nabla{S}{\bf\cdot}d{\bf x}\nonumber\\
=&\int_V\frac{d}{dt}(\boldsymbol{\omega}{\bf\cdot}d{\bf S})\wedge \Phi\nabla S{\bf\cdot}d{\bf x}\nonumber\\
=&\int_V\frac{d}{dt}\left(\boldsymbol{\omega}{\bf\cdot}d{\bf S}
\wedge\Phi\nabla S{\bf\cdot}d{\bf x}\right) . \label{eq:pv8}
\end{align}
In (\ref{eq:pv8}) we use the fact that $\Phi$ is a 0-form and 
$\nabla S{\bf\cdot}d{\bf x}$ is a 1-form, which  are Lie dragged 
with the flow. The integral $I$ in (\ref{eq:pv8}) can be further reduced to:
\beqn
I=\int_V\frac{d}{dt}\left(\frac{\boldsymbol{\omega}{\bf\cdot}\nabla S}{\rho} 
\Phi\rho\ d^3x\right)=\int_V\frac{d}{dt}
\left(\frac{\boldsymbol{\omega}{\bf\cdot}\nabla S} {\rho}\right)
\Phi\rho\ d^3x. \label{eq:pv9}
\eeqn
Note that $d/dt(\Phi\rho d^3x)=0$ as $\rho d^3x$ is an invariant 3-form and 
$\Phi$ is an invariant 0-form. 

Using (\ref{eq:pv9}) in (\ref{eq:pv6}) gives:
\beqn
\delta J=\int dt\ \int_V\ d^3x\left\{\Phi\left[\rho\frac{d}{dt}
\left(\frac{\boldsymbol{\omega}{\bf\cdot}\nabla S}{\rho}\right)
+\nabla S{\bf\cdot}\nabla\times\left(\frac{{\bf E}(\ell)}{\rho}\right)\right]
+\nabla{\bf\cdot}{\bf W}\right\}. \label{eq:pv10}
\eeqn
Because $\nabla{\bf\cdot}{\bf W}=0$, and using the du-Bois Reymond lemma
in (\ref{eq:pv10}), we obtain the generalized Bianchi identity:
\beqn
\rho\frac{d}{dt}
\left(\frac{\boldsymbol{\omega}{\bf\cdot}\nabla S}{\rho}\right)
+\nabla S{\bf\cdot}\nabla\times\left(\frac{{\bf E}(\ell)}{\rho}\right)=0. 
\label{eq:pv11}
\eeqn
If the Euler-Lagrange equations ${\bf E}(\ell)=0$ are satisfied, 
then (\ref{eq:pv11}) implies Ertel's theorem:
\beqn
\frac{d}{dt}\left(\frac{\boldsymbol{\omega}{\bf\cdot}\nabla S}{\rho}\right)=0. 
\label{eq:pv12}
\eeqn
This completes the proof.

\end{proof}



\section{Concluding Remarks}
In this paper we have used variants of Noether's theorems to derive nonlocal 
conservation laws for helicity and cross helicity in ideal fluid dynamics
and in MHD. These two conservation laws were derived in Webb et al. (2013b).
 Other conservation laws for advected invariants 
of MHD and ideal gas dynamics were obtained by using Lie dragging techniques 
(Webb et al. (2013b), and Tur and Janovsky (1993)). 
If the gas is isobaric (i.e. the gas pressure $p=p(\rho)$), 
the helicity and cross helicity conservation laws are local conservation laws
that depend only on the local variables $(\rho,{\bf u},S,{\bf B})$. 
Also if $p=p(\rho,S)$ 
and  ${\bf B}{\bf\cdot}\nabla S=0$, the cross helicity conservation law is  
 a local conservation law.  For the general case of a non-isobaric
gas with $p=p(\rho,S)$, nonlocal conservation laws were obtained that 
depend on the non-local Clebsch potentials. 

The connection between advected invariants
 and the Casimir invariants was investigated in Section 3. 
Padhye and Morrison (1996a,b) using the canonical Lie bracket 
for Lagrangian MHD, used the fluid relabelling symmetry equations 
to derive the determining equations for the Casimirs.

The nonlocal helicity and cross helicity conservation laws 
were derived in the present paper by using Clebsch variables in   
Noether's theorem and by exploiting  fluid relabelling symmetries 
and gauge symmetries of the action. The energy conservation law in MHD 
was also derived by using a fluid relabelling symmetry of the 
action and including a non-zero gauge potential in the action. 

An alternative derivation of the helicity conservation laws was 
carried out in Section 7 where the Euler Poincar\'e 
formulation of Noether's first theorem and Noether's second theorem 
was developed similar to that of Cotter and Holm (2012) 
(see Holm et al. (1998) for a general account of the Euler Poincar\'e 
equations and semi-direct product Lie algebras 
applied to Hamiltonian systems). Noether's second theorem 
plays an important role in cases where the variational 
principle admits an infinite class of symmetries. In this case 
the conservation laws involve so-called Bianchi identities, 
since the Euler Lagrange equations are not necessarily independent
(e.g. Hydon and Mansfield (2011), Padhye and Morrison (1996a,b)).  
This approach uses Eulerian variations of the action. 
The use of Lie symmetries for differential equations and 
Noether's theorems are described in standard texts (e.g. Olver (1993), 
Ibragimov (1985), Bluman and Kumei (1989), Bluman et al. (2010)). 
The helicity and cross helicity conservation  
laws for barotropic and non-barotropic equations of state for the gas, 
were derived using Noether's theorems coupled with fluid relabelling symmetries and gauge transformations. One surprising result, was the derivation 
of the energy conservation equation for MHD by using a fluid relabelling 
 symmetry and a gauge transformation for the action.

\section*{Aknowledgements}
GMW acknowledges stimulating discussions of MHD conservation laws, 
Casimirs and Noether's theorems  with
Darryl Holm. {\bf We also acknowledge discussion of the MHD Poisson bracket with Phil. Morrison.} We acknowledge detailed comments by one of the referees 
which lead to an improved presentation. 
QH was supported in
part by NASA grants NNX07AO73G and NNX08AH46G. B. Dasgupta was
supported in part by an internal UAH grant. GPZ was supported
in part by NASA grants
NN05GG83G and NSF grant
nos. ATM-03-17509 and ATM-04-28880.
JFMcK acknowledges support by the NRF of South Africa.

\section*{References}
\begin{harvard}



\item[]
Arnold, V. I. and Khesin, B. A. 1998, Topological Methods in Hydrodynamics, 
Springer, New York.



\item[]
Berger, M. A. and Field, G. B., 1984,
The toplological properties of magnetic helicity, {\it J. Fluid. Mech.},
 {\bf 147}, 133-148.


\item[]
Berger, M. A. and Ruzmaikin, A. 2000, Rate of helicity production by
solar rotation, {\it J. Geophys. Res.}, {\bf 105}, (A5), p. 10481-10490.


\item[]
Bieber, J. W., Evenson, P. A. and Matthaeus, W.H., 1987, Magnetic helicity of
the Parker field, {\it Astrophys. J.}, 315, 700.

\item[]
Bluman, G. W. and Kumei, S. 1989, Symmetries and differential Equations, 
(New York: Springer). 

\item[]
 Bluman, G. W., Cheviakov, A. F. and Anco, S. 2010, Applications 
of Symmetry Methods to Partial Differential Equations, Springer: New York. 

\item[]
Bott, R. and Tu, L. W.
1982, \textit{Differential Forms in Algebraic Topology}, Springer: New York.







\item[]
Chandre, C., de Guillebon, L., Back, A. Tassi, E. and Morrison, P. J. 2013, On the use of projectors for Hamiltonian systems and their relationship with 
Dirac brackets, {\it J. Phys. A, Math. and theoret.}, {\bf 46}, 125203 (14pp), 
doi:10.1088/1751-8113/46/12/125203.

\item[]
Cotter, C. J., Holm, D. D., and Hydon, P. E., 2007, Multi-symplectic 
formulation of fluid dynamics using the inverse map, 
{\it Proc. Roy. Soc. Lond.} A, {\bf 463}, 2617-2687 (2007).

\item[]
Cotter, C. J. and Holm, D.D. 2012, On Noether's theorem for 
Euler Poincar\'e equation on the diffeomorphism group 
with advected quantities, {\it Foundations 
of Comput. Math.}, doi:10.1007/S10208-012-9126-8


\item[]
Dewar, R. L. 1970, Interaction between hydromagnetic waves and a time 
dependent inhomogeneous medium, {\it Phys. Fluids}, {\bf 13} (11), 2710-2720.






\item[]
Finn, J. H. and Antonsen, T. M. 1985,  Magnetic helicity: what is it and what is it good for?, {\it Comments on Plasma Phys. and Contr.
Fusion}, {\bf 9}(3), 111.

\item[]
Finn, J. M. and Antonsen, T. M. 1988, Magnetic helicity injection for
configurations with field errors, {\it Phys. Fluids}, {\bf 31} (10), 3012-3017.







\item[]
Hameiri, E. 2003, Dynamically accessible perturbations and 
magnetohydromagnetic stability, {\it Physics of Plasmas}, {\bf 10}, No. 7, 
2643-2648, doi:1070-664X/2003/10(7)/2643/6.

\item[]
Hameiri, E. 2004, The complete set of Casimir constants of the motion in 
magnetohydrodynamics, {\it Physics of Plasmas}, {\bf 11}, No. 7, 3423-3431, 
doi:1070-664X/11(7)/3423/9.







\item[]
Holm, D. D., Marsden, J.E., Ratiu, T., and Weinstein, A., 1985, 
Nonlinear stability 
of fluid and plasma equilibria, {\it Phys. Reports,}, 
Review section of Phys. Rev. Lett., 
{\bf123}, Nos. 1 and 2, 1-116,doi: 0370-1573/85.

\item[] 
Holm, D. D. and Kupershmidt, B. A. 1983a, Poisson brackets and Clebsch 
representations for magnetohydrodynamics, multi-fluid plasmas and 
elasticity, {\it Physica D}, {\bf 6D}, 347-363.

\item[]
Holm, D. D. and Kupershmidt, B. A. 1983b, noncanonical Hamiltonian 
formulation of ideal magnetohydrodynamics, {\it Physica D}, {\bf 7D}, 330-333.



\item[]
Holm, D.D., Marsden, J.E. and Ratiu, T.S. 1998, The Euler-Lagrange equations and semi-products with application to continuum theories, {\it Adv. Math.},
{\bf 137}, 1-81.

\item[]
Hollmann, G. H. 1964, {\it Arch. Met., Geofys. Bioklimatol.}, A14, 1.


\item[]
Hydon, P. E. and Mansfield, E. L. 2011, Extensions of Noether's second theorem: from continuous to discrete systems, {\it Proc. Roy. Soc. A}, 
{\bf 467}, pp. 3206-3221, doi:10.1098/rspa.2011.0158.

\item[]
Ibragimov, N. H. 1985,
 {\it Transformation Groups Applied to Mathematical Physics},
 Dordrecht:Reidel.



\item[]
Kamchatnov, 1982, Topological soliton in magnetohydrodynamics, 
{\it Sov. Phys.}, {\bf 55}, (1), 69-73.

\item[]
Kats, A. V. 2003, Variational principle in canonical variables, Weber 
transformation and complete set of local integrals of motion 
for dissipation-free magnetohydrodynamics, {\it JETP Lett.}, {\bf 77}, 
No. 12, 657-661.


\item[]
Kruskal, M. D. and Kulsrud, R. M. 1958, Equilibrium of a magnetically confined 
plasma in a toroid, {\it Phys. fluids}, {\bf 1}, 265.

\item[]
Kuznetsov, E. A. 2006, Vortex line representation for hydrodynamic type
equations, {\it J. Nonl. Math. Phys.}, {\bf 13}, No. 1, 64-80.

\item[]
Kuznetsov, E. A. and Ruban, V.P. 1998, Hamiltonian dynamics of vortex lines
in hydrodynamic type systems, {\it JETP Lett.}, {\bf 67}, No. 12, 1076-1081.

\item[]
Kuznetsov, E.A. and Ruban, V.P. 2000, Hamiltonian dynamics of vortex and
magnetic lines in hydrodynamic type systems, {\it Phys. Rev. E}, {\bf 61},
No. 1, 831-841.

\item[]
Longcope, D. W. and Malanushenko, A. 2008, Defining and calculating
self-helicity in coronal magnetic fields, {\it Astrophys. J.}, {\bf 674},
1130-1143.

\item[]
Low, B. C. 2006, Magnetic helicity in a two-flux partitioning of an ideal hydromagnetic fluid, {\it Astrophys. J.}, {\bf 646}, 1288-1302.

\item[]
Marsden J. E. and Ratiu T.S. 1994, Introduction to Mechanics and Symmetry,
New York,: Springer Verlag.


\item[]
Misner, C.W., Thorne, K.S. and Wheeler, J.A. 1973, {\it Gravitation},
San Francisco: W.H. Freeman.

\item[]
Moffatt, H. K. 1969, The degree of knottedness of tangled vortex lines,
{\it J. Fluid. Mech.}, {\bf 35}, 117.

\item[]
Moffatt,  H. K.  1978, \textit{Magnetic field Generation in Electrically
Conducting Fluids}, Cambridge Univ. Press, Cambridge U.K.

\item[]
Moffatt, H.K. and Ricca, R.L. 1992, Helicity and the Calugareanu invariant,
{\it Proc. Roy. Soc. London, Ser. A}, {\bf 439}, 411.

\item[]
Moiseev, S. S., Sagdeev, R. Z., Tur, A. V. and Yanovsky, V. V. 1982,
On the freezing-in integrals and Lagrange invariants in hydrodynamic models,
{\it Sov. Phys. JETP}, {\bf 56} (1), 117-123.

\item[]
Morrison, P. J. 1998, Hamiltonian description of the ideal fluid, 
{\it Rev. Mod. Phs.}, {\bf 70}, No. 2, 467-521.

\item[]
Morrison, P. J. 1982, Poisson brackets for fluids and plasmas, in Mathematical 
Methods in Hydrodynamics and Integrability of dynamical Systems,
 {\it AIP Proc. Conf.}, {\bf 88}, ed M. Tabor and Y. M. Treve, pp 13-46.

\item[]
Morrison, P.J. and Greene, J.M. 1980, Noncanonical Hamiltonian density
formulation of hydrodynamics and ideal magnetohydrodynamics,
 {\it Phys. Rev. Lett.}, {\bf 45}, 790-794.

\item[]
Morrison, P.J. and Greene, J.M. 1982, Noncanonical Hamiltonian density
formulation of hydrodynamics and ideal magnetohydrodynamics, (Errata),
 {\it Phys. Rev. Lett.}, {\bf 48}, 569.

\item[]
Newcomb, W. A. 1962, Lagrangian and Hamiltonian methods 
in magnetohydrodynamics, {\it Nucl. Fusion Suppl.}, Part 2, 451-463. 


\item[]
Olver, P. J. 1993,
\textit{Applications of Lie groups to Differential Equations}, 2nd Edition
(New York: Springer).

\item[]
Padhye, N.S. 1998, Topics in Lagrangian and Hamiltonian Fluid Dynamics: Relabeling Symmetry and Ion Acoustic Wave Stability, {\it Ph. D. Dissertation},
University of Texas at Austin.

\item[]
Padhye, N. S. and Morrison, P. J. 1996a, Fluid relabeling symmetry,
 {\it Phys. Lett. A}, {\bf 219}, 287-292.

\item[]
Padhye, N. S. and Morrison, P. J. 1996b, Relabeling symmetries in hydrodynamics 
and magnetohydrodynamics, {\it Plasma Phys. Reports}, {\bf 22}, 869-877.

\item[]
Parker, E. N. 1979, Cosmic Magnetic Fields, Oxford Univ. Press, New York.

\item[]
Powell, K. G., Roe, P.L., Linde, T.J., Gombosi, T. I., and De Zeeuw, D.
1999, A solution adaptive upwind scheme for ideal magnetohydrodynamics,
{J. Comput. Phys.} {\bf 154}, 284-309.







\item[]
Semenov, V. S., Korvinski, D. B., and Biernat, H.K. 2002, Euler potentials for 
the MHD Kamchatnov-Hopf soliton solution, {\it Nonlinear Processes in 
Geophysics}, {\bf 9}, 347-354. 


\item[]
Tur, A. V. and Yanovsky, V. V. 1993, Invariants in disspationless 
hydrodynamic media, {\it J. Fluid Mech.}, {\bf 248}, 
Cambridge Univ. Press, p67-106.





\item[]
Webb, G. M., Zank, G.P., Kaghashvili, E. Kh. and Ratkiewicz, R.E. 2005a, 
Magnetohydrodynamic waves in non-uniform flows I: a variational approach, 
{\it J. Plasma Phys.}, {\bf 71}(6), 785-809, doi:10.1017/S00223778050003739.

\item[]
Webb, G. M., Zank, G.P., Kaghashvili, E. Kh and Ratkiewicz, R.E., 2005b,
Magnetohydrodynamic waves in non-uniform flows II: stress energy tensors, 
conservation laws and Lie symmetries,
 {\it J. Plasma Phys.}, {\bf 71}, 811-857, doi: 10.1017/S00223778050003740.

\item[]
Webb, G. M. and Zank, G.P. 2007, Fluid relabelling symmetries, Lie point 
symmetries and the Lagrangian map in magnetohydrodynamics and gas dynamics, 
{\it J. Phys. A., Math. and Theor.}, {\bf 40}, 545-579, 
doi:10.1088/1751-8113/40/3/013.



\item[]
Webb, G. M., Hu, Q., Dasgupta, B., and Zank, G.P. 2010a,
Homotopy
formulas for the magnetic vector potential and magnetic helicity:
The Parker spiral interplanetary magnetic field and magnetic flux ropes,
{\it J. Geophys. Res.}, (Space Physics), {\bf 115}, A10112,
doi:10.1029/2010JA015513; Corrections: {\it J. Geophys. Res.}, {\bf 116}, 
A11102, doi:10.1029/2011JA017286, 22nd November 2011.

\item[]
Webb, G. M., Hu, Q., Dasgupta, B., Roberts, D.A., and Zank, G.P. 2010b,
Alfven simple waves: Euler potentials and magnetic helicity,
 {\it Astrophys. J.}, {\bf 725}, 2128-2151,
doi:10.1088/0004-637X/725/2/2128.

\item[]
Webb, G. M., Hu, Q., McKenzie, J.F., Dasgupta, B. and Zank, G.P. 2013a, 
Advected invariants in MHD and gas dynamics, {\it 12th Internat. Annual 
Astrophysics Conf.}, Myrtle Beach SC, April 15-19, 2013, (submitted). 

\item[]
Webb, G. M., Dasgupta, B., McKenzie, J.F., Hu, Q., 
and Zank, G.P. 2013b, Local and nonlocal advected invariants and helicities in MHD and gas dynamics I: Lie dragging approach, {\it J. Phys. A, Math. and theoret.}, submitted June 19, 2013 (paper I). 
 



\item[]
Woltjer, L. 1958, A theorem on force-free magnetic fields,
{\it Proc. Nat. Acad. Sci.}, {\bf 44}, 489.

\item[]
Yahalom, A. 2013, Ahronov-Bohm effect in magnetohydrodynamics,
 {\it Phys. Lett. A}, {\bf 377}, 1898-1904.

\item[]
Yahalom, A., and Lynden-Bell D. 2008, Simplified variational principle for 
barotropic magnetohydrodynamics, {\it J. Fluid Mech.}, {\bf 607}, 235-265. 


\item[] 
Zakharov, V. E. and Kuznetsov, E.A. 1997, Hamiltonian formalism for 
nonlinear waves, {\it Physics-Uspekhi}, {\bf 40}, (11), 1087-1116.
\end{harvard}
\end{document}